\documentclass[a4paper,11pt,authoryear]{article}
\newcommand{\drop}[1]{}
\usepackage{natbib}
\usepackage{multicol}
\usepackage{color}
\usepackage{xspace}
\usepackage{enumerate}

\headsep4pc

\makeatletter
\def\bs{\expandafter\@gobble\string\\}
\def\lb{\expandafter\@gobble\string\{}
\def\rb{\expandafter\@gobble\string\}}
\def\@pdfauthor{C.V.Radhakrishnan}
\def\@pdftitle{elsarticle.cls -- A documentation}
\def\@pdfsubject{Document formatting with elsarticle.cls}
\def\@pdfkeywords{LaTeX, Elsevier Ltd, document class}

\DeclareRobustCommand{\LaTeX}{L\kern-.26em%
        {\sbox\z@ T%
         \vbox to\ht\z@{\hbox{\check@mathfonts
           \fontsize\sf@size\z@
           \math@fontsfalse\selectfont
          A\,}%
         \vss}%
        }%
     \kern-.15em%
    \TeX}
\makeatother

\setcounter{tocdepth}{1}


\usepackage{geometry}
\geometry{letterpaper,tmargin=1in,bmargin=1in,lmargin=1in,rmargin=1in}

\usepackage{xcolor}
\usepackage{graphicx}
\usepackage{lineno,hyperref}
\usepackage{float}
\usepackage{bbold}
\usepackage{amsmath}
\usepackage{amssymb}
\usepackage{amsfonts}
\usepackage{mathrsfs}
\usepackage{mattens}
\usepackage{amsthm}
\usepackage{upgreek}
\usepackage{textcomp}
\usepackage{placeins}
\usepackage{gensymb}
\usepackage{subcaption}
\usepackage{textcomp}
\usepackage{multicol}
\usepackage{todonotes}
\usepackage{listings}
\usepackage{mathtools}
\usepackage{soul}
\usepackage{stackengine}
\usepackage{caption}
\usepackage{subcaption}
\usepackage{array}
\usepackage{enumerate}
\usepackage{siunitx}
\usepackage{fix-cm}
\usepackage{stackengine}

\def\barromannew#1{\sbox0{#1}\dimen0=\dimexpr\wd0+1pt\relax
  \makebox[\dimen0]{\rlap{\vrule width\dimen0 height 0.075ex depth 0.0ex}%
    \rlap{\vrule width\dimen0 height\dimexpr\ht0+0.05ex\relax 
            depth\dimexpr-\ht0+0.09ex\relax}%
    \kern.5pt#1\kern.5pt}}

\def\barroman#1{\sbox0{#1}\dimen0=\dimexpr\wd0+1pt\relax
  \makebox[\dimen0]{\rlap{\vrule width\dimen0 height 0.06ex depth 0.06ex}%
    \rlap{\vrule width\dimen0 height\dimexpr\ht0+0.03ex\relax 
            depth\dimexpr-\ht0+0.09ex\relax}%
    \kern.5pt#1\kern.5pt}}
%
%
%
%
\newcommand{\calE}{\mathcal{E}}%
\newcommand{\calM}{\mathcal{M}}%
\newcommand{\calP}{\mathcal{P}}%
\newcommand{\calR}{\mathcal{R}}%
\newcommand{\calS}{\mathcal{S}}%
\newcommand{\calV}{\mathcal{V}}%
\newcommand{\mfC}{\boldsymbol{\mathsfit C}}%
\newcommand{\sbbC}{\scriptscriptstyle\boldsymbol{\mathsfit{C}}}%
\newcommand{\sbbA}{\osixs{A}}%

\newcommand{\bfB}{\boldsymbol{B}}%
\newcommand{\bfe}{\boldsymbol{e}}\newcommand{\bfE}{\boldsymbol{E}}%
\newcommand{\bfF}{\boldsymbol{F}}%
\newcommand{\bfG}{\boldsymbol{G}}%
\newcommand{\bfH}{\boldsymbol{H}}%
\newcommand{\bfn}{\boldsymbol{n}}\newcommand{\bfN}{\boldsymbol{N}}%
\newcommand{\bfo}{\boldsymbol{o}}\newcommand{\bfO}{\boldsymbol{O}}%
\newcommand{\bfP}{\boldsymbol{P}}%
\newcommand{\bfr}{\boldsymbol{r}}\newcommand{\bfR}{\boldsymbol{R}}%
\newcommand{\bfS}{\boldsymbol{S}}%
\newcommand{\bfT}{\boldsymbol{T}}%
\newcommand{\bfu}{\boldsymbol{u}}\newcommand{\bfU}{\boldsymbol{U}}%
\newcommand{\bfv}{\boldsymbol{v}}%
\newcommand{\bfw}{\boldsymbol{w}}%
\newcommand{\bfx}{\boldsymbol{x}}%
\newcommand{\bfSigma}{\boldsymbol{\mathit{\Sigma}}}

\newcommand{\iPi}{\mathit{\Pi}}

\newcommand{\iOmega}{\mathit{\Omega}}

\newcommand{\bfPi}{\boldsymbol{\iPi}}
\newcommand{\bfOm}{\boldsymbol{\iOmega}}
\newcommand{\bfzero}{\boldsymbol{0}}

\DeclareMathOperator{\tr}{tr}
\DeclareMathOperator{\trans}{\scriptscriptstyle\top\mskip-2mu}

\newcommand{\idem}{\boldsymbol{\mathit{1}}}

\newcommand{\er}{\bfe_r}
\newcommand{\ep}{\bfe_\varphi}
\newcommand{\et}{\bfe_\vartheta}

\newcommand{\Cp}{\cos\varphi}
\newcommand{\Sp}{\sin\varphi}
\newcommand{\Ct}{\cos\vartheta}
\newcommand{\St}{\sin\vartheta}

\newcommand{\Apar}{A_{\scriptscriptstyle\parallel}}
\newcommand{\Aperp}{A_{\scriptscriptstyle\perp}}

\newcommand{\Npar}{N_{\scriptscriptstyle\parallel}}
\newcommand{\Nperp}{N_{\scriptscriptstyle\perp}}
\newcommand{\Nspin}{N_{\scriptscriptstyle\circlearrowleft}}
\newcommand{\Spar}{S_{\scriptscriptstyle\parallel}}
\newcommand{\Sperp}{S_{\scriptscriptstyle\perp}}

\newcommand{\da}{\,\text{d}a}

\newcommand{\dr}{\,\text{d}r}

\newcommand{\dvarphi}{\,\text{d}\varphi}
\newcommand{\dvar}{\,\text{d}\vartheta}

\newcommand{\srr}{S_{rr}}

\DeclareMathAlphabet{\mathsfit}{T1}{\sfdefault}{\mddefault}{\sldefault}
\SetMathAlphabet{\mathsfit}{bold}{T1}{\sfdefault}{\bfdefault}{\sldefault} 
\input pdf-trans
\newbox\qbox
\def\usecolor#1{\csname\string\color@#1\endcsname\space}
\def\colsplithelp#1#2 #3\relax{%
  \edef\tmpB{\tmpB#1#2 }%
  \ifnum `#1>`9\relax\def\tmpC{#3}\else\colsplithelp#3\relax\fi
}

\input pdf-trans
\newbox\qbox
\def\usecolor#1{\csname\string\color@#1\endcsname\space}
\newcommand\bordercolor[1]{\colsplit{1}{#1}}
\newcommand\fillcolor[1]{\colsplit{0}{#1}}
\newcommand\colsplit[2]{\colorlet{tmpcolor}{#2}\edef\tmp{\usecolor{tmpcolor}}%
  \def\tmpB{}\expandafter\colsplithelp\tmp\relax%
  \ifnum0=#1\relax\edef\fillcol{\tmpB}\else\edef\bordercol{\tmpC}\fi}
\def\colsplithelp#1#2 #3\relax{%
  \edef\tmpB{\tmpB#1#2 }%
  \ifnum `#1>`9\relax\def\tmpC{#3}\else\colsplithelp#3\relax\fi
}
\newcommand\outline[1]{\leavevmode%
  \def\maltext{#1}%
  \setbox\qbox=\hbox{\maltext}%
   \boxgs{Q q 2 Tr \thickness\space w \fillcol\space \bordercol\space}{}%
  \copy\qbox%
}
\newcommand\osix[2][1]{%
\stackengine{0pt}{\def\thickness{.15}\outline{$\boldsymbol{\mathsfit{#2}}$}}{\kern.1pt\outline{$\boldsymbol{\mathsfit{#2}}$}}{O}{l}{F}{F}{L}}
\bordercolor{black} 
\fillcolor{white}
\def\thickness{.15}
\newcommand\osixs[2][1]{%
\stackengine{0pt}{\def\thickness{.15}\outline{$\boldsymbol{\scriptscriptstyle{\mathsfit{#2}}}$}}{\kern.1pt\outline{$\boldsymbol{\scriptscriptstyle{\mathsfit{#2}}}$}}{O}{l}{F}{F}{L}}
\bordercolor{black} 
\fillcolor{white}
\def\thickness{.15}

\newcommand\olf[2][1]{%
\stackengine{0pt}{\def\thickness{.15}\outline{$\boldsymbol{#2}$}}{\kern.1pt\outline{$\boldsymbol{#2}$}}{O}{l}{F}{F}{L}}
\bordercolor{black} 
\fillcolor{white}
\def\thickness{.15}

\newcommand{\sym}{\text{sym}\mskip2mu}
\newcommand{\sk}{\text{skw}\mskip2mu}
\newcommand{\Sym}{\text{Sym}\mskip2mu}
\newcommand{\symm}{\mathbb{s}\mathbb{y}\mathbb{m}\mskip2mu}
\newcommand{\skk}{\mathbb{s}\mathbb{k}\mathbb{w}\mskip2mu}
\newcommand{\xg}{\text{Grad}\mskip2mu}
\newcommand{\xd}{\text{Div}\mskip2mu}
\newcommand{\xc}{\text{Curl}\mskip2mu}
\newcommand{\dv}{\,\text{d}v}

\newcommand{\bfmu}{\boldsymbol{\mu}}

\newcommand{\be}{\begin{equation}}
\newcommand{\ee}{\end{equation}}
\newcommand{\ba}{\begin{aligned}}
\newcommand{\ea}{\end{aligned}}

\definecolor{newred}{RGB}{180,20,5}

\definecolor{newgreen}{RGB}{1,129,30}

\definecolor{newblue}{RGB}{40,100,250}

\usepackage[english]{babel}
\newtheorem{theorem}{Theorem}

\newtheorem{proposition}[theorem]{Proposition}

\makeatletter
\newcommand{\thickhline}{%
    \noalign {\ifnum 0=`}\fi \hrule height 1pt
    \futurelet \reserved@a \@xhline
}
\newcolumntype{"}{@{\hskip\tabcolsep\vrule width 1pt\hskip\tabcolsep}}
\makeatother

\newcommand{\cb}[1]{{\color{blue}#1}}

\usepackage{eqparbox}

\newcommand{\footremember}[2]{%
    \footnote{#2}
    \newcounter{#1}
    \setcounter{#1}{\value{footnote}}%
}

\title{Derivation, characterization, and application of complete orthonormal sequences for representing general three-dimensional states of residual stress} 
\author{%
  Sankalp Tiwari\footremember{alley}{Mechanics and Materials Unit, Okinawa Institute of Science and Technology. Email : sankalp.tiwari@oist.jp}%
  \and Eliot Fried\footremember{trailer}{Mechanics and Materials Unit, Okinawa Institute of Science and Technology. Email: eliot.fried@oist.jp}%
  }
  \date{}

\begin{document}
\maketitle
\begin{abstract}
    Residual stresses are self-equilibrated stresses on unloaded bodies. Owing to their complex origins, it is useful to develop functions that can be linearly combined to represent {\em any} sufficiently regular residual stress field. In this work, we develop orthonormal sequences that span the set of all square-integrable residual stress fields on a given three-dimensional region. These sequences are obtained by extremizing the most general quadratic, positive-definite functional of the stress gradient on the set of all sufficiently regular residual stress fields subject to a prescribed normalization condition; each such functional yields a sequence. For the special case where the sixth-order coefficient tensor in the functional is homogeneous and isotropic and the fourth-order coefficient tensor in the normalization condition is proportional to the identity tensor, we obtain a three-parameter subfamily of sequences. Upon a suitable parameter normalization, we find that the viable parameter space corresponds to a semi-infinite strip. For a further specialized spherically symmetric case, we obtain analytical expressions for the sequences and the associated Lagrange multipliers. Remarkably, these sequences change little across the entire parameter strip. To illustrate the applicability of our theoretical findings, we employ three such spherically symmetric sequences to accurately approximate two standard residual stress fields. Our work opens avenues for future exploration into the implications of different sequences, achieved by altering both the spatial distribution and the material symmetry class of the coefficient tensors, toward specific objectives.
\end{abstract}  

\section{Introduction}
Let $\calR$ be an open, bounded region in three-dimensional point space $\calE$. Let $\bfn$ denote the unit normal, directed outward from $\calR$, on $\partial\calR$. A residual stress field $\bfS$ on $\calR$ is a symmetric second-order tensor field defined such that
\be
\left.\ba
\xd\mskip1mu\bfS=\bf0 \qquad &\text{on} \qquad \calR,\\[4pt]
\bfS\bfn=\bf0 \qquad &\text{on} \qquad \partial\calR,
\ea \mskip3mu\right\}
\label{intro}
\ee
where $\xd$ denotes the divergence operator on $\calR$. In other words, $\bfS$ is a self-equilibrated distribution of stress in an unloaded body that occupies the region $\calR$. 

Residual stresses are ubiquitous, manifesting most commonly in manufactured components and biological tissues. Virtually all manufacturing processes give rise to residual stresses that may adversely affect performance, potentially leading to premature failure. Conversely, through processes like the shot-peening of metals and the tempering of glass, residual stresses are sometimes judiciously introduced to improve properties and performance \citep{schajer2013practical}. Similarly, in the relatively new field of 3D printing, the mass accretion process can be designed to create optimal residual stress profiles tailored to specific applications \citep{zurlo2017printing}. 

Examples of biological systems that use residual stresses to their advantage abound. For instance, leaves require residual stresses to open \citep{oliver2016morphing}, flowers are residually stressed for seed dispersal \citep{evangelista2011mechanics}, blood arteries are residually stressed to prevent large stress concentrations \citep{chuong1986residual} and achieve better functionality ~\citep{sigaeva2019anisotropic}, and residual stresses that develop when growing wood cells contract longitudinally while expanding transversely fortify tree trunks against strong winds \citep{gordon}. Such biological systems have inspired the development of biomimetic devices with self-shaping properties for applications like drug delivery \citep{fernandes2012self} and self-assembly of lithographically structured microcontainers \citep{leong2008thin}, among others. 

The system \eqref{intro} has an infinite number of solutions. To uniquely determine any particular solution $\bfS$ requires the provision of additional information. Such information is usually obtained by introducing a functional constitutive relation between $\bfS$ and the deformation. However, residual stresses often have complex origins, and it may be challenging to ascertain the precise deformation history and material properties of a residually stressed body. From that viewpoint, developing sequences that span the space of {\em all} $\bfS$ satisfying \eqref{intro}, independent of their physical or theoretical origins, is useful. Ideally, any such sequence should depend solely on the region $\calR$; in particular, it should be independent of the deformation history and material properties of any given body.

\citet{tiwari2020basis} obtained one such sequence. That sequence spans the set of all square-integrable residual stress fields on $\calR$. Its elements are the stationary points, henceforth referred to as `extremizers', of the functional 
\be
E_0(\bfS)=\frac{1}{2}\int_{\calR}|\xg\bfS|^2\dv
\label{Eintro}
\ee
over the set 
\be
\calS_0=\left\{\bfT:\bfT\in\Sym,\mskip2mu\xd\bfT=\bfzero,\mskip2mu\bfT\bfn|_{\partial\calR}=\bfzero,\mskip2mu\int_{\calR}|\bfT|^2\dv<\infty,\mskip2mu E_0(\bfT)<\infty\right\},
\label{setS0}
\ee
where $\bfn$ is the unit normal field on $\partial\calR$ and `Sym' denotes the set of all symmetric second-order tensor fields on $\calR$, subject to the normalization condition
\be
\int_{\calR}|\bfS|^2\dv=1.
\label{tiwarinorm}
\ee
It is important to note that the extremization problem studied in \citet{tiwari2020basis} is tacitly predicated on a nondimensionalization.

Sequences that span general residual stress fields have many potential applications. For example, they can be used to obtain finite-dimensional approximations of the desired accuracy of a given residual stress field, as shown by \citet{tiwari2020basis}. Additionally, they can aid in interpolating the residual stress in a manufactured component determined experimentally at a finite number of points, as shown by \citet{tiwari2022thesis}. This approach is to be distinguished from alternatives based on one-dimensional interpolation, using splines, polynomials, Fourier series, etc., usually employed by practitioners, exemplified by the work of \citet{prime2007}. Furthermore, \citet{tiwari2024} recently utilized the sequence obtained in their earlier work to solve traction boundary-value problems in linear elasticity using stress-based variational principles. Finally, in the application of theories in which the residual stress appears explicitly in the constitutive equation (e.g., \citet{hoger1986determination}) or in the development of metamaterials requiring complex residual stress patterns (e.g., \citet{danescu2013spherical}), such sequences may provide a suitable finite-dimensional framework for the associated optimization problem. Generally speaking, the provision of such a sequence furnishes a vocabulary that can be used to discuss, describe, and characterize arbitrary residual stresses without knowledge of the physical mechanisms that may have caused them.

In this paper, we generalize the results in \cite{tiwari2020basis} by establishing that, for spatially varying sixth- and fourth-order coefficient tensors $\osix{A}$ and $\mfC$, respectively, satisfying certain symmetry and positive-definiteness conditions presented in Section \ref{sec:EL}, extremization of {\em each} quadratic functional $E$ defined by
\be
E(\bfS)=\frac{1}{2}\int_{\calR}\xg\bfS\cdot\osix{A}\mskip2mu[\xg\bfS]\dv
\label{E}
\ee
over the set
\be
\calS=\left\{\bfT:\bfT\in\Sym,\mskip2mu\xd\bfT=\bfzero,\mskip2mu\bfT\bfn|_{\partial\calR}=\bfzero,\mskip2mu\int_{\calR}\bfT\cdot\mfC[\bfT]\dv<\infty,\mskip2mu E(\bfT)<\infty\right\}
\label{setS}
\ee 
subject to the normalization condition 
\be
\int_{\calR}\bfS\cdot\mfC[\bfS]\dv=\varsigma\text{vol}\mskip2mu(\calR),
\label{normalization}
\ee
where $\varsigma>0$ is a prescribed constant, yields a sequence of extremizers spanning the set of all square-integrable residual stress fields; the notations $\osix{H}[\boldsymbol{\calM}]$ and $\boldsymbol{\mathsfit{D}}[\bfN]$ represent the action of a sixth-order tensor $\osix{H}$ on a third-order tensor $\boldsymbol{\calM}$ and the action of a fourth-order tensor $\boldsymbol{\mathsfit{D}}$ on a second-order tensor $\bfN$, respectively. Thus, we obtain a {\em family} of residual stress bases, which includes, in particular, the sequence obtained in \citet{tiwari2020basis}.

If $\osix{A}$ is homogeneous and isotropic and $\mfC$ is proportional to the fourth-order identity tensor, we obtain a three-parameter subfamily of sequences. Notably, for parameters corresponding to the choice tacitly employed in the work of \citet{tiwari2020basis}, the Euler--Lagrange equation has features reminiscent of the Stokes equation for an incompressible Newtonian fluid \citep{gurtin1973} and the time-independent Schr\"odinger equation \citep{dirac1930}.

For a further specialized spherically symmetric case, wherein the analytical determination of extremizers is tractable, we remarkably find that the corresponding sequences change only slightly in the vast landscape of viable functionals. Furthermore, for a specific functional choice, the Euler--Lagrange equation reduces to the Helmholtz equation for the extremizing stress. Lastly, through illustrative examples, we find that the spanning properties of different members of the spherically symmetric family of sequences are essentially identical, indicating that there is no evident reason to favor any particular sequence, including that selected by \citet{tiwari2020basis}.

The primary contribution of this paper is that the generality of the results established herein affords great leeway in the choice of the coefficient tensors $\osix{A}$ and $\mfC$. This flexibility extends both to the spatial distributions and material symmetry classes of $\osix{A}$ and $\mfC$, thereby providing a rich landscape of diverse complete sequences. We are optimistic that future work will explore these choices to obtain sequences with desired characteristics complementing existing works. For instance, \citet{zurlo2017printing} present a prescription for mass accretion in 3D printing that produces a desired residual stress in  an arbitrarily shaped body; our work provides a finite-dimensional setting to determine these residual stresses, thereby facilitating the development of tailored accretion protocols.

A secondary contribution lies in the explicit determination of the viable parameter space for homogeneous, isotropic choices of $\osix{A}$ and in demonstrating that even within the small collection of spherically symmetric sequences, constituting a tiny portion of the totality of complete sequences, the interplay of the various parameters leads to interesting phenomena. 

This paper is arranged as follows. In Section \ref{sec:bvp}, we consider, for a given region $\calR$, the first variation of $E$ over the set of all residual stress fields $\bfS$ satisfying a certain normalization condition and the property $E(\bfS)<\infty$. Each choice of viable coefficient tensors $\osix{A}$ and $\mfC$ in the resulting boundary-value problem yields a sequence of extremizers; therefore, we obtain a family of sequences. Toward the end of Section \ref{sec:bvp}, we obtain the dimensionless version of the boundary-value problem. The subsequent analysis and calculations undertaken in the paper focus on this dimensionless problem. In Section \ref{sec:props}, we find several useful properties of the members of the aforementioned family of sequences, the most important being that each member spans the set of all square-integrable residual stress fields on $\calR$. In Section \ref{sec:iso}, we consider the special case in which $\osix{A}$ is homogeneous and isotropic and $\mfC$ is the fourth-order identity tensor, obtaining a three-parameter subfamily of sequences. We next find the viable collection of parameters that ensure the positive definiteness of the integrand of $E$ as defined in \eqref{E}. In Section \ref{sec:spherical}, we obtain analytical solutions for the spherically symmetric sequences of Section \ref{sec:iso} and the accompanying Lagrange multiplier fields, thus enabling comparisons between different members of the family within this specialized case. In Section \ref{sec:examples}, we present two illustrative examples of fitting linear elastic, spherically symmetric residual stress fields with the sequences computed in Section \ref{sec:spherical}. Finally, we summarize our findings in Section \ref{sec:conclusions}.

\section{Variation of \boldmath{$E$}} \label{sec:bvp}
\subsection{Euler--Lagrange equation. Natural boundary condition} \label{sec:EL}
In this section, we consider the problem of extremizing the functional $E$, defined by \eqref{E}, over the set $\calS$, defined by \eqref{setS}, subject to the normalization condition \eqref{normalization}.

We assume that the sixth-order coefficient tensor $\osix{A}$ in \eqref{E} possesses the symmetry conditions
\be
\boldsymbol{\calM}_2\cdot\osix{A}\mskip2mu[\boldsymbol{\calM}_1]=\boldsymbol{\calM}_1\cdot\osix{A}\mskip2mu[\boldsymbol{\calM}_2] \qquad \text{and} \qquad \osix{A}\mskip2mu[\xg\bfN]=\osix{A}\mskip2mu[\xg(\bfN^{\trans})],
\label{Asymmetries}
\ee
for all third-order tensors $\boldsymbol{\calM}_1$ and $\boldsymbol{\calM}_2$ and second-order tensors $\bfN$. Moreover, we assume that the fourth-order coefficient tensor $\mfC$ possesses the symmetry conditions
\be
\bfN_2\cdot\mfC[\bfN_1]=\bfN_1\cdot\mfC[\bfN_2] \qquad \text{and} \qquad \mfC[\bfN_1]=\mfC[\bfN_1^{\trans}],
\label{Csymmetries}
\ee
for all second-order tensors $\bfN_1$ and $\bfN_2$. We furthermore assume that $\osix{A}$ and $\mfC$ satisfy the inequalities 
\be
\xg\bfS(\bfx)\cdot\osix{A}\mskip2mu(\bfx)[\xg\bfS(\bfx)]>0 \qquad \text{and} \qquad \bfS(\bfx)\cdot\mfC(\bfx)[\bfS(\bfx)]>0
\label{Aposdef}
\ee
for all $\bfx$ in $\calR$ and for all non-zero $\bfS$ in $\calS$. By \eqref{Aposdef}$_1$, we see that the functional $E$ in \eqref{E} is positive-definite for all $\bfS$ satisfying \eqref{intro}. We will see in Section \ref{sec:props}, and \ref{app:existence} referenced therein, that the positive definiteness of $E$ and the assumption \eqref{Aposdef}$_2$ suffice to ensure the existence of solutions of the extremization problems considered in this paper.

Notice that the coefficient tensor $\mfC$ possesses the same symmetry and positive definiteness conditions \eqref{Csymmetries} and \eqref{Aposdef}$_2$, respectively, as the fourth-order stiffness and compliance tensors of linear elasticity. However, we emphasize that despite these similarities, the extremization problem described by the conditions \eqref{E}--\eqref{normalization} and its solutions are independent of any functional constitutive relation. In particular, {\em each} sequence of extremizers spans the set of {\em all} square-integrable residual stress fields, not just those with linear elastic origins.

To obtain the conditions on an extremizer of the functional $E$ introduced in \eqref{E}, we note that an admissible variation $\bfU:=\delta\bfS$ of $\bfS$ must be symmetric and, with reference to \eqref{intro}$_{1,2}$, \eqref{normalization}, and \eqref{Csymmetries}$_1$, must satisfy
\be
\left.
\begin{aligned}
\xd\bfU=\bf0\qquad&\text{on}\qquad\calR,
\\[4pt]
\bfU\bfn=\bf0\qquad&\text{on}\qquad\partial\calR,
\\[4pt]
\end{aligned}
\mskip3mu\right\}
\label{admiss1}
\ee
and
\be
\int_{\calR}\bfU\cdot\mfC[\bfS]\dv=0.
\label{admiss2}
\ee
Preliminary to deriving the consequences of stipulating that the first variation $\delta E$ of $E$ be stationary, we establish some useful corollaries of \eqref{admiss1}--\eqref{admiss2}. To begin, let $\bfmu$ be a differentiable vector field on $\calR$. Then, by \eqref{admiss1} and the divergence theorem,
\begin{align}
\int_{\calR}(\sym\xg\bfmu)\cdot\bfU\dv
&=\int_{\calR}\xg\bfmu\cdot\bfU\dv
\notag\\[4pt]
&=\int_{\calR}(\xd(\bfU^{\trans}\mskip-1mu\bfmu)-\bfmu\cdot\xd\bfU)\dv
\notag\\[4pt]
&=\int_{\partial\calR}\bfmu\cdot\bfU\bfn\da
\notag\\[4pt]
&=0.
\label{corr1}
\end{align}
Next, by \eqref{admiss1}$_2$ and the symmetry of $\bfU$, there exists a symmetric tensor field $\bfB$ defined on $\partial\calR$ such that
\be
\bfU=\bfP\mskip-2mu\bfB\mskip-1mu\bfP
\qquad\text{on}\qquad\partial\calR,
\label{corr2}
\ee
where $\bfP$, as defined by
\be
\bfP=\idem-\bfn\otimes\bfn
\qquad\text{on}\qquad\partial\calR,
\label{P}
\ee
is the perpendicular projector on $\partial\calR$, $\idem$ being the second-order identity tensor.

On varying \eqref{E}, applying the divergence theorem, and invoking \eqref{Asymmetries}$_1$, \eqref{admiss2}, \eqref{corr1}, and \eqref{corr2}, the requirement that $\delta E=0$ yields
\be
\int_{\calR}(\xg(\osix{A}\mskip2mu[\xg\bfS])[\idem]+\lambda\mfC[\bfS]-\text{sym}\mskip2mu\xg\bfmu)\cdot\bfU\dv
=-\int_{\partial\calR}((\osix{A}\mskip2mu[\xg\bfS])\bfn)\cdot\bfU\da,
\label{deltaE}
\ee
where $\lambda$ is a constant scalar. Next, on choosing $\bfU$ to be compactly supported about a point interior to $\calR$,  \eqref{deltaE} yields the Euler--Lagrange equation
\be
-\xg(\osix{A}\mskip2mu[\xg\bfS])[\idem]+\text{sym}\mskip2mu\xg\bfmu=\lambda\mfC[\bfS]
\qquad\text{on}\qquad\calR.
\label{EL0}
\ee
Furthermore, in view of \eqref{EL0}, the stationarity condition \eqref{deltaE} reduces to
\be
\int_{\partial\calR}((\osix{A}\mskip2mu[\xg\bfS])\bfn)\cdot\bfU\da=0.
\label{deltaEreduced}
\ee
On invoking the representation \eqref{corr2} on $\partial\calR$ of $\bfU$ and using the symmetry of $\bfP$, we obtain
\be
\int_{\partial\calR}(\bfP((\osix{A}\mskip2mu[\xg\bfS])\bfn)\bfP)\cdot\bfB\da
=0.
\label{intnbc}
\ee
On choosing $\bfB$ to be compactly supported about a point on $\partial\calR$, \eqref{intnbc} yields the natural boundary condition
\be
\bfP((\osix{A}\mskip2mu[\xg\bfS])\bfn)\bfP=\bfO,
\label{nbc}
\ee
where $\bfO$ denotes the zero second-order tensor. Thus, in the sense of \citet{gurtin2002interface}, the tangential component of the tensor field $(\osix{A}\mskip2mu[\xg\bfS])\bfn$ defined on $\partial\calR$ vanishes.

We next wish to determine the constant Lagrange multiplier $\lambda$. Toward that end, we compute the scalar product of \eqref{EL0} with $\bfS$, integrate the resulting identity over $\calR$, and invoke \eqref{corr1} to find that
\be
-\int_{\calR}(\xg(\osix{A}\mskip2mu[\xg\bfS])[\idem])\cdot\bfS\dv=\lambda\int_{\calR}\bfS\cdot\mfC[\bfS]\dv.
\label{step1}
\ee
Applying the divergence theorem and invoking \eqref{normalization}, it follows, from \eqref{step1}, that
\be
-\int_{\partial\calR}((\osix{A}\mskip2mu[\xg\bfS])\bfn)\cdot\bfS\da+\int_{\calR}\xg\bfS\cdot\osix{A}\mskip2mu[\xg\bfS]\dv=\lambda\varsigma\text{vol}(\calR).
\label{step2}
\ee
With reference to \eqref{intro}$_2$ and \eqref{nbc}, the boundary term in \eqref{step2} drops out, yielding
\be
\int_{\calR}\xg\bfS\cdot\osix{A}\mskip2mu[\xg\bfS]\dv=\lambda\varsigma\text{vol}(\calR)
\label{step3}
\ee
and, thus, we find that $\lambda$ is given in terms of $\bfS$ by
\be
\lambda=\frac{1}{\varsigma\text{vol}(\calR)}\int_{\calR}\xg\bfS\cdot\osix{A}\mskip2mu[\xg\bfS]\dv.
\label{Gamma}
\ee
On using \eqref{Gamma} in \eqref{EL0} and invoking the definition \eqref{E} of $E$, the Euler--Lagrange equation \eqref{EL0} takes the form
\be
-\xg(\osix{A}\mskip2mu[\xg\bfS])[\idem]+\text{sym}\mskip2mu\xg\bfmu=\frac{2E}{\varsigma\text{vol}(\calR)}\bfS.
\label{EL}
\ee

Elimination of $\lambda$ leaves only the variables $\bfS$ and $\bfmu$ in the boundary-value problem consisting of the second-order partial-differential equation \eqref{EL}, the constraint embodied by the first-order partial-differential equation \eqref{intro}$_1$, the boundary conditions \eqref{intro}$_2$ and \eqref{nbc}, and the normalization condition \eqref{normalization}. Granted that $\bfS$ is known, it can be used in \eqref{E} to obtain $E$.

Incidentally, it is straightforward to see that if the pair $\bfS$ and $\bfmu$ is a solution to the constrained boundary-value problem described above, then the pair $-\bfS$ and $-\bfmu$ is also a solution.

\subsection{Dimensionless boundary-value problem} \label{sec:nondim}
If we stipulate that $E$, introduced in \eqref{E}, has physical dimensions 
\be
[E]=\frac{\textsf{M}\textsf{L}^2}{\textsf{T}^2},
\ee
the coefficient tensor $\osix{A}$ must accordingly have physical dimensions
\be
[\osix{A}]=\frac{\textsf{L}^3\textsf{T}^2}{\textsf{M}}.
\label{dimA}
\ee
Similarly, on stipulating that $\int_{\calR}\bfS\cdot\mfC[\bfS]\dv$ has the same physical dimensions as $E$, it follows that $\varsigma$ has dimensions of force per unit area or, equivalently mass per unit length per unit time squared, that is:
\be
[\varsigma]=\frac{\textsf{M}}{\textsf{L}\textsf{T}^2}.
\label{varsigmadim}
\ee
Moreover, from \eqref{normalization} and \eqref{varsigmadim}, we find that the coefficient tensor $\mfC$ must have physical dimensions
\be
[\mfC]=\frac{\textsf{L}\textsf{T}^2}{\textsf{M}}.
\label{dimC}
\ee
Furthermore, from the Euler--Lagrange equation \eqref{EL0}, it follows the physical dimensions of $\lambda$ and $\bfmu$ must be
\be
[\lambda]=\frac{[\osix{A}][\bfS]}{\textsf{L}^2}=1 \qquad \text{and} \qquad [\bfmu]=\frac{\textsf{M}[\osix{A}]}{\textsf{L}^2\textsf{T}^2}=\textsf{L}.
\label{dim}
\ee
Based on \eqref{dimA}--\eqref{dim}, we introduce dimensionless counterparts $\bfx^*$, $\bfS^*(\bfx^*)$, $\osix{A}^*(\bfx^*)$, $\mfC^*(\bfx^*)$, and $\bfmu^*(\bfx^*)$ of $\bfx$, $\bfS(\bfx)$, $\osix{A}(\bfx)$, $\mfC(\bfx)$, and $\bfmu(\bfx)$ through
\be
\left.
\begin{gathered}
\bfx^*=\frac{\bfx}{L}, \qquad \bfS^*(\bfx^*)=\frac{\bfS(\bfx)}{\varsigma},\qquad \osix{A}^*(\bfx^*)=\frac{\osix{A}(\bfx)\varsigma}{L^2}, 
\\[4pt]
\mfC^*(\bfx^*)=\mfC(\bfx)\varsigma, \qquad \text{and} \qquad \bfmu^*(\bfx^*)=\frac{\bfmu(\bfx)}{L}, 
\end{gathered}\mskip4mu\right\}
\label{change}
\ee
where we take $L$ to be given by
\be
L=(k_0\text{vol}\mskip2mu(\calR))^{\frac{1}{3}}, 
\label{Ldef}
\ee
with $k_0$ being a prescribed dimensionless constant. Letting $\xg^*$ denote the dimensionless counterpart of the gradient arising from the change \eqref{change}$_1$ of independent variable, we thus obtain a dimensionless version, 
\be
-\xg^*(\osix{A}^*\xg^*\bfS^*)[\idem]+\text{sym}\mskip2mu\xg^*\bfmu^*=\lambda\mfC^*[\bfS^*],
\label{ELdimlessbeta}
\ee
of the Euler--Lagrange equation \eqref{EL0}. Moreover, with the introduction of 
\be
\dv^*=\frac{\dv}{L^3} \qquad \text{and} \qquad E^*=\frac{E}{L^3\varsigma},
\ee
\eqref{E} transforms to 
\be
E^*(\bfS^*)=\frac{1}{2}\int_{\calR^*} \xg^*\bfS^*\cdot\osix{A}^*[\xg^*\bfS^*]\dv^*.
\label{Enondimast}
\ee
Finally, the normalization condition of \eqref{normalization} takes the form
\be
\int_{\calR^*}\bfS^*\cdot\mfC^*[\bfS^*]\dv^*=1.
\label{normnondim}
\ee

We hereinafter consider only the dimensionless entities. To reduce clutter, we drop the asterisks accompanying the various quantities and operators so that the constrained dimensionless boundary-value problem consists of
\drop{
\begin{subequations}
\begin{alignat}{2}
    \left.\begin{aligned}
        -\xg(\osix{A}\mskip2mu[\xg\bfS])[\idem]+\text{sym}\mskip2mu\xg\bfmu&=\lambda\cb{\mfC}\bfS  \\[4pt]
    \xd\bfS&=\bf0
    \label{bvp_nondim_calR}
    \end{aligned}\mskip3mu\right\}\text{on}\qquad \calR, &\\[4pt]
 \left.\begin{aligned}
        \bfS\bfn&=\bf0  \\[4pt]
    \bfP((\osix{A}\mskip2mu[\xg\bfS])\bfn)\bfP&=\bf0
    \label{bvp_nondim_partialcalR}
    \end{aligned}\mskip3mu\right\}\mskip30mu\text{on}\mskip25mu \partial\calR,&\\[4pt]
    \int_{\calR}\cb{(\mfC[\bfS])}\cdot\bfS\dv=1.\mskip120mu &
    \label{bvp_nondim_norm}
 \end{alignat}
 \label{bvp_nondim}
 \end{subequations}
 }
\begin{equation}
\left.
\begin{aligned}
\left.
\begin{aligned}
-\xg(\osix{A}\mskip2mu[\xg\bfS])[\idem]+\sym\xg\bfmu
&=\lambda\boldsymbol{\mathsfit C}[\bfS],
\\[4pt]
\xd\bfS&=\bf0,
\end{aligned}
\mskip4mu\right\}
&\quad\text{on}\quad\calR,
\\[4pt]
\left.
\begin{aligned}
\bfS\bfn&=\bf0,
\\[4pt]
\bfP((\osix{A}\mskip2mu[\xg\bfS])\bfn)\bfP&=\bfO,
\end{aligned}
\mskip4mu\right\}
&\quad\text{on}\quad\partial\calR,
\\[4pt]
\int_{\calR}\bfS\cdot\boldsymbol{\mathsfit C}[\bfS]\dv=1.
\hspace{40pt}&
\end{aligned}
\mskip4mu\right\}
\label{bvp_nondim}
\end{equation} 
The conditions in \eqref{bvp_nondim} are the Euler--Lagrange equation \eqref{bvp_nondim}$_1$, the equilibrium condition \eqref{bvp_nondim}$_2$, the traction boundary condition \eqref{bvp_nondim}$_3$, the natural boundary condition \eqref{bvp_nondim}$_4$, and the normalization condition \eqref{bvp_nondim}$_5$. Additionally, from \eqref{Enondimast}, the dimensionless functional underlying the Euler--Lagrange equation \eqref{bvp_nondim}$_1$ and the natural boundary condition \eqref{bvp_nondim}$_4$ is
\be
E(\bfS)=\frac{1}{2}\int_{\calR} \xg\bfS\cdot\osix{A}\mskip2mu[\xg\bfS]\dv.
\label{E_nondim}
\ee

We mention some noteworthy points from the boundary-value problem \eqref{bvp_nondim}. The Euler--Lagrange equation \eqref{bvp_nondim}$_1$ and the equilibrium condition \eqref{bvp_nondim}$_2$ yield six and three scalar differential equations, respectively, while the traction boundary condition \eqref{bvp_nondim}$_3$ and the natural boundary condition \eqref{bvp_nondim}$_4$ yield three scalar boundary conditions each. The unknowns entering \eqref{bvp_nondim} are the stress $\bfS$ (six scalar fields), the multiplier $\bfmu$ (three scalar fields) which may be interpreted as the reaction to the constraint \eqref{bvp_nondim}$_2$, and the constant scalar multiplier $\lambda$, which may be interpreted as the reaction to the normalization condition \eqref{bvp_nondim}$_5$. Thus, although \eqref{bvp_nondim}$_{1,2}$ constitute nine scalar conditions and \eqref{bvp_nondim}$_{3,4}$ constitute only six scalar conditions, the three additional conditions associated with the need to determine $\bfmu$ are supplied by \eqref{bvp_nondim}$_{1,2}$. In this regard, it is important to recognize that a boundary condition for $\bfmu$ is not needed. The information needed to determine $\lambda$ is supplied by the normalization condition \eqref{bvp_nondim}$_5$, which, on the face of it, appears to be the only nonlinear condition in the system \eqref{bvp_nondim}. However, noting the dependence \eqref{Gamma} of $\lambda$ on $\bfS$, we conclude that the Euler--Lagrange equation \eqref{bvp_nondim}$_1$ is also non-linear. Additionally, we see that $\bfmu$ appears only through its gradient and, thus, can generally be determined only up to an additive constant. Finally, $\lambda$ is related to the dimensionless functional $E$, from \eqref{E}, \eqref{Gamma}, and \eqref{E_nondim}, by
\be
\lambda=2E.
\label{lambda2E}
\ee
We thus conclude from the positive definiteness of $E$ that
\be
\lambda>0.
\label{lambda>0}
\ee

For a given choice of $\osix{A}$ and $\mfC$ that are viable in the sense that they satisfy \eqref{Asymmetries}--\eqref{Aposdef}, the boundary-value problem \eqref{bvp_nondim} generally admits more than one solution. Since $\lambda$ is positive by \eqref{lambda>0}, we can readily arrange the stress solutions and Lagrange multiplier fields corresponding to given $\osix{A}$ and $\mfC$ in sequences $(\bfS_N)$ and $(\bfmu_N)$, respectively, ordered by increasing $(\lambda_N)$. For subsequent reference, we denote the $N^{\text{th}}$ stress solution, henceforth referred to as `extremizer', and Lagrange multiplier field as $\bfS_N$ and $\bfmu_N$, respectively. At this point, the index set from which $N$ derives its values remains ambiguous; specifically, it is uncertain whether the index set encompasses the entirety of the natural numbers $\mathbb{N}$ or constitutes a subset thereof. We will see in Subsection \ref{sec:inf} that the index set is $\mathbb{N}$.

\section{Properties of the extremizers} \label{sec:props}
Motivated by the functional $E$ in \eqref{E_nondim} and the normalization condition \eqref{bvp_nondim}$_5$, we introduce the functionals $\langle\cdot,\cdot\rangle_{\sbbA}:\calS\times\calS\to\mathbb{R}$ and $\langle\cdot,\cdot\rangle_{\sbbC}:\calS\times\calS\to\mathbb{R}$ defined by
\be
\langle\bfT_1,\bfT_2\rangle_{\sbbA}=\int_{\calR}\xg\bfT_2\cdot\osix{A}\mskip2mu[\xg\bfT_1]\dv \qquad \text{and} \qquad \langle\bfT_1,\bfT_2\rangle_{\sbbC}=\int_{\calR}\bfT_2\cdot\mfC[\bfT_1]\dv 
\label{ACscalar}
\ee
for all $\bfT_1$ and $\bfT_2$ in $\calS$, where the set $\calS$ is defined in \eqref{setS}. By the symmetry, bilinearity, and positive definiteness of $\osix{A}$ from \eqref{Asymmetries}$_1$, \eqref{ACscalar}$_1$, and \eqref{Aposdef}$_1$, respectively, $\langle\cdot,\cdot\rangle_{\sbbA}$ is a scalar product since, for any scalars $\eta_1,\eta_2$, and any $\bfT_1,\bfT_2,\bfT_3$ in $\calS$,
\be
\left.
\ba
&\langle\bfT_1,\bfT_2\rangle_{\sbbA}=\langle\bfT_2,\bfT_1\rangle_{\sbbA},
\\[4pt]
&\langle\eta_1\mskip1mu\bfT_1+\eta_2\mskip1mu\bfT_2,\bfT_3\rangle_{\sbbA}=\eta_1\mskip1mu\langle\bfT_1,\bfT_3\rangle_{\sbbA}+\eta_2\mskip1mu\langle\bfT_2,\bfT_3\rangle_{\sbbA},
\\[4pt]
&\langle\bfT_1,\bfT_1\rangle_{\sbbA}>0\qquad\text{if}\qquad\bfT_1\neq\bfO.
\ea\mskip3mu\right\}
\ee
Similarly, by the symmetry, bilinearity, and positive definiteness of $\mfC$ from \eqref{Csymmetries}$_1$, \eqref{ACscalar}$_2$, and \eqref{Aposdef}$_2$, respectively, $\langle\cdot,\cdot\rangle_{\sbbC}$ is a scalar product since, for any scalars $\eta_1,\eta_2$, and any $\bfT_1,\bfT_2,\bfT_3$ in $\calS$,
\be
\left.
\ba
&\langle\bfT_1,\bfT_2\rangle_{\sbbC}=\langle\bfT_2,\bfT_1\rangle_{\sbbC},
\\[4pt]
&\langle\eta_1\mskip1mu\bfT_1+\eta_2\mskip1mu\bfT_2,\bfT_3\rangle_{\sbbC}=\eta_1\mskip1mu\langle\bfT_1,\bfT_3\rangle_{\sbbC}+\eta_2\mskip1mu\langle\bfT_2,\bfT_3\rangle_{\sbbC},
\\[4pt]
&\langle\bfT_1,\bfT_1\rangle_{\sbbC}>0\qquad\text{if}\qquad\bfT_1\neq\bfO.
\ea\mskip3mu\right\}
\ee
The norms derived from the scalar products \eqref{ACscalar}$_1$ and \eqref{ACscalar}$_2$ are denoted as $\|\cdot\|_{\sbbA}$ and $\|\cdot\|_{\sbbC}$, respectively, and are defined by
\be
\|\bfT\|_{\sbbA}=\langle\bfT,\bfT\rangle_{\sbbA}^{\frac{1}{2}} \qquad \text{and} \qquad \|\bfT\|_{\sbbC}=\langle\bfT,\bfT\rangle_{\sbbC}^{\frac{1}{2}}
\label{normsAC}
\ee
for any $\bfT$ in $\calS$. We show in Propositions \ref{hatEH1} and \ref{CL2} in \ref{app:equiv} that on the set $\calS$, the norm $\|\cdot\|_{\sbbA}$ is equivalent to the $H^1(\calR)$ norm and the norm $\|\cdot\|_{\sbbC}$ is equivalent to the $L^2(\calR)$ norm, respectively.

Recalling the definition \eqref{setS} of $\calS$, we denote its completion in the $L^2(\calR)$ norm as $\bar{\calS}$. By the equivalence of the $L^2(\calR)$ norm with the norm $\|\cdot\|_{\sbbC}$ from Proposition \ref{CL2}, the completion of $\calS$ in the norm $\|\cdot\|_{\sbbC}$ is also equal to $\bar{\calS}$. Notice, in particular, that discontinuous residual stress fields are elements of $\bar{\calS}$, but are not elements of $\calS$.

We next wish to establish that the sequence $(\bfS_N)$ of extremizers forms a basis in the norm $\|\cdot\|_{\sbbC}$ for $\bar{\calS}$ and in the norm $\|\cdot\|_{\sbbA}$ for $\calS$. Toward this objective, we devote Subsections \ref{sec:ortho} through \ref{sec:inf} to deriving several useful properties of the extremizers. Then, in Subsections \ref{sec:span} and \ref{sec:H1span}, we use those properties to establish the sought results .

\subsection{Extremizers are orthonormal with respect to the scalar product induced by \boldmath{$\mfC$}} \label{sec:ortho}
Let $\bfS_p$, $\bfmu_p$, and $\lambda_p$ be a solution triplet of the boundary-value problem described by \eqref{bvp_nondim}, meaning, in particular, that
\be
    -\xg(\osix{A}\mskip2mu[\xg\bfS_p])[\idem]+\text{sym}\mskip2mu\xg\bfmu_p=\lambda_p\mskip1mu\mfC[\bfS_p].
\label{S1S2}
\ee
Let $\bfSigma$ be a residual stress field on $\calR$. Consider the $L^2(\calR)$ scalar product of \eqref{S1S2} with $\bfSigma$:
\be
-\int_{\calR}(\xg(\osix{A}\mskip2mu[\xg\bfS_p])[\idem])\cdot\bfSigma\dv+\int_{\calR}(\text{sym}\mskip2mu\xg\bfmu_p)\cdot\bfSigma\dv=\lambda_p\int_{\calR}\bfSigma\cdot\mfC[\bfS_p]\dv.
\label{ip1}
\ee
By the arguments in $\eqref{corr1}$, the second-term on the left-hand side of \eqref{ip1} vanishes. We use the divergence theorem to re-write the first term on the left-hand side of \eqref{ip1} as follows:
\be
-\int_{\calR}(\xg(\osix{A}\mskip2mu[\xg\bfS_p])[\idem])\cdot\bfSigma\dv
=-\int_{\partial\calR} ((\osix{A}\mskip2mu[\xg\bfS_p])\bfn)\cdot\bfSigma\dv+\int_{\calR}\xg\bfSigma\cdot\osix{A}\mskip2mu[\xg\bfS_p]\dv.
\label{plug1}
\ee
Next, by \eqref{intro}$_2$ and the symmetry of $\bfSigma$, there exists a symmetric tensor field $\bfB$ defined on $\partial\calR$ such that
\be
\bfSigma=\bfP\mskip-2mu\bfB\mskip-1.5mu\bfP
\qquad\text{on}\qquad\partial\calR,
\label{bt3}
\ee
where $\bfP$ is the perpendicular projector on $\partial\calR$, and is given by \eqref{P}. Moreover, from symmetry of $\bfP$ and the natural boundary condition \eqref{bvp_nondim}$_4$, it follows that
\be
((\osix{A}\mskip2mu[\xg\bfS_p])\bfn)\cdot(\bfP\mskip-2mu\bfB\mskip-1.5mu\bfP)=(\bfP((\osix{A}\mskip2mu[\xg\bfS_p])\bfn)\bfP)\cdot\bfB=0.
\label{Pn}
\ee
Thus, the integral over $\partial\calR$ in \eqref{plug1} vanishes and \eqref{ip1} reduces to 
\be
\int_{\calR}\xg\bfSigma\cdot\osix{A}\mskip2mu[\xg\bfS_p]\dv=\lambda_p\int_{\calR}\bfSigma\cdot\mfC[\bfS_p]\dv.
\label{bt4a}
\ee

Let $\bfS_q$, $\bfmu_q$, and $\lambda_q$ be another solution triplet of the boundary-value problem described by \eqref{bvp_nondim}, with $q\neq p$. Since $\bfS_q$ is a residual stress field, it follows from \eqref{bt4a} that
\be
\int_{\calR}\xg\bfS_q\cdot\osix{A}\mskip2mu[\xg\bfS_p]\dv=\lambda_p\int_{\calR}\bfS_q\cdot\mfC[\bfS_p]\dv.
\label{bt4}
\ee
Repeating the arguments from \eqref{S1S2} through \eqref{bt4} with $p$ and $q$ interchanged yields
\be
\int_{\calR}\xg\bfS_p\cdot\osix{A}\mskip2mu[\xg\bfS_q]\dv=\lambda_q\int_{\calR}\bfS_p\cdot\mfC[\bfS_q]\dv.
\label{bt5}
\ee
By the assumptions \eqref{Asymmetries}$_{1,2}$, \eqref{bt5} can be written as
\be
\int_{\calR}\xg\bfS_q\cdot\osix{A}\mskip2mu[\xg\bfS_p]\dv=\lambda_q\int_{\calR}\bfS_q\cdot\mfC[\bfS_p]\dv.
\label{bt6}
\ee
Subtracting \eqref{bt6} from \eqref{bt4} yields
\be
(\lambda_p-\lambda_q)\int_{\calR}\bfS_q\cdot\mfC[\bfS_p]\dv=0.
\ee

Consider, first, the case $\lambda_p\neq\lambda_q$, whence
\be
\int_{\calR}\bfS_q\cdot\mfC[\bfS_p]\dv=0.
\label{L2orth}
\ee
Thus, $\bfS_p$ and $\bfS_q$ are orthogonal with respect to the scalar product $\langle\cdot,\cdot\rangle_{\sbbC}$ defined by \eqref{ACscalar}$_2$. By \eqref{bvp_nondim}$_5$, we find, furthermore, that they are {\em orthonormal} with respect to the scalar product $\langle\cdot,\cdot\rangle_{\sbbC}$.

Next, consider the case $\lambda_p=\lambda_q$ but $\bfS_p\neq\bfS_q$. Accordingly, the relation \eqref{L2orth} obtained from the consideration $\lambda_p\neq\lambda_q$ need not hold. However, it is straightforward to check that for real scalars $\eta_p$ and $\eta_q$, $\eta_p\bfS_p+\eta_q\bfS_q$ is also a solution of the boundary-value problem \eqref{bvp_nondim}, with the corresponding Lagrange multiplier field being $\eta_p\bfmu_p+\eta_q\bfmu_q$, as long as $\eta_p$ and $\eta_q$ satisfy the condition
\be
\eta_p^2+\eta_q^2+2\eta_p\eta_q\int_{\calR}\bfS_q\cdot\mfC[\bfS_p]\dv=1
\label{ensure}
\ee
which ensures that $\eta_p\bfS_p+\eta_q\bfS_q$ has unit $\|\cdot\|_{\sbbC}$ norm. In other words, any element in the subset of $\calS$ spanned by $\bfS_p$ and $\bfS_q$ with unit $\|\cdot\|_{\sbbC}$ norm is a solution. Thus, $\bfS_p$ and $\bfS_q$ can be {\em chosen} such that they are orthonormal in the scalar product induced by $\mfC$. 

Finally, if $\lambda_p=\lambda_q$ and $\bfS_p=\bfS_q$ but $\bfmu_p\neq\bfmu_q$ then, from \eqref{bvp_nondim}$_1$, 
\be
\sym\xg\bfmu_p=\sym\xg\bfmu_q,
\ee
and there is no distinction between these two cases. 

In summary, the extremizers corresponding to given viable coefficient tensors $\osix{A}$ and $\mfC$ are mutually orthonormal with respect to the scalar product $\langle\cdot,\cdot\rangle_{\sbbC}$ induced by $\mfC$.

\subsection{Extremizers are orthonormal with respect to the scalar product induced by \boldmath{$\osix{A}$}} \label{sec:varphi}
On invoking \eqref{L2orth}, the identity \eqref{bt6} for $p\neq q$ reduces to
\be
\int_{\calR}\xg\bfS_q\cdot\osix{A}\mskip2mu[\xg\bfS_p]\dv=0
\label{red}
\ee
or, by the definition \eqref{ACscalar}$_1$ of the scalar product $\langle\cdot,\cdot\rangle_{\sbbA}$, to
\be
\langle\bfS_p,\bfS_q\rangle_{\sbbA}=0.
\label{ipE}
\ee
Thus, the extremizers are not only orthonormal with respect to the scalar product $\langle\cdot,\cdot\rangle_{\sbbC}$ induced by $\mfC$ but also orthogonal in the scalar product $\langle\cdot,\cdot\rangle_{\sbbA}$ induced by $\osix{A}$. 

\subsection{Extremizers are infinitely many} \label{sec:inf}
We next establish that for given $\osix{A}$ and $\mfC$ satisfying \eqref{Asymmetries}--\eqref{Aposdef}, there are infinitely many linearly independent extremizers. We argue by contradiction. Assume that there only $N_0$ linearly independent extremizers $\bfS_N, N=1,2,\dots,N_0$. For brevity, we introduce the set 
\be
\mathbb{N}_{N_0}=\{m\in\mathbb{N}:m=1,2,\dots,N_0\} 
\label{NN0}
\ee
of all natural numbers from 1 through $N_0$.

Recalling the set $\calS$ defined by \eqref{setS}, consider the non-dimensional problem of extremizing $E$, defined by \eqref{E_nondim},
over the set 
\be
\calS_{N_0\scriptscriptstyle{\perp}}=\left\{\bfT:\bfT\in\calS,\mskip2mu\int_{\calR}\bfS_N\cdot\mfC[\bfT]\dv=0,\mskip1mu N\in\mathbb{N}_{N_0}\right\},
\label{SN_orth}
\ee
subject to the condition
\be
\int_{\calR}\bfS\cdot\mfC[\bfS]\dv=1.
\label{bvp_orth}
\ee
We show in \ref{app:existence} that a solution to the above extremization problem exists.

To find the equations satisfied by an extremizer $\bfS$ of the above problem, we notice that an admissible variation $\bfU$ of $\bfS$ must, in addition to being symmetric and complying with \eqref{admiss1}--\eqref{corr2}, satisfy 
\be
\int_{\calR}\bfU\cdot\mfC[\bfS_N]\dv=0, \qquad N\in\mathbb{N}_{N_0}.
\label{orthpN}
\ee
Following the procedure detailed in Section \ref{sec:bvp}, we find, after incorporating \eqref{orthpN}, that $\bfS$ satisfies the Euler--Lagrange equation 
\be
-\xg(\osix{A}\mskip2mu[\xg\bfS])[\idem]+\text{sym}\mskip2mu\xg\bfmu=\lambda\mfC[\bfS]+\sum_{N=1}^{N_0} \nu_N\mskip1mu\mfC[\bfS_N],
\label{EL_orth}
\ee
where $\nu_N, N\in\mathbb{N}_{N_0}$, are constant scalars. Moreover, since the additional constraints 
\be
\int_{\calR}\bfS_N\cdot\mfC[\bfS]\dv=0, \qquad N\in\mathbb{N}_{N_0},
\label{SSN}
\ee
have no bearing on the boundary conditions, $\bfS$ satisfies the natural boundary condition \eqref{bvp_nondim}$_4$.

We compute the scalar product of \eqref{EL_orth} with a given $\bfS_p$, $p\in\mathbb{N}_{N_0}$, and integrate the resulting identity over $\calR$ to obtain
\be
-\int_{\calR}(\xg(\osix{A}\mskip2mu[\xg\bfS])[\idem])\cdot\bfS_p\dv+\int_{\calR}(\text{sym}\mskip2mu\xg\bfmu)\cdot\bfS_p\dv=\lambda\int_{\calR}\bfS_p\cdot\mfC[\bfS]\dv+\sum_{N=1}^{N_0}\nu_N\int_{\calR}\bfS_p\cdot\mfC[\bfS_N]\dv.
\label{EL_orthint}
\ee
By the divergence theorem, the second term on the left-hand side of \eqref{EL_orthint} drops out. By \eqref{SSN}, the first term on the right-hand side of \eqref{EL_orthint} drops out. Furthermore, by \eqref{L2orth} and \eqref{bvp_nondim}$_5$, the second term on the right-hand side of \eqref{EL_orthint} contributes only $\nu_p$. By an application of the divergence theorem on its first term on the left-hand side, \eqref{EL_orthint} then becomes
\be
-\int_{\partial\calR} ((\osix{A}\mskip2mu[\xg\bfS])\bfn)\cdot\bfS_p\da +\int_{\calR}\xg\bfS_p\cdot\osix{A}\mskip2mu[\xg\bfS]\dv=\nu_p.
\label{infimany2}
\ee
Since $\bfS$ satisfies the same natural boundary condition as $\bfS_p$, the arguments from \eqref{bt3} through \eqref{bt4a}, which led to the realization that the integral over $\partial\calR$ vanishes, still hold, so that \eqref{infimany2} reduces to
\be
\int_{\calR}\xg\bfS_p\cdot\osix{A}\mskip2mu[\xg\bfS]\dv=\nu_p.
\label{infimany3a}
\ee
By the assumption \eqref{Asymmetries}$_1$, \eqref{infimany3a} can be equivalently written as 
\be
\int_{\calR}\xg\bfS\cdot\osix{A}\mskip2mu[\xg\bfS_p]\dv=\nu_p.
\label{infimany3}
\ee
Since $\bfS$ is a residual stress field, \eqref{bt4a}, with $\bfSigma=\bfS$, applies, thence yielding
\be
\nu_p=\int_{\calR}\bfS\cdot\mfC[\bfS_p]\dv.
\ee
Upon invoking the symmetry \eqref{Csymmetries}$_1$ of $\mfC$ and the relation \eqref{SSN}, we find that
\be
\nu_p=0.
\label{nup1}
\ee
Since $p$ is arbitrary, we conclude that $\nu_N=0, N\in\mathbb{N}_{N_0}$. Inserting these zeros in \eqref{EL_orth}, we find that $\bfS$ satisfies the same equations as those satisfied by $\bfS_N,N\in\mathbb{N}_{N_0}$, and additionally satisfies \eqref{SSN}.
This implies that $\bfS$ must either be equal to one of the $\bfS_N,N\in\mathbb{N}_{N_0}$, or, a linear combination thereof with unit $\|\cdot\|_{\sbbC}$ norm; moreover, it is orthogonal to each $\bfS_N,N\in\mathbb{N}_{N_0}$, with respect to the scalar product $\langle\cdot,\cdot\rangle_{\sbbC}$. The provisional assumption that there are only finitely many extremizers $N_0$ is thus contradictory, and we conclude that there are infinitely many extremizers $\bfS_N, N\in\mathbb{N}$. 

In the next subsection, we prove that $\bfS_N, N\in\mathbb{N}$, span $\bar{\calS}$.

\subsection{Extremizers form an orthonormal basis for \boldmath{$\bar{\calS}$} in the norm induced by \boldmath{$\mfC$}}\label{sec:span}
Again, we argue by contradiction. Assume that the set
\be
\tilde{\calS}_{\scriptscriptstyle{\perp}}=\left\{\bfT:\bfT\in\calS,\mskip2mu\int_{\calR}\bfS_N\cdot\mfC[\bfT]\dv=0,\mskip1mu N\in\mathbb{N}\right\},
\label{Shat_orth}
\ee
where $\calS$ is defined by \eqref{setS}, is non-empty. Consider the problem of extremizing the functional $E$, defined by \eqref{E_nondim}, over $\tilde{\calS}_{\scriptscriptstyle{\perp}}$, subject to the condition
\be
\int_{\calR}\bfS\cdot\mfC[\bfS]\dv=1.
\label{bvp_orthinf}
\ee
An admissible variation $\bfU$ of an extremizer $\bfS$ must, in addition to being symmetric and complying with \eqref{admiss1}--\eqref{corr2}, satisfy 
\be
\int_{\calR}\bfU\cdot\mfC[\bfS_N]\dv=0, \qquad N\in\mathbb{N}.
\label{orthpinf}
\ee
Arguing now in the same spirit as that in Subsection \ref{sec:inf}, we find that $\bfS$, guaranteed to exist by the arguments in \ref{app:existence}, lies in the span of $\bfS_N, N\in\mathbb{N}$. Moreover, since $\bfS$ belongs to the set $\tilde{\calS}_{\scriptscriptstyle{\perp}}$ defined in \eqref{Shat_orth}, it is orthogonal to each of the $\bfS_N, N\in\mathbb{N}$, with respect to the scalar product $\langle\cdot,\cdot\rangle_{\sbbC}$. Our assumption that $\tilde{\calS}_{\scriptscriptstyle{\perp}}$ is non-empty is thus contradictory and we conclude that $\tilde{\calS}_{\scriptscriptstyle{\perp}}$ is empty. This implies that there is no element in $\calS$ that is orthogonal to each $\bfS_N, N\in\mathbb{N}$, with respect to the scalar product $\langle\cdot,\cdot\rangle_{\sbbC}$. Thus, $\bfS_N, N\in\mathbb{N}$ span $\calS$. Finally, since each element of $\bar{\calS}$ is arbitrarily close in the norm $\|\cdot\|_{\sbbC}$ to some element of $\calS$, we conclude that $\bfS_N, N\in\mathbb{N}$, form an orthonormal basis for $\bar{\calS}$ with respect to the norm $\|\cdot\|_{\sbbC}$. 

Since, by Proposition \ref{CL2}, the norm $\|\cdot\|_{\sbbC}$ is equivalent to the $L^2(\calR)$ norm on $\calS$, we obtain, furthermore, the corollary that $\bfS_N, N\in\mathbb{N}$, form a basis for $\bar{\calS}$ with respect to the $L^2(\calR)$ norm.

As mentioned at the beginning of Section \ref{sec:bvp}, discontinuous residual stress fields are elements of $\bar{\calS}$ but are not elements of $\calS$. The foregoing result shows that even a discontinuous residual stress field can be represented as a linear combination of $\bfS_N, N\in\mathbb{N}$. This feature of our theory is illustrated in Subsection \ref{sec:shrink}, where we consider a shrink-fit residual stress field.

\subsection{Extremizers form an orthogonal basis for \boldmath{$\calS$} in the norm induced by \boldmath{$\osix{A}$}}\label{sec:H1span}
Notice, first, that the norm $\|\cdot\|_{\sbbA}$, given by \eqref{normsAC}$_1$, is equivalent to the $H^1(\calR)$ norm on $\calS$ (see Proposition \ref{hatEH1} in \ref{app:equiv}). Moreover, the mappings 
\be
\bfT\to\xd\bfT \qquad \text{on} \qquad \calR
\qquad\qquad\text{and}\qquad\qquad
\bfT\to\bfT\bfn \quad \qquad \text{on} \qquad \partial\calR
\ee
are continuous in the $H^1(\calR)$ norm. As a result, $\calS$ is a complete space when equipped with the norm $\|\cdot\|_{\sbbA}$. 

We prove the claim stated in the heading of this subsection by contradiction. Assume that there is a non-zero element $\tilde{\bfS}$ in $\calS$ such that 
\be
\langle\tilde{\bfS},\bfS_N\rangle_{\sbbA}=0, \qquad N\in\mathbb{N}.
\label{assume}
\ee
Invoking the definitions \eqref{ACscalar}$_{1,2}$ of the scalar products $\langle\cdot,\cdot\rangle_{\sbbA}$ and $\langle\cdot,\cdot\rangle_{\sbbC}$ and \eqref{bt4a}, it follows that
\be
\langle\tilde{\bfS},\bfS_N\rangle_{\sbbA}=\lambda_N\langle\tilde{\bfS},\bfS_N\rangle_{\sbbC}, \qquad N\in\mathbb{N},
\ee
which, in conjunction with \eqref{assume}, implies that
\be
\lambda_N\langle\tilde{\bfS},\bfS_N\rangle_{\sbbC}=0, \qquad N\in\mathbb{N}.
\ee
Since $\lambda_N>0, N\in\mathbb{N}$, we find that
\be
\langle\tilde{\bfS},\bfS_N\rangle_{\sbbC}=0, \qquad N\in\mathbb{N}.
\ee
However, as established in Subsection \ref{sec:span}, $\bfS_N, N\in\mathbb{N}$, span $\calS$ in the norm $\|\cdot\|_{\sbbC}$, and we obtain a contradictory conclusion that $\tilde{\bfS}=\bf0$. Thus, $\bfS_N, N\in\mathbb{N}$, form a basis for $\calS$ with respect to the norm $\|\cdot\|_{\sbbA}$. Using this result in conjunction with \eqref{ipE}, we conclude that $\bfS_N, N\in\mathbb{N}$, form an {\em orthogonal} basis for $\calS$ with respect to the norm $\|\cdot\|_{\sbbA}$.

Since the norm $\|\cdot\|_{\sbbA}$ is equivalent to the $H^1(\calR)$ norm on $\calS$ by Proposition \ref{hatEH1}, we obtain, furthermore, the corollary that $\bfS_N, N\in\mathbb{N}$, form a basis for $\calS$ with respect to the $H^1(\calR)$ norm.

In summary, the sequence of extremizers corresponding to given viable coefficient tensors $\osix{A}$ and $\mfC$ forms an orthonormal basis for $\bar{\calS}$ in the norm $\|\cdot\|_{\sbbC}$ induced by $\mfC$. Moreover, it also forms an orthogonal basis for $\calS$ in the norm $\|\cdot\|_{\sbbA}$ induced by $\osix{A}$. This way, we have a family of residual stress bases.

\section{The case of homogeneous, isotropic coefficient tensor}\label{sec:iso}
\citet{olive2013symmetry} showed  that a sixth-order tensor $\osix{A}$ has 17 symmetry classes. Across different symmetry classes, the number of independent components in $\osix{A}$ varies from 5 in the isotropic case to 171 in the totally anisotropic case. In this section, we consider the simplest case of $\osix{A}$ being isotropic. Furthermore, we assume that $\osix{A}$ is homogeneous on $\calR$. Additionally, recognizing that we are dealing with the dimensionless version of the problem, we take the coefficient tensor $\mfC$ to be the fourth-order identity tensor. 
\subsection{Euler--Lagrange equation. Natural boundary condition}
For brevity, we denote the integrand of $E$, given by \eqref{E_nondim}, by
\be
U(\xg\mskip1mu\bfT;\bfx)=\xg\bfT(\bfx)\cdot\osix{A}\mskip2mu(\bfx)[\xg\bfT(\bfx)]
\label{Wfirsttime}
\ee
and we suppress the $\bfx$ dependence henceforth. For isotropic coefficient tensor $\osix{A}$ and symmetric second-order tensor field $\bfT$, a trivial extension of the arguments due to \citet{mindlin1964} shows that $U$ must admit a representation of the form
\begin{multline}
U(\xg\mskip1mu\bfT)=\frac{1}{2}(a_1|\xg(\tr\mskip1mu\bfT)|^2+a_2|\xg\mskip1mu\bfT|^2+a_3\xg\mskip1mu\bfT\cdot (\xg\mskip1mu\bfT)^{\trans}\\
+a_4|\xd\bfT|^2+a_5\xg(\tr\mskip1mu\bfT)\cdot\xd\bfT),
\label{simpli2a}
\end{multline}
where $a_1$ through $a_5$ are scalar coefficients varying with position in $\calR$, and where the transpose $\boldsymbol{\calM}^{\trans}$ of a third-order tensor $\boldsymbol{\calM}$ is defined such that
\be
\boldsymbol{\calM}^{\trans}\cdot(\bfu\otimes\bfv\otimes\bfw)=\boldsymbol{\calM}\cdot(\bfv\otimes\bfw\otimes\bfu)
\label{transpose}
\ee
for all vectors $\bfu$, $\bfv$, and $\bfw$. Since $\xd\bfT=\bf0$ in our case, we set $a_4$ and $a_5$ to zero. Furthermore, to make the dependence of the first term in \eqref{simpli2a} on $\xg\mskip1mu\bfT$ explicit, we define the trace of a third-order tensor $\boldsymbol{\calM}$ such that
\be
\tr\boldsymbol{\calM}\cdot\bfv=(\boldsymbol{\calM}\bfv)\cdot\idem
\label{trdef}
\ee
for all vectors $\bfv$. Then, 
\begin{equation}
(\tr(\xg\mskip1mu\bfT))\cdot\bfv =((\xg\mskip1mu\bfT)\bfv)\cdot\idem
=\xg\mskip1mu\bfT\cdot(\idem\otimes\bfv)
=\xg(\bfT\cdot\idem)\cdot\bfv
=\xg(\tr\bfT)\cdot\bfv.
\label{trgrads}
\end{equation}
Thus, $\xg(\tr\bfT)=\tr(\xg\mskip1mu\bfT)$ and \eqref{simpli2a} simplifies to
\be
U(\xg\mskip1mu\bfT)=\frac{1}{2}(a_1|\tr(\xg\mskip1mu\bfT)|^2+a_2|\xg\mskip1mu\bfT|^2+a_3\xg\mskip1mu\bfT\cdot(\xg\mskip1mu\bfT)^{\trans}).
\label{simpli2}
\ee

We further assume that the coefficient tensor $\osix{A}$ is homogeneous on $\calR$, whereby the coefficients $a_1$ through $a_3$ in \eqref{simpli2} are constants. Moreover, we take the coefficient tensor $\mfC$ to be the fourth-order identity tensor. Then, for $U$ of the form \eqref{simpli2}, the Euler--Lagrange equation \eqref{bvp_nondim}$_1$ takes the form
\be
-a_1(\Delta\tr\bfS)\idem-a_2\Delta\bfS+\text{sym}\mskip2mu\xg\bfmu=\lambda\bfS
\label{EL_iso}
\ee
and the natural boundary condition \eqref{bvp_nondim}$_4$ takes the form
\be
a_1((\xg(\tr\bfS))\cdot\bfn)\bfP+a_2\bfP((\xg\bfS)\bfn)\bfP+a_3 \bfP((\xg\bfS)^{\trans}\bfn)\bfP=\bfO.
\label{natbc_iso}
\ee
Although the parameters $a_1$ and $a_2$ appear in both \eqref{EL_iso} and \eqref{natbc_iso}, $a_3$ appears only in \eqref{natbc_iso}. This is essentially because the condition $\xd\bfT=\bf0$ implies that
\be
   \xg\bfT\cdot(\xg\bfT)^{\trans}=\xd(\bfT(\xg\bfT)^{\trans}).
\label{gradst}
\ee
Consequently, while extremizing the specialization of $E$ obtained by using \eqref{Wfirsttime} in \eqref{E_nondim}, the term with coefficient $a_3$ in \eqref{simpli2} goes over to the boundary upon using the divergence theorem and, thus, is a null Lagrangian.

\subsection{Parameter space for a positive-definite \boldmath{$U$}}
We noted in Section \ref{sec:span}, and \ref{app:existence} referenced therein, that \eqref{Aposdef} or, equivalently, the condition
\be
U(\xg\bfT)>0,
\label{U>0}
\ee
where $\bfT$ is a residual stress field, is sufficient for the existence of extremizers. We now find how this condition translates to those on the parameters $a_1$ through $a_3$ for $\osix{A}$ isotropic. 

To find these conditions, it is helpful to write $U$ as a linear combination of positive-definite terms. Toward that objective, we employ Auffray's decomposition \citep{auffray2013geometrical} of the gradient of a symmetric second-order tensor into two orthogonal parts. The first part $\symm\xg\bfT$ is given by
\be
\symm\xg\bfT=\frac{1}{3}(\xg\bfT+(\xg \bfT)^{\trans}+((\xg\bfT)^{\trans})^{\trans}),
\label{symmdef}
\ee
where $(\xg \bfT)^{\trans}$ and $((\xg\bfT)^{\trans})^{\trans}$ are defined through \eqref{transpose}. Notice that $\symm\xg\bfT$ is a totally-symmetric third-order tensor, that is:
\be
\symm\xg\bfT=(\symm\xg\bfT)^{\trans}=((\symm\xg\bfT)^{\trans})^{\trans}.
\ee
The second, remaining, part $\skk\xg\bfT$ of $\xg\bfT$ is given by
\be
\skk\xg\bfT=\xg\bfT-\symm\xg\bfT.
\label{skkdef}
\ee
Note that we have used the blackboard bold font styles $\symm$ and $\skk$ in \eqref{symmdef} and \eqref{skkdef} to distinguish those transformations from the operators `$\sym$' and `$\sk$' used earlier that deliver the symmetric and skew parts, respectively, of a second-order tensor. 

Using the orthogonality property $\symm(\xg\bfT)\cdot\skk(\xg\bfT)=0$, it follows that
\be
\ba
|\xg\mskip1mu\bfT|^2&=\xg\mskip1mu\bfT\cdot\xg\mskip1mu\bfT\\[4pt]
&=(\symm\xg\mskip1mu\bfT+\mskip1mu\skk\xg\mskip1mu\bfT)\cdot(\symm\xg\mskip1mu\bfT+\mskip1mu\skk\xg\mskip1mu\bfT)\\[4pt]
&=\symm\xg\mskip1mu\bfT\cdot\symm\xg\mskip1mu\bfT+\skk\xg\mskip1mu\bfT\cdot\skk\xg\mskip1mu\bfT\\[4pt]
&=|\symm\xg\mskip1mu\bfT|^2+|\skk\xg\mskip1mu\bfT|^2.
\ea
\label{gtgt}
\ee
Similarly, using $\symm\xg\bfT=(\symm\xg\bfT)^{\trans}$ and the identity 
\be
\boldsymbol{\calM}_1^{\trans}\cdot\boldsymbol{\calM}_2^{\trans}=\boldsymbol{\calM}_1\cdot\boldsymbol{\calM}_2,
\label{M1M2}
\ee
which is valid for all third-order tensors $\boldsymbol{\calM}_1$ and $\boldsymbol{\calM}_2$, we find that
\be
\ba
\xg\mskip1mu\bfT\cdot(\xg\mskip1mu\bfT)^{\trans}&=(\symm\xg\mskip1mu\bfT+\mskip1mu\skk\xg\mskip1mu\bfT)\cdot(\symm\xg\mskip1mu\bfT+\mskip1mu\skk\xg\mskip1mu\bfT)^{\trans}\\[4pt]
&=(\symm\xg\mskip1mu\bfT+\mskip1mu\skk\xg\mskip1mu\bfT)\cdot((\symm\xg\mskip1mu\bfT)^{\trans}+\mskip1mu(\skk\xg\mskip1mu\bfT)^{\trans})\\[4pt]
&=(\symm\xg\mskip1mu\bfT+\mskip1mu\skk\xg\mskip1mu\bfT)\cdot(\symm\xg\mskip1mu\bfT+\mskip1mu(\skk\xg\mskip1mu\bfT)^{\trans})\\[4pt]
&=\symm\xg\mskip1mu\bfT\cdot\symm\xg\mskip1mu\bfT+\symm\xg\mskip1mu\bfT\cdot(\skk\xg\mskip1mu\bfT)^{\trans}\\
&\quad+\skk\xg\mskip1mu\bfT\cdot(\skk\xg\mskip1mu\bfT)^{\trans}\\[4pt]
&=|\symm\xg\mskip1mu\bfT|^2+(\symm\xg\mskip1mu\bfT)^{\trans}\cdot(\skk\xg\mskip1mu\bfT)^{\trans}+\skk\xg\mskip1mu\bfT\cdot(\skk\xg\mskip1mu\bfT)^{\trans}\\[4pt]
&=|\symm\xg\mskip1mu\bfT|^2+\skk\xg\mskip1mu\bfT\cdot(\skk\xg\mskip1mu\bfT)^{\trans}.
\ea
\label{skto}
\ee

We next wish to express $\skk\xg\mskip1mu\bfT\cdot(\skk\xg\mskip1mu\bfT)^{\trans}$ as a linear combination of positive-definite terms. To that end, we use the following characterization of $\skk\xg\bfT$:
\be
\skk\xg\bfT=\frac{1}{3}((\xc\bfT)\times + ^t((\xc\bfT)\times)).
\label{skeq}
\ee
The various operations in \eqref{skeq} are defined as follows. For a third-order tensor $\boldsymbol{\calM}$, $^{t}\boldsymbol{\calM}$ is the third-order tensor defined such that
\be
^{t}\boldsymbol{\calM}\cdot(\bfu\otimes\bfv\otimes\bfw)=\boldsymbol{\calM}\cdot(\bfv\otimes\bfu\otimes\bfw)
\ee
for all vectors $\bfu$, $\bfv$, and $\bfw$. For a second-order tensor $\bfN$, $\bfN\times$ is the third-order tensor defined such that, for any vector $\bfv$,
\be
(\bfN\times)\bfv=\bfN(\bfv\times),
\ee
where the second-order tensor $\bfv\times$ is defined by
\be
(\bfv\times)\bfw=\bfv\times\bfw
\ee
for any vector $\bfw$. Finally, for a second-order tensor $\bfN$, $\xc\bfN$ is the second-order tensor defined through
\be
(\xc\bfN)\times=\xg\bfN-((\xg\bfN)^{\trans})^{\trans}.
\ee
Let 
\be
\bfT_c=(\xc\bfT)\times.
\label{Cdef}
\ee
Then, using the identity 
\be
^{t}(\bfN\times)=-(\bfN\times)^{\trans}
\ee
holding for all second-order tensors $\bfN$ and the identity
\be
((\boldsymbol{\calM}^{\trans})^{\trans})^{\trans}=\boldsymbol{\calM}
\ee
holding for all third-order tensors $\boldsymbol{\calM}$, and invoking \eqref{M1M2}, \eqref{skeq}, and \eqref{Cdef}, it follows that
\be
\ba
\skk\xg\bfT\cdot(\skk\xg\bfT)^{\trans}&=\frac{1}{9}(\bfT_c+^t\bfT_c)\cdot(\bfT_c^{\trans}+(^t\bfT_c)^{\trans})\\[4pt]
&=\frac{1}{9}(\bfT_c\cdot\bfT_c^{\trans}-\bfT_c\cdot(\bfT_c^{\trans})^{\trans}-\bfT_c^{\trans}\cdot\bfT_c^{\trans}+\bfT_c^{\trans}\cdot(\bfT_c^{\trans})^{\trans})\\[4pt]
&=\frac{1}{9}(\bfT_c\cdot\bfT_c^{\trans}-\bfT_c^{\trans}\cdot\bfT_c-\bfT_c\cdot\bfT_c+\bfT_c\cdot\bfT_c^{\trans})\\[4pt]
&=-\frac{1}{9}(\bfT_c\cdot\bfT_c-\bfT_c\cdot\bfT_c^{\trans}).
\ea
\label{MMT}
\ee
Similarly,
\be
\ba
\skk\xg\bfT\cdot\skk\xg\bfT&=\frac{1}{9}(\bfT_c+^t\bfT_c)\cdot(\bfT_c+^t\bfT_c)\\[4pt]
&=\frac{1}{9}(\bfT_c\cdot\bfT_c+\bfT_c\cdot^t\bfT_c+\bfT_c\cdot^t\bfT_c+^t\bfT_c\cdot^t\bfT_c)\\[4pt]
&=\frac{1}{9}(\bfT_c\cdot\bfT_c-\bfT_c\cdot\bfT_c^{\trans}- \bfT_c^{\trans}\cdot\bfT_c+\bfT_c^{\trans}\cdot\bfT_c^{\trans})\\[4pt]
&=\frac{2}{9}(\bfT_c\cdot\bfT_c-\bfT_c\cdot\bfT_c^{\trans}).
\ea
\label{MM}
\ee
By \eqref{MMT} and \eqref{MM}, it follows that
\be
\skk\xg\bfT\cdot(\skk\xg\bfT)^{\trans}=-\frac{1}{2}\mskip1mu|\skk\xg\bfT|^2.
\label{sktoskt}
\ee
Substituting \eqref{sktoskt} in \eqref{skto} yields
\be
\xg\mskip1mu\bfT\cdot(\xg\mskip1mu\bfT)^{\trans}=|\symm\xg\mskip1mu\bfT|^2-\frac{1}{2}\mskip1mu|\skk\xg\bfT|^2.
\label{gtgtt}
\ee
Using \eqref{gtgt} and \eqref{gtgtt} in \eqref{simpli2}, it follows that
\be
U(\xg\mskip1mu\bfT)=\frac{1}{2}\Big(a_1 |\tr(\xg\mskip1mu\bfT)|^2+(a_2+a_3)|\symm\xg\mskip1mu\bfT|^2+\Big(a_2-\frac{a_3}{2}\Big)|\skk\xg\mskip1mu\bfT|^2\Big).
\label{W2}
\ee
By \eqref{W2}, the set of viable parameters for positive definiteness of $U$ is
\be
\calV_0=\big\{(a_1,a_2,a_3):a_1>0,\mskip4mu a_2+a_3>0,\mskip4mu 2c_2-a_3>0\big\}.
\label{viable0}
\ee
Notice, however, that in deriving the viable parameter set $\calV_0$ for the $U$ given by \eqref{simpli2}, we have not invoked the property $\xd\mskip1mu\bfT=\bf0$. Hence, the conditions in the definition of $\calV_0$ in \eqref{viable0} are sufficient but not necessary for positive definiteness of $U$. In other words, the set $\calV_0$ is contained in a set for which $U$ is positive-definite with the additional constraint $\xd\bfT=\bf0$. 

To determine how the constraint $\xd\bfT=\bf0$ enlarges the set of viable parameters for positive definiteness of $U$, we further decompose $\symm\xg\mskip1mu\bfT$, following \citet{auffray2013geometrical}, as 
\be
\symm\xg\mskip1mu\bfT=\bfH_1+\bfH_2,
\label{symdef}
\ee
where the third-order tensor fields $\bfH_1$ and $\bfH_2$ are given by
\begin{multline}
\bfH_1=\frac{1}{15}(\xg(\tr\bfT)\otimes\idem+\idem\otimes\xg(\tr\bfT)+(\xg(\tr\bfT)\otimes \idem)^{\trans})\\
+\frac{2}{15}(\xd\bfT\otimes\idem+\idem\otimes\xd\bfT+(\xd\bfT\otimes\idem)^{\trans})
\label{H1def}
\end{multline}
and
\be
\bfH_2=\symm\xg\mskip1mu\bfT-\bfH_1.
\label{H2def}
\ee
With reference to \eqref{H1def} and \eqref{H2def}, $\bfH_1$ and $\bfH_2$ satisfy  
\be
\bfH_1\cdot\bfH_2=0;
\label{H1H2dot}
\ee
furthermore, with reference to \eqref{trdef} and \eqref{H2def}, $\bfH_2$ satisfies
\be
\tr\bfH_2=\bf0. 
\label{trH2}
\ee
Similarly, $\skk\xg\mskip1mu\bfT$ can be decomposed as 
\be
\skk\xg\bfT=\bfH_3+\bfH_4,
\label{skdef}
\ee
where the third-order tensor fields $\bfH_3$ and $\bfH_4$ are given by
\be
\bfH_3=\frac{1}{3}(\sym(\xc\bfT)\times+^t(\sym (\xc\bfT)\times))
\label{H3def}
\ee
and
\be
\bfH_4=\skk\xg\mskip1mu\bfT-\bfH_3.
\ee
By \eqref{skeq}, $\bfH_4$ admits the alternative representation
\be
\bfH_4=\frac{1}
{3}(\sk(\xc\mskip1mu\bfT)\times+^t(\sk(\xc\mskip1mu\bfT)\times)).
\label{H4def}
\ee
Considerations analogous to those leading to \eqref{H1H2dot} and \eqref{trH2} yield
\be
\bfH_3\cdot\bfH_4=0
\label{H3H4dot}
\ee
and
\be
\tr\bfH_3=\bf0. 
\label{trH3}
\ee

Using the decompositions \eqref{symdef} and \eqref{skdef} in \eqref{W2} while taking note of the properties \eqref{H1H2dot}, \eqref{trH2}, \eqref{H3H4dot}, and \eqref{trH3}, we find that if $\osix{A}$ is isotropic, then the integrand of $E$, given by \eqref{E_nondim}, can be expressed as
\be
U=\frac{1}{2}\Big(a_1 (|\tr\bfH_1+\tr\bfH_4|^2)+(a_2+a_3)(|\bfH_1|^2+|\bfH_2|^2)+\Big(a_2-\frac{a_3}{2}\Big)(|\bfH_3|^2+|\bfH_4|^2)\Big).
\label{W3}
\ee
Since $\xd\mskip1mu\bfT=\bf0$, $\bfH_1$ in \eqref{H1def} reduces to
\be
\bfH_1=\frac{1}{15}(\xg(\tr\bfT)\otimes\idem+\idem\otimes\xg(\tr\bfT)+(\xg(\tr\bfT)\otimes \idem)^{\trans}).
\label{H1red}
\ee
Moreover, using the identity
\be
\sk(\xc\mskip1mu\bfN)=\frac{1}{2}(\xd\mskip1mu\bfN-\xg\mskip1mu(\tr\bfN))\times
\ee
which holds for any symmetric second-order tensor $\bfN$, it follows, from \eqref{H4def} and the constraint $\xd\mskip1mu\bfT=\bf0$, that
\be
\bfH_4=-\frac{1}{6}((\xg\mskip1mu(\tr\bfT)\times)\times+^t((\xg\mskip1mu(\tr\bfT)\times)\times)).
\label{H4red}
\ee
In view of the identities 
\be
\left.\ba
&(\bfv\otimes\idem)\cdot(\bfv\otimes\idem)=3|\bfv|^2,\\[4pt]
&(\idem\otimes\bfv)\cdot(\idem\otimes\bfv)=3|\bfv|^2,\\[4pt]
&(\bfv\otimes\idem)\cdot(\idem\otimes\bfv)=(\idem\otimes\bfv)\cdot(\bfv\otimes\idem)=|\bfv|^2,\\[4pt]
&(\bfv\otimes\idem)^{\trans}\cdot(\idem\otimes\bfv)=(\idem\otimes\bfv)\cdot(\bfv\otimes\idem)^{\trans}=|\bfv|^2,\\[4pt]
&((\bfv\times)\times)\cdot((\bfv\times)\times)=^t((\bfv\times)\times)\cdot^t((\bfv\times)\times)=4|\bfv|^2,\\[4pt]
& ^t((\bfv\times)\times)\cdot((\bfv\times)\times)=^t((\bfv\times)\times)\cdot((\bfv\times)\times)=2|\bfv|^2,
\ea\mskip3mu\right\}
\ee
which hold for any vector $\bfv$, we find that
\be
|\bfH_1|^2=\frac{1}{15}|\xg\mskip1mu\tr\bfT|^2, \qquad |\bfH_4|^2=\frac{1}{3}|\xg\mskip1mu\tr\bfT|^2=5|\bfH_1|^2.
\label{result1}
\ee
Furthermore, by the definition in \eqref{trdef} of trace of a third-order tensor, it can be shown that
\be
\left.\ba
\tr&(\xg(\tr\bfT)\otimes\idem)=\xg(\tr\bfT),\\[4pt]
\tr&(\idem\otimes\xg(\tr\bfT))=3\mskip1mu\xg(\tr\bfT),\\[4pt]
\tr&((\xg(\tr\bfT)\otimes\idem)^{\trans})=\xg(\tr\bfT),\\[4pt]
\tr&((\xg\mskip1mu(\tr\bfT)\times)\times)=-2\mskip1mu\xg\mskip1mu(\tr\bfT),\\[4pt]
\tr&(\mskip1mu ^{t}((\xg\mskip1mu(\tr\bfT)\times)\times))=-2\mskip1mu\xg\mskip1mu(\tr\bfT),
\ea\mskip3mu\right\}
\ee
whereby, from \eqref{H1red} and \eqref{H4red},
\be
|\tr\bfH_1|^2=\frac{1}{9}|\xg(\tr\bfT)|^2=\frac{5}{3}|\bfH_1|^2,\qquad|\tr\bfH_4|^2=\frac{4}{9}|\xg(\tr\bfT)|^2=\frac{20}{3}|\bfH_1|^2.
\label{result2}
\ee
Substituting \eqref{result1} and \eqref{result2} in \eqref{W3}, we see that
\be
U=\frac{3}{4}(10a_1+4a_2-a_3)|\bfH_1|^2+\frac{1}{2}(a_2+a_3)|\bfH_2|^2+\frac{1}{4}(2a_2-a_3)|\bfH_3|^2.
\label{W5}
\ee
Since $\bfH_1$, $\bfH_2$, and $\bfH_3$ are independent, by \eqref{W5}, the largest set $\tilde{\calV}$ of viable parameters $a_i$, $i=1,2,3$, for which $U$ is positive-definite is
\be
\tilde{\calV}=\big\{(a_1,a_2,a_3):10a_1+4a_2-a_3>0,\mskip4mu a_2+a_3>0,\mskip4mu 2a_2-a_3>0\big\}.
\label{viable1a}
\ee
In summary, for the case of isotropic $\osix{A}$, the conditions on $a_i$, $i=1,2,3$, in the definition \eqref{viable1a} of $\tilde{\calV}$ are necessary and sufficient for positive definiteness of $U$. As expected, $\calV_0$ defined by \eqref{viable0} is a proper subset of $\tilde{\calV}$. 

For further analysis, we consider the inequality $2a_1+a_2>0$ obtained by adding the first two conditions in the definition \eqref{viable1a} of $\tilde{\calV}$. We introduce
\be
\chi=2a_1+a_2,
\label{chidef}
\ee
and define the dimensionless constants
\be
\varrho=\frac{a_1}{\chi},\qquad \beta=\frac{a_2}{\chi},\qquad\text{and}\qquad\gamma=\frac{a_3}{\chi}.
\label{ctilde}
\ee
Notice, by \eqref{chidef} and \eqref{ctilde}$_{1,2}$, that
\be
\varrho=\frac{1-\beta}{2}.
\label{alphabeta}
\ee
Accordingly, by \eqref{simpli2} and \eqref{chidef}--\eqref{alphabeta}, it follows that
\be
U(\xg\mskip1mu\bfT)=\frac{\chi}{2}\Big(\frac{1-\beta}{2}|\tr(\xg\mskip1mu\bfT)|^2+
\beta|\xg\mskip1mu\bfT|^2+\gamma\xg\mskip1mu\bfT\cdot(\xg\mskip1mu\bfT)^{\trans}\Big).
\label{simpli2h}
\ee
Granted that 
\be
\chi>0,
\label{Gammadef}
\ee
the set of viable parameters for positive definiteness of $U$ can be equivalently written, by \eqref{viable1a}, \eqref{ctilde}, and \eqref{alphabeta}, as 
\be
\mathcal{V}=\big\{(\beta,\gamma):\beta+\gamma<5,\mskip4mu \beta+\gamma>0,\mskip4mu 2\beta-\gamma>0\big\}.
\label{viable}
\ee

The viable choices of $\beta$ and $\gamma$, obtained from \eqref{viable}, belong to the shaded region in Figure \ref{fig:c2c3b}. Notice that the semi-infinite strip of the viable points has its other finite edge at infinity in the fourth quadrant. Furthermore, since the inequalities in \eqref{viable} are strict, the points on the boundaries of this strip, given by the union of the sets $\partial\calV_i$, $i=1,2,3$, defined by
\be
\ba
&\partial\calV_1=\big\{(\beta,\gamma):\beta+\gamma=5,\mskip4mu \beta\ge 5/3\big\}, \\
&\partial\calV_2=\big\{(\beta,\gamma):\gamma=2\beta,\mskip4mu 0< \beta< 5/3\big\}, \\
&\partial\calV_3=\big\{(\beta,\gamma):\beta+\gamma=0,\mskip4mu \beta\ge 0\big\},
\ea
\label{Vbound}    
\ee
are {\em not} viable.
\begin{figure}[t!]
    \centering
         \includegraphics[width=0.65\textwidth]{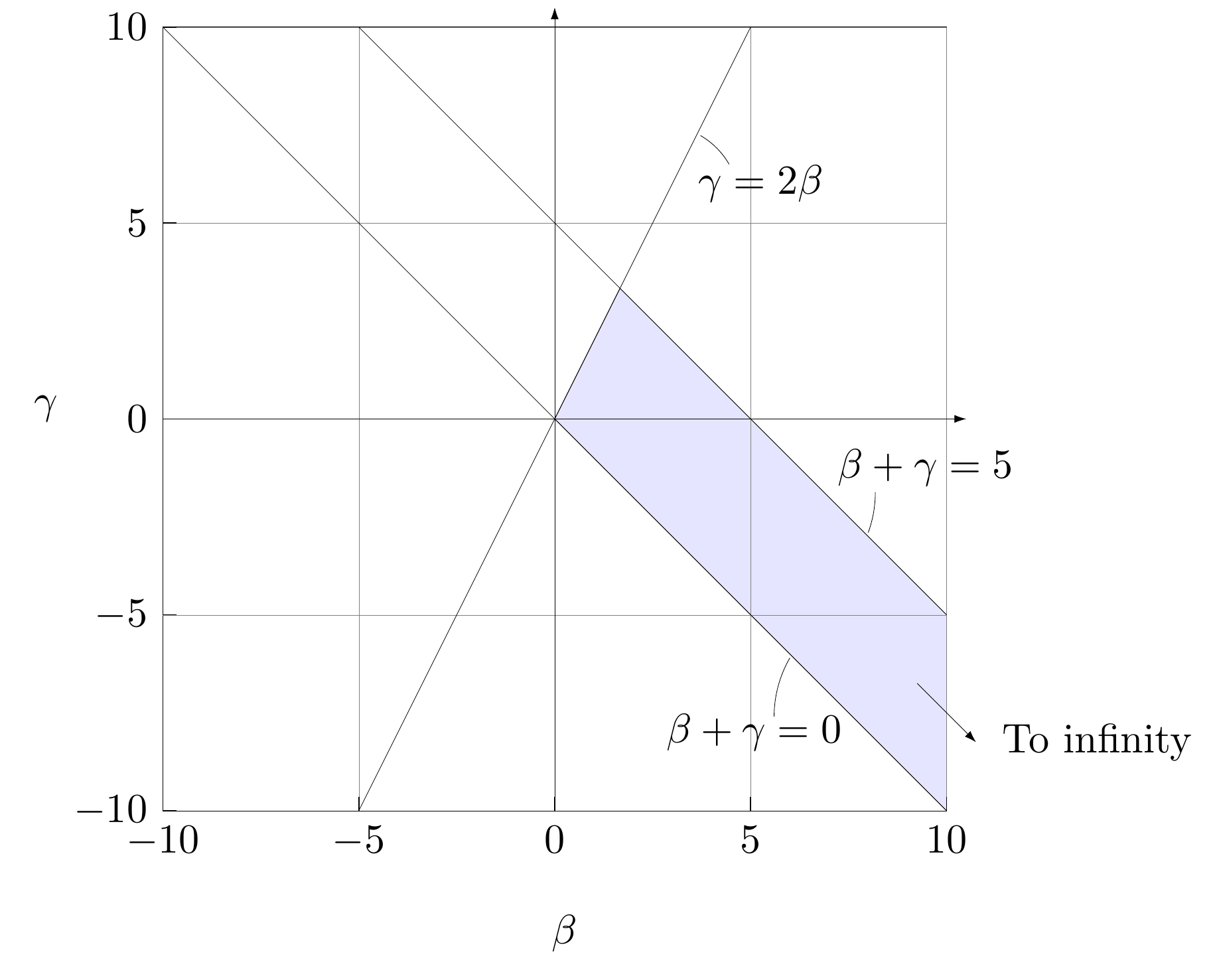}
    \caption{Semi-infinite strip, depicted in blue shade, of viable choices of the parameters $\beta$ and $\gamma$ for positive definiteness of $U$ for isotropic $\osix{A}$. The parameters corresponding to the boundaries of the strip are {\em not} viable.}
    \label{fig:c2c3b}
\end{figure}

To summarize, \eqref{simpli2h}, with $\chi$ satisfying \eqref{Gammadef}, and $\beta$ and $\gamma$ belonging to the set $\calV$ defined by \eqref{viable}, defines all positive-definite choices of $U$ for isotropic $\osix{A}$.

It is evident from \eqref{simpli2h} that the choice of $U$ adopted by Tiwari and Chatterjee, which is simply the integrand in \eqref{Eintro}, corresponds to $\chi=1$, $\beta=1$, and $\gamma=0$. By \eqref{Gammadef} and \eqref{viable}, we see that this particular choice of $U$ is positive-definite, as expected.

In terms of $\chi$, $\beta$, and $\gamma$, the dimensionless constrained boundary-value problem \eqref{bvp_nondim} takes, with reference to \eqref{EL_iso}, \eqref{natbc_iso}, \eqref{ctilde}, and \eqref{alphabeta}, the form
\begin{equation}
\left.
\begin{aligned}
\left.
\begin{aligned}
-\chi\frac{1-\beta}{2}(\Delta\tr\bfS)\idem- \chi\beta\Delta\bfS+\text{sym}\mskip2mu\xg\bfmu&=\lambda\bfS,
\\[4pt]
\xd\bfS&=\bf0,
\end{aligned}
\mskip4mu\right\}
&\quad\text{on}\quad\calR,
\\[4pt]
\left.
\begin{aligned}
\bfS\bfn&=\bf0,
\\[4pt]
\frac{1-\beta}{2}((\xg(\tr\bfS))\cdot\bfn)\bfP+\beta \bfP((\xg\bfS)\bfn)\bfP+\gamma \bfP((\xg\bfS)^{\trans}\bfn)\bfP&=\bfO,
\end{aligned}
\mskip4mu\right\}
&\quad\text{on}\quad\partial\calR,
\\[4pt]
\int_{\calR}|\bfS|^2\dv=1.
\hspace{40pt}&
\end{aligned}
\mskip4mu\right\}
\label{bvp_nondim_iso1}
\end{equation}
The only relation in \eqref{bvp_nondim_iso1} that depends explicitly on the parameter $\chi$ is the Euler--Lagrange equation \eqref{bvp_nondim_iso1}$_1$, which may be recast as
\be
-\frac{1-\beta}{2}(\Delta\tr\bfS)\idem- \beta\Delta\bfS+\text{sym}\mskip2mu\xg\tilde{\bfmu}=\tilde{\lambda}\bfS,
\label{ELre}
\ee
with
\be
\tilde{\bfmu}=\frac{\bfmu}{\chi} \qquad \text{and} \qquad \tilde{\lambda}=\frac{\lambda}{\chi}.
\label{redef}
\ee
We notice from \eqref{bvp_nondim_iso1} and \eqref{ELre} that for any choice of $\chi>0$, the dimensionless constrained boundary-value problem \eqref{bvp_nondim_iso1} is the same as that for $\chi=1$ upon redefining the Lagrange multipliers $\bfmu$ and $\lambda$ through \eqref{redef}. Thus, without loss of generality, we choose
\be
\chi=1.
\label{chi1}
\ee
Accordingly, the dimensionless constrained boundary-value problem \eqref{bvp_nondim_iso1} becomes
\begin{equation}
\left.
\begin{aligned}
\left.
\begin{aligned}
-\frac{1-\beta}{2}(\Delta\tr\bfS)\idem- \beta\Delta\bfS+\text{sym}\mskip2mu\xg\bfmu&=\lambda\bfS,
\\[4pt]
\xd\bfS&=\bf0,
\end{aligned}
\mskip4mu\right\}
&\quad\text{on}\quad\calR,
\\[4pt]
\left.
\begin{aligned}
\bfS\bfn&=\bf0,
\\[4pt]
\frac{1-\beta}{2}((\xg(\tr\bfS))\cdot\bfn)\bfP+\beta \bfP((\xg\bfS)\bfn)\bfP+\gamma \bfP((\xg\bfS)^{\trans}\bfn)\bfP&=\bfO,
\end{aligned}
\mskip4mu\right\}
&\quad\text{on}\quad\partial\calR,
\\[4pt]
\int_{\calR}|\bfS|^2\dv=1.
\hspace{40pt}&
\end{aligned}
\mskip4mu\right\}
\label{bvp_nondim_iso}
\end{equation}
Furthermore, by \eqref{E_nondim}, \eqref{Wfirsttime}, \eqref{simpli2h}, and \eqref{chi1}, the dimensionless functional $E$ takes the form
\be
E=\frac{1}{2}\int_{\calR}\Big(\frac{1-\beta}{2}|\tr(\xg\mskip1mu\bfT)|^2+
\beta|\xg\mskip1mu\bfT|^2+\gamma\mskip2mu\xg\mskip1mu\bfT\cdot(\xg\mskip1mu\bfT)^{\trans}\Big)\dv.
\label{E_nondim_iso}
\ee

Notice that the relations \eqref{bvp_nondim_iso}$_{1,2}$ are partial-differential equations; hence, it is not generally possible to find analytical solutions to the system \eqref{bvp_nondim_iso}. However, for the particular case in which $\calR$ is a spherical shell, it seems reasonable to explore whether the system supports spherically symmetric solutions. In the next section, we explore this possibility.

\section{Analytical solutions for the spherically symmetric case} \label{sec:spherical}
In Section \ref{sec:iso}, we derived the boundary-value problem satisfied by the extremizers for the special case in which the sixth-order tensor $\osix{A}$ entering the definition \eqref{E_nondim} of the functional $E$ is homogeneous and isotropic and the fourth-order tensor $\mfC$ in the normalization condition \eqref{bvp_nondim}$_5$ is the identity tensor. In this section, we further specialize the problem by seeking spherically symmetric solutions in the context of that specialization. Toward that end, we consider $\calR$ to be a spherical shell and stipulate that the solutions to the boundary-value problem \eqref{bvp_nondim_iso} be spherically symmetric too, thus reducing \eqref{bvp_nondim_iso} to a system of ordinary-differential and algebraic equations. In attempting to solve that system, we aim to derive analytical expressions for the spherically symmetric extremizers $\bfS_N,N\in\mathbb{N}$, and the corresponding Lagrange multiplier fields $\bfmu_N,N\in\mathbb{N}$.

\subsection{Reduction of the constrained boundary-value problem to a system of ordinary-differential and algebraic equations}
We consider $\calR$ to be a spherical shell having inner and outer radii $r_i$ and $r_o$, respectively, necessarily satisfying $r_o>r_i>0$. Furthermore, we select the center of the shell to be the origin $\bfo$ of point space $\calE$ in which it is embedded. Given a fixed right-handed orthonormal basis $\{\bfe_1,\bfe_2,\bfe_3\}$ for the translation space of $\calE$, we define spherical basis vectors through
\be
\left.
\begin{aligned}
\er&=\Sp\,(\Ct\,\bfe_1+\St\,\bfe_2)+\Cp\,\bfe_3,
\\[4pt]
\ep&=\Cp\,(\Ct\,\bfe_1+\St\,\bfe_2)-\Sp\,\bfe_3,
\\[4pt]
\et&=-\St\,\bfe_1+\Ct\,\bfe_2,
\end{aligned}
\mskip3mu\right\}
\qquad
r>0,
\quad
0\le\varphi\le\pi,
\quad
0\le\vartheta<2\pi.
\ee
Furthermore, we define associated second-order tensors $\bfPi$ and $\bfOm$ through
\be
\bfPi=\ep\otimes\ep+\et\otimes\et
=\idem-\er\otimes\er
\qquad\text{and}\qquad
\bfOm=\ep\otimes\et-\et\otimes\ep.
\label{PiOm}
\ee
It can then be shown that the tensor field $\bfR$ of the form
\be
\bfR(\zeta)=\er\otimes\er+\cos\zeta\,\bfPi
+\sin\zeta\,\bfOm,
\ee
represents a rotation about $\er$ by the angle $\zeta$.

A vector field $\bfmu$ is spherically symmetric if its components relative to the spherical basis $\{\er,\ep,\et\}$ depend at most on $r$ and it satisfies
\be
\bfR\bfmu=\bfmu
\label{Rmu=mu}
\ee
for all $\zeta$. It can then be shown that \eqref{Rmu=mu} holds for all $\zeta$ if and only if there exists a scalar-valued function $\mu$ of $r$ such that $\bfmu$ has the form
\be
\bfmu=\mu\bfe_r.
\label{murep}
\ee
Similarly, a second-order tensor field $\bfN$ is spherically symmetric if its components relative to $\{\er,\ep,\et\}$ depend at most on $r$ and it satisfies
\be
\bfR\bfN\mskip-2.5mu\bfR^{\trans}=\bfN
\label{RARt}
\ee
for all $\zeta$. It can then be shown that \eqref{RARt} holds for all $\zeta$ if and only if there exist scalar-valued functions $\Npar$, $\Nperp$, and $\Nspin$ of $r$ such that $\bfN$ has the form
\be
\bfN=\Npar\er\otimes\er
+\Nperp\bfPi
+\Nspin\bfOm.
\label{Arepgen}
\ee
Accordingly, a symmetric second-order tensor $\bfS$ that satisfies \eqref{RARt} must have the form
\be
\bfS=\Spar\er\otimes\er+\Sperp\bfPi
=(\Spar-\Sperp)\er\otimes\er+\Sperp\mskip-2mu\idem.
\label{Srep}
\ee

Considering the boundary-value problem \eqref{bvp_nondim_iso}, it is evident that we need to evaluate the quantities $\sym\xg\bfmu$, $\xg\bfS$, $\xd\bfS$, $\Delta\bfS$, and $\Delta(\tr\bfS)$. To begin, we see from \eqref{murep} that
\be
\xg\bfmu=\mu'\er\otimes\er+\frac{\mu}{r}\bfPi,
\label{Gradmu}
\ee
where a prime denotes a differentiation with respect to $r$. From \eqref{PiOm}$_1$ and \eqref{Gradmu}, it is evident that, as defined by \eqref{Gradmu}, $\xg\bfmu$ is symmetric.

Next, applying the gradient to \eqref{Srep} and using the identity
\begin{align}
\xg(\er\otimes\er)
&=\frac{1}{r}(
\er\otimes\mskip-2mu\bfPi+\ep\otimes\er\otimes\ep
+\et\otimes\er\otimes\et)
\notag\\[4pt]
&=\frac{1}{r}(
(\er\otimes\ep+\ep\otimes\er)\otimes\ep
+(\er\otimes\et+\et\otimes\er)\otimes\et),
\end{align}
%
%
we find that
\begin{align}
\xg\bfS&=\Spar'\er\otimes\er\otimes\er
+\Sperp'\bfPi\otimes\er
\notag\\[4pt]
&\qquad+\frac{\Spar-\Sperp}{r}(
(\er\otimes\ep+\ep\otimes\er)\otimes\ep
+(\er\otimes\et+\et\otimes\er)\otimes\et),
\label{GradS}
\end{align}
from which it follows that
\be
\xd\bfS=\Big(\Spar'+\frac{2(\Spar-\Sperp)}{r}\Big)\er.
\label{DivS}
\ee
Applying the divergence to \eqref{GradS}, we next find that 
\be
\Delta\bfS
=\Big(\Spar''+\frac{2}{r}\Spar'
-\frac{4(\Spar-\Sperp)}{r^2}\Big)\er\otimes\er
+\Big(\Sperp''+\frac{2}{r}\Sperp'
+\frac{2(\Spar-\Sperp)}{r^2}\Big)\bfPi.
\label{DelS}
\ee
Furthermore, since, by \eqref{Srep}, 
\be
\tr\bfS=\Spar+2\Sperp,
\label{trSsph}
\ee
we find that
\be
\Delta(\tr\bfS)=\Spar''+2\Sperp''+\frac{2(\Spar'+2\Sperp')}{r}.
\label{DeltaS}
\ee

Thus, granted that $\bfS$ and $\bfmu$ are spherically symmetric, the Euler--Lagrange equation \eqref{bvp_nondim_iso}$_1$ reduces, by \eqref{Gradmu}--\eqref{DeltaS}, to the system of ordinary-differential equations
\be
\left.
\begin{aligned}
-\frac{1-\beta}{2}\Big(\Spar''+2\Sperp''+\frac{2(\Spar'+2\Sperp')}{r}\Big)-\beta\Big(\Spar''+\frac{2}{r}\Spar'
-\frac{4(\Spar-\Sperp)}{r^2}\Big)+\mu'
&=\lambda\Spar, 
\\[4pt]
-\frac{1-\beta}{2}\Big(\Spar''+2\Sperp''+\frac{2(\Spar'+2\Sperp')}{r}\Big)-\beta\Big(\Sperp''+\frac{2}{r}\Sperp'
+\frac{2(\Spar-\Sperp)}{r^2}\Big)
+\frac{\mu}{r}&=\lambda\Sperp,
\end{aligned}
\mskip3mu\right\}
\label{EL_sph}
\ee
which are to be solved on $r_i<r<r_o$. Similarly, in view of the identities
\be
\left.
\begin{aligned}
(\xg(\tr\bfS))\cdot\bfn=(\Spar'+2\Sperp'),
\\[4pt]
(\xg\bfS)\er=\Spar'\er\otimes\er+\Sperp'\bfPi,
\\[4pt]
(\xg\bfS)^{\trans}\er=\Spar'\er\otimes\er+\frac{\Spar-\Sperp}{r}\bfPi,
\\[4pt]
\end{aligned}
\mskip3mu\right\}
\ee
the natural boundary condition \eqref{bvp_nondim_iso}$_4$ reduces to
\be
\frac{1-\beta}{2}(\Spar'+2\Sperp')+\beta\Sperp'
+\gamma\frac{\Spar-\Sperp}{r}=0, 
\label{nbc_sph}
\ee
which applies at $r=r_i$ and $r=r_o$. The remaining equations in the boundary-value problem, namely \eqref{bvp_nondim_iso}$_{2,3,5}$, reduce, respectively, to
\be
\left.
\ba
\Spar'+\frac{2(\Spar-\Sperp)}{r}&=0 \qquad \text{on} \qquad r_i< r< r_o,
\\[4pt]
\Spar&=0 \qquad \text{at} \qquad r=r_i \quad\text{and} \quad r=r_o,
\\[4pt]
\int_{r=r_i}^{r_o}(\Spar^2+2\Sperp^2)\dr&=\frac{1}{4\pi}.
\\[4pt]
\ea\mskip3mu\right\}
\label{bvp_sph}
\ee

The system consisting of \eqref{EL_sph}, \eqref{nbc_sph}, and \eqref{bvp_sph}, represents the spherically symmetric version of the constrained boundary-value problem \eqref{bvp_nondim_iso}. To confirm that we have the correct number of conditions, we first observe that the two second-order ordinary-differential equations \eqref{EL_sph} and the first-order ordinary-differential equation \eqref{bvp_sph}$_1$ can be recast into a system of five first-order ordinary-differential equations in the five scalar fields: $\Spar$, $\Spar'$, $\Sperp$, $\Sperp'$, and $\mu$. The conditions \eqref{nbc_sph} and \eqref{bvp_sph}$_2$ collectively constitute four boundary conditions. Another boundary condition is obtained by stipulating that the scalar fields $\Spar$, $\Sperp$, and $\Spar'$ are continuous at $r=r_i$, which implies that the condition \eqref{bvp_sph}$_1$ also holds at $r=r_i$. Consequently, by incorporating the normalization condition \eqref{bvp_sph}$_3$, we have the requisite conditions to determine the five scalar fields mentioned above and the constant multiplier $\lambda$. As noted at the end of Subsection \ref{sec:nondim}, a boundary condition for the Lagrange multiplier field $\mu$ is not needed.

Toward obtaining the analytical solutions of the spherically symmetric problem, we first eliminate $\Sperp$ and $\mu$ to yield a problem involving only $\Spar$. As we will see shortly, this yields a system depending on the functional parameters $\beta$ and $\gamma$ through only their sum $\beta+\gamma$. Thus, the spherical symmetry of the problem leads to a dimensional reduction in the parameter dependence, making it possible for us to determine the extremizers for all viable choices of the parameters $\beta$ and $\gamma$.

\subsection{Dimensional reduction in parameter dependence} \label{sec:dimred}
As outlined above, we express the boundary-value problem solely in terms of $\Spar$. This is done as follows. First, $\mu$ is eliminated by multiplying \eqref{EL_sph}$_2$ with $r$, differentiating the resulting equation, and subtracting \eqref{EL_sph}$_1$ from that equation. Then, $\Sperp$ is eliminated using \eqref{bvp_sph}$_1$, yielding the system
\be
\left.
\ba
\Spar''''+\frac{8\Spar'''}{r}+\Big(\frac{8}{r^2}+\lambda\Big)+\Big(-\frac{8}{r^3}+\frac{4\lambda}{r}\Big)\Spar'=0 \qquad \text{on} \qquad r_i< r< r_o,
\hspace{-110pt}&
\\[4pt]
\left.
\ba
\Spar&=0,
\\[4pt]
r\Spar''-(\beta+\gamma-4)\Spar'&=0,
\ea
\mskip4mu\right\}
&\qquad \text{at} \qquad r=r_i \quad\text{and} \quad r=r_o,
\\[4pt]
\int_{r=r_i}^{r_o}\Big(\Spar^2+2\Big(\frac{r\Spar'}{2}+\Spar\Big)^2\Big)\dr&=\frac{1}{4\pi}.
\ea\mskip3mu\right\}
\label{bvp_spar}
\ee 

We notice from \eqref{bvp_spar} that $\Spar$ and, by \eqref{bvp_sph}$_1$, $\Sperp$ depend on the functional parameters through $\beta+\gamma$. Moreover, we see from Figure \ref{fig:c2c3b} that one edge of the semi-infinite strip of viable $\beta$ and $\gamma$ values is the line $2\beta-\gamma=0$. Given these observations, it is convenient to work with
\be
p=\beta+\gamma \qquad \text{and} \qquad k=2\beta-\gamma. 
\label{pkcoords}
\ee
Recalling the set $\calV$ introduced in \eqref{viable} of viable parameter choices, we notice that its boundaries $\calV_i$, $i=1,2,3$, given by \eqref{Vbound}, can be represented in terms of $p$ and $k$ as 
\be
\ba
&\partial\calV_1=\big\{(p,k):p=5,\mskip4mu k\ge 0\big\},\\
&\partial\calV_2=\big\{(p,k):k=0,\mskip4mu 0< p< 5\big\},\\
&\partial\calV_3=\big\{(p,k):p=0,\mskip4mu k\ge 0\big\}.
\ea
\label{Vbound2} 
\ee  
These boundaries are depicted in Figure \ref{fig:pk}. 

We observe from \eqref{bvp_sph}$_1$ and \eqref{bvp_spar} that $\lambda$, $\Spar$, and $\Sperp$ do not depend on $k$. Consequently, having determined them for $0<p<5$ and any convenient choice of $k$, we can determine them across the entire semi-infinite strip depicted in Figure \ref{fig:pk}. 
\begin{figure}[t!]
    \centering
         \includegraphics[width=0.65\textwidth]{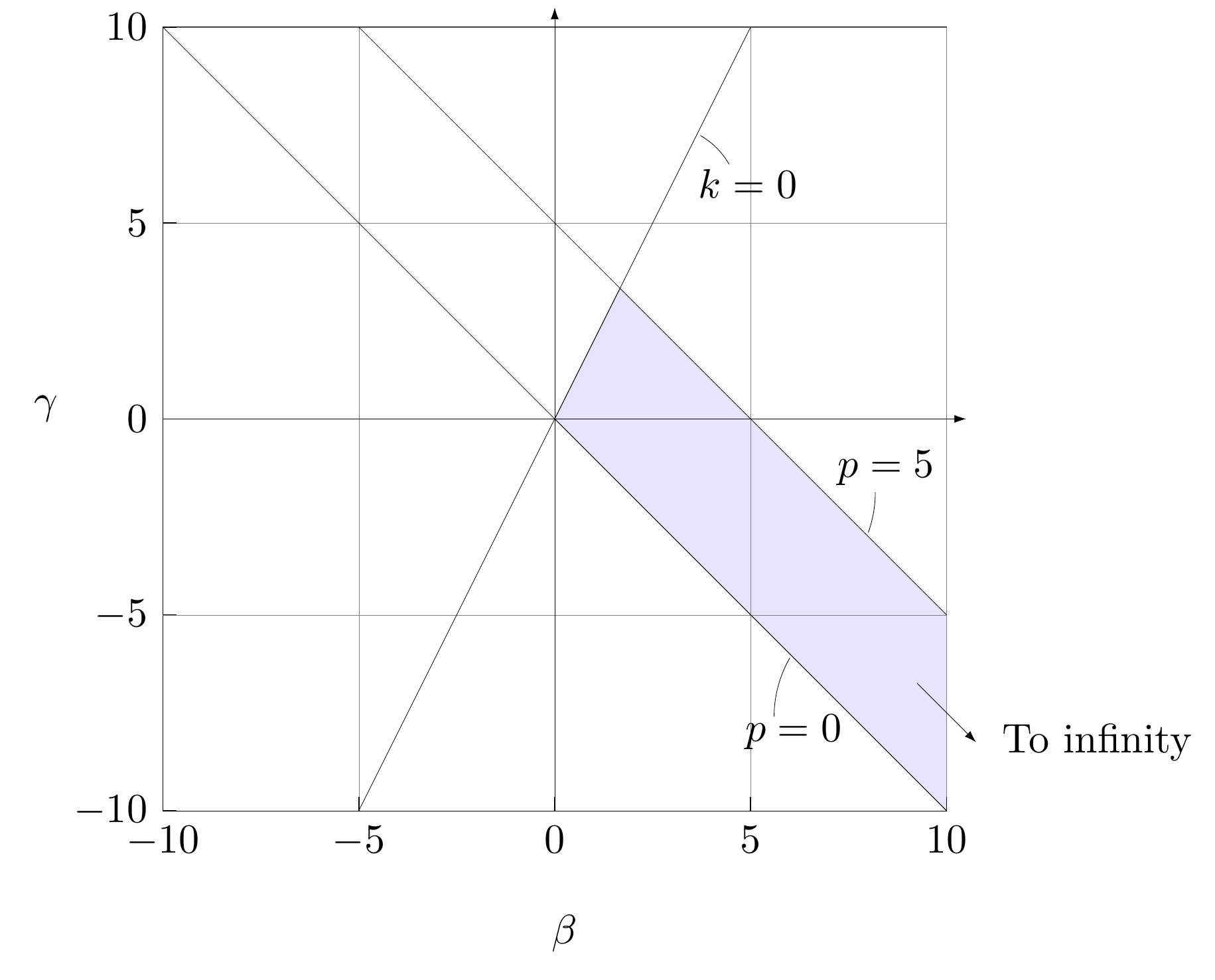}
    \caption{Depiction of the semi-infinite strip of viable parameters in terms of $p$ and $k$. In the spherically symmetric case, the determination of the extremizers and $\lambda$ for $0<p<5$ ascertains them across the entire strip.}
    \label{fig:pk}
\end{figure}

To see why this dimensional reduction in parameter dependence occurs, notice, by \eqref{transpose}, \eqref{GradS}, and \eqref{trSsph}, that for $\bfS$ spherically symmetric, 
\be
\left.\ba
&{\xg(\tr\bfS)=(\Spar'+2\Sperp')\bfe_r},
\\[4pt]
&{(\xg\bfS)^{\trans}=\Spar'\bfe_r\otimes\bfe_r\otimes\bfe_r+\Sperp'\bfe_r\otimes\bfPi}
\\
&\mskip100mu{+\frac{2(\Spar-\Sperp)}{r}(\bfe_{\phi}\otimes\sym(\bfe_r\otimes\bfe_{\phi})+\bfe_{\vartheta}\otimes\sym(\bfe_r\otimes\bfe_{\vartheta}))}.
\ea\mskip3mu\right\}
\label{simpli}
\ee
By \eqref{GradS} and \eqref{simpli}, the three terms in the integrand on the right-hand side of \eqref{E_nondim_iso} simplify to
\be
\left.\ba
&{|\xg(\tr\bfS)|^2=(\Spar'+2\Sperp')^2,}
\\[4pt]
&{|\xg\bfS|^2=\Spar'^2+2\Sperp'^2+\Big(\frac{2(\Spar-\Sperp)}{r}\Big)^2,}
\\[4pt]
&{\xg\bfS\cdot(\xg\bfS)^{\trans}=\Spar'^2+4\Sperp'\Big(\frac{\Spar-\Sperp}{r}\Big)+2\Big(\frac{\Spar-\Sperp}{r}\Big)^2.}
\ea\mskip3mu\right\}
\\[4pt]
\label{com}
\ee
Furthermore, in view of \eqref{bvp_sph}$_1$, \eqref{com}$_{2,3}$ reduce, respectively, to
\be
\left.\ba
|\xg\bfS|^2&=2(\Spar'^2+\Sperp'^2),
\\[4pt]
\xg\bfS\cdot(\xg\bfS)^{\trans}&=\frac{3\Spar'^2}{2}-2\Spar'\Sperp'.
\ea\mskip3mu\right\}
\label{com2}
\ee
With reference to \eqref{com}$_1$ and \eqref{com2}, we see that
\be
|\xg\bfS|^2-\frac{|\xg(\tr\bfS)|^2}{2}=\xg\bfS\cdot(\xg\bfS)^{\trans}.
\label{cond}
\ee
Hence, by \eqref{cond}, for $\bfS$ spherically symmetric and divergence-free, $E$, given by \eqref{E_nondim_iso}, simplifies to 
\be
E(\bfS)=\frac{1}{2}\int_{\calR} \Big(\frac{|\xg (\tr \bfS)|^2}{2}+(\beta+\gamma)\xg\bfS \cdot(\xg \bfS)^{\trans}\Big)\dv,
\label{Esph1}
\ee
which by \eqref{gradst} and a subsequent application of the divergence theorem, becomes
\be
E(\bfS)=\frac{\beta+\gamma}{2}\int_{\partial\calR}((\xg\bfS)^{\trans}[\bfS])\cdot\bfn\da+\frac{1}{4}\int_{\calR} |\xg (\tr \bfS)|^2\dv,
\label{Esph2}
\ee
where the notation $\boldsymbol{\calM}[\bfN]$ represents the action of a third-order tensor $\boldsymbol{\calM}$ on  a second-order tensor $\bfN$. From \eqref{Esph2}, it is evident that the parameters $\beta$ and $\gamma$ enter the formulation only through $p=\beta+\gamma$. Moreover, since that parameter appears on the contribution to $E$ from $\partial\calR$, it is evident why, of all the equations in the boundary-value problem \eqref{bvp_spar}, $p$ features only in the natural boundary condition \eqref{bvp_spar}$_3$.

In the following two subsections, we derive analytical expressions for the extremizer $\bfS$. We note an important point before proceeding. We have seen that the semi-infinite strip (depicted in Figure~\ref{fig:pk}) of parameter values for which the integrand $U$ of $E$ is positive-definite does not include the boundaries, namely $p=0$, $p=5$, and $k=0$, for general residual stress fields. However, for a non-trivial spherically symmetric residual stress field $\bfS$, we show in \ref{app:inclusive} that along these boundaries, $E(\bfS)>0$. For that reason, we will include these boundaries when using numerical methods to compute the constants that appear in the analytical expressions for the components of the stress and the Lagrange multiplier. 

\subsection{Analytical solutions for the extremizers for \boldmath{$p=0$}}\label{sec:p0}
The general solution $\Spar$ of \eqref{bvp_spar}$_1$ is given by
\be
\Spar(r)=\frac{c_0}{\omega^3r^3}\Big(c_1+\frac{c_4r^3}{r_o^3}+(c_2\omega r+c_3)\cos{\omega r}-(c_2-c_3\omega r)\sin{\omega r} \Big),
\label{srrgen}
\ee
where $c_0$ through $c_4$ are constants, with $c_i$, $i=1,\dots,4$, satisfying
\be
c_1^2+c_2^2+c_3^2+c_4^2=1,
\label{anorm}
\ee
and where $\omega$ is defined by
\be
\omega=\sqrt{\lambda}.
\label{omegasqrt}
\ee
It is readily verified that $\Spar$ as defined in \eqref{srrgen} satisfies 
\be
r\Spar''+4\srr'=\omega^2r\Big(\frac{c_0 c_1}{\omega^3r^3}+\frac{c_0 c_4}{\omega^3r_o^3}-\srr\Big).
\label{check}
\ee
The boundary conditions \eqref{bvp_spar}$_{2,3}$, with aid of \eqref{check}, yield
\be
\frac{c_0 r_i}{\omega}\Big(\frac{c_1}{r_i^3}+\frac{c_4}{r_o^3}\Big)=p\srr'\big|_{r_i}, \qquad
\frac{c_0 r_o}{\omega}\Big(\frac{c_1}{r_o^3}+\frac{c_4}{r_o^3}\Big)=p\srr'\big|_{r_o}.
\label{a1a4}
\ee
At $p=0$, \eqref{a1a4} reduces to the homogeneous system
\be
\frac{c_0 r_i}{\omega}\Big(\frac{c_1}{r_i^3}+\frac{c_4}{r_o^3}\Big)=0,\qquad
\frac{c_0 r_o}{\omega}\Big(\frac{c_1}{r_o^3}+\frac{c_4}{r_o^3}\Big)=0.
\label{a1a42}
\ee
For non-trivial solutions, $c_0\neq0$ and, thus, $c_1=c_4=0$, whereby
\eqref{srrgen} simplifies to
\be
\Spar(r)=\frac{c_0}{\omega^3r^3}((c_2\omega r+c_3)\cos{\omega r}-(c_2-c_3\omega r)\sin{\omega r} ).
\label{chance}
\ee
Using \eqref{anorm}, with $c_1=c_4=0$ substituted therein, and the trigonometric identity $\sin(\theta_1+\theta_2)=\sin{\theta_1}\cos{\theta_2}+\sin{\theta_2}\cos{\theta_1}$, \eqref{chance} can be written as
\be
\Spar(r)=A_{\scriptscriptstyle{\parallel}}(r)\sin(\omega r+\theta_{\scriptscriptstyle{\parallel}}(r)),
\label{srrapmp}
\ee
where $A_{\scriptscriptstyle{\parallel}}$ and $\theta_{\scriptscriptstyle{\parallel}}$ are defined by
\be
A_{\scriptscriptstyle{\parallel}}(r)=\frac{c_0\sqrt{1+\omega^2 r^2}}{\omega^3r^3} \qquad \text{and} \qquad \theta_{\scriptscriptstyle{\parallel}}(r)=\sin^{-1}\Big(\frac{c_2\omega r+c_3}{\sqrt{1+\omega^2r^2}}\Big).
\label{Ath}
\ee
Upon substituting \eqref{srrapmp} in \eqref{bvp_sph}$_1$, and employing manipulations resembling those used above, we find that
\be
\Sperp(r)=A_{\scriptscriptstyle{\perp}}(r)\cos(\omega r-\theta_{\scriptscriptstyle{\perp}}(r)),
\label{sttapmp}
\ee
where $A_{\scriptscriptstyle{\perp}}$ and $\theta_{\scriptscriptstyle{\perp}}$ are defined by
\be
A_{\scriptscriptstyle{\perp}}(r)=\frac{c_0\sqrt{\omega^4r^4-\omega^2r^2+1}}{2\omega^3r^3} \qquad \text{and} \qquad \theta_{\scriptscriptstyle{\perp}}(r)=\cos^{-1}\Big(\frac{c_2\omega r+c_3(1-\omega^2r^2)}{\sqrt{\omega^4r^4-\omega^2r^2+1}}\Big).
\label{Aperp}
\ee
Relations \eqref{srrapmp} and \eqref{sttapmp} clarify the forms of $\Spar$ and $\Sperp$: they are harmonic functions with radially varying amplitudes and phases. Furthermore, their frequency of oscillation is $\omega$.

To determine the four constants $c_0$, $c_2$, $c_3$, and $\omega$, we substitute \eqref{srrapmp} in \eqref{bvp_spar}$_{2\text{--}4}$ to obtain a system of six algebraic equations consisting of the derived relations and the relation \eqref{anorm}. Notice, however, that two of the four relations derived from \eqref{bvp_spar}$_{2,3}$ have already been utilized to establish that $c_1=c_4=0$ from \eqref{a1a42}. Consequently, we are left with four independent algebraic equations to solve for the four constants $c_0$, $c_2$, $c_3$, and $\omega$. 

Due to the non-linear dependence \eqref{srrapmp} and \eqref{sttapmp} of the stress components on $\omega=\sqrt{\lambda}$, it appears non-linearly in the algebraic equation system. Similarly, $c_2$ and $c_3$, in addition to being non-linearly related through \eqref{anorm}, appear non-linearly in the system due to the normalization constraint \eqref{bvp_spar}$_3$. Finally, $c_0$ appears non-linearly owing to the same constraint \eqref{bvp_spar}$_3$. Therefore, the constrained boundary-value problem satisfied by the extremizers is inherently non-linear, as highlighted towards the end of Subsection \ref{sec:nondim}. This non-linearity necessitates the utilization of numerical methods to determine the constants $c_0$, $c_2$, $c_3$, and $\omega$. From that viewpoint, it might be more appropriate to characterize the spherically-symmetric solutions obtained in this section as `semi-analytical'.

To compute $c_0$, $c_2$, $c_3$, and $\omega$, we use our own Newton--Raphson-based routine. Guided by the foregoing interpretation of the solutions being harmonic functions with radially varying amplitudes and phases, for initial guesses $\omega_0$ of $\omega$ for the iterative solution method, we use $\omega_0=N\pi/(r_o-r_i)$ for the $N^{\text{th}}$ solution. Furthermore, the dimensionless inner and outer radii $r_i$ and $r_o$ are now and hereafter taken to be $0.5$ and $1$, respectively. 

The plots of $c_0$ and $\omega$ versus $N$, $N=1,2,\dots,10$, are shown in the left panel of Figure \ref{fig:ai_p0} and those of $c_2$ and $c_3$ versus $N$, $N=1,2,\dots,10$, are shown in the right panel of Figure \ref{fig:ai_p0}. We notice from the former that $c_0$ and $\omega$ scale linearly with $N$, with $\omega_N$ close to $N\pi/(r_o-r_i)=2N\pi$, and from the latter that $c_2$ and $c_3$ seem to converge to 0 and 1, respectively.
\begin{figure}[t!]
    \centering
    \begin{subfigure}[b]{0.495\textwidth}
         \centering
         \includegraphics[width=\textwidth]{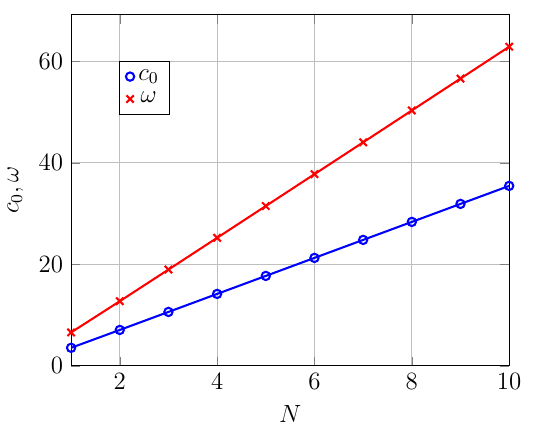}
     \end{subfigure}
     \begin{subfigure}[b]{0.495\textwidth}
         \centering
         \includegraphics[width=\textwidth]{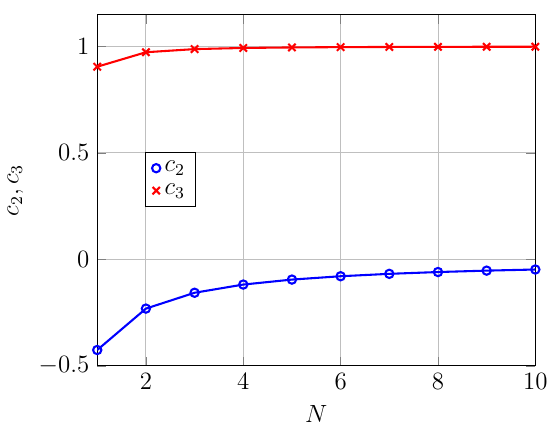}
     \end{subfigure}
    \caption{Left: Plots of $c_0$ and $\omega$ versus $N$ for $p=0$. Right: Plots of $c_2$ and $c_3$ versus $N$ for $p=0$.}
    \label{fig:ai_p0}
\end{figure}

The corresponding components $\Spar$ and $\Sperp$ of the extremizers are then computed using \eqref{srrapmp} and \eqref{sttapmp}, respectively. Plots of $\Spar$ and $\Sperp$ versus $r$ for the first four extremizers are shown in Figure \ref{fig:p0}, which reveals that the frequency of oscillation of $\Spar$ and $\Sperp$ increases with $N$, an observation in line with the left panel of Figure \ref{fig:ai_p0}. Furthermore, we notice that the amplitudes of $\Spar$ and $\Sperp$ decrease with $r$, in agreement with \eqref{Ath}$_1$ and \eqref{Aperp}$_1$, respectively. Finally, we find from \eqref{Ath}$_1$ that the leading term $\Apar^l$ of the power series expansion of $\Apar$ about $\omega=\infty$ is inversely proportional to $\omega^2$:
\be
\Apar^l(r)=\frac{c_0}{\omega^2r^2}.
\label{Aparl}
\ee
In conjunction with the observation from the left panel of Figure \ref{fig:ai_p0} that $\omega$ and $c_0$ scale linearly with $N$, \eqref{Aparl} implies that $\Apar$ scales with $1/N$ for large $N$. Similarly, the leading term $\Aperp^l$ of the power series expansion of $\Aperp$ about $\omega=\infty$ is inversely proportional to $\omega$:
\be
\Aperp^l(r)=\frac{c_0}{2\omega r}.
\ee
Thus, $\Aperp$ scales with $1/N^0$ for large $N$. These observations regarding the scalings of $\Apar$ and $\Aperp$ with $N$ are in agreement with Figure \ref{fig:p0}.
\begin{figure}[t!]
    \centering
    \begin{subfigure}[b]{0.46\textwidth}
         \centering
         \includegraphics[width=\textwidth]{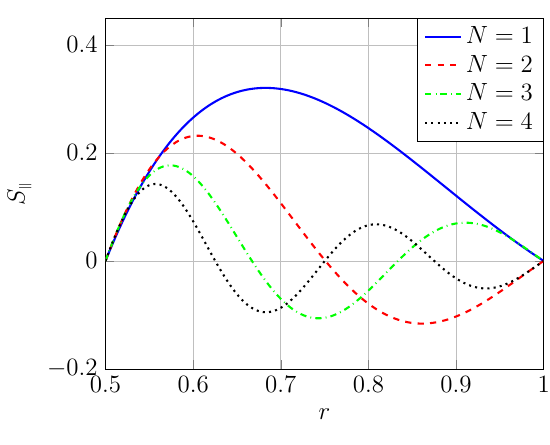}
     \end{subfigure}
     \hspace{5mm}
     \begin{subfigure}[b]{0.464\textwidth}
         \centering
         \includegraphics[width=\textwidth]{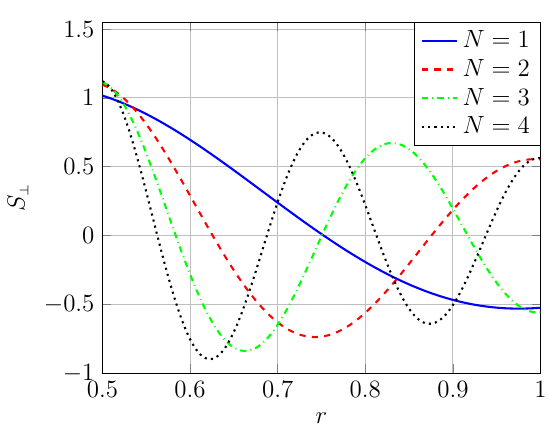}
     \end{subfigure}
    \caption{Plots of $\Spar$ versus $r$ (left) and $\Sperp$ versus $r$ (right) corresponding to the first four extremizers for $p=0$.}
    \label{fig:p0}
\end{figure}

\subsection{Analytical solutions for the extremizers for \boldmath{$0<p\leq 5$}}
Having obtained the solution for $p=0$, the solutions for the other viable values of $p$, namely $0<p\leq 5$, can be determined by setting up differential equations for $c_i$, $i=0,1,2,3,4$, and $\omega$. This is done by substituting \eqref{srrgen} in \eqref{bvp_spar}$_{2\text{--}4}$ and differentiating the resulting equations, and \eqref{anorm}, with respect to $p$,
yielding a system of equations of the form
\be
Mv=b,
\ee
where the $6\times6$ matrix $M$ and the $6\times1$ matrix $b$ depend on $c_i$, $i=0,1,2,3,4$, and $\omega$, and
\be
v=\Bigg[\frac{\text{d}c_0}{\text{d}p} \quad \frac{\text{d}c_1}{\text{d}p} \quad \frac{\text{d}c_2}{\text{d}p} \quad \frac{\text{d}c_3}{\text{d}p} \quad \frac{\text{d}c_4}{\text{d}p}\quad \frac{\text{d}\omega}{\text{d}p} \Bigg]^{\trans}.
\ee
The ensuing differential equations
\be
v=M^{-1}b
\label{ensue}
\ee
are numerically integrated, using the Matlab routine {\tt ode45}, from $p=0$ to $p=5$ to obtain $c_i$, $i=0,1,2,3,4$, and $\omega$. The initial conditions for \eqref{ensue} are taken from Subsection \eqref{sec:p0}. The solutions are then substituted in \eqref{srrgen} to obtain $\Spar$, which, in turn, is substituted in \eqref{bvp_sph}$_1$ to obtain $\Sperp$.

We plot $\omega$, after dividing it by $N$, versus $p$ in Figure \ref{fig:omega_p>0} for $N$ from $1$ through $4$. The corresponding $\Spar$ and $\Sperp$ are plotted against $r$ in Figure \ref{fig:p0to5} for six representative values of $p: 0,1,2,3,4$, and $5$. We notice from these figures that there is little change in $\omega$ and the stress profiles across the viable range of $p$, and, in fact, across the entire semi-infinite strip of viable parameter choices depicted in Figure \ref{fig:pk}. Since by \eqref{lambda2E} and \eqref{omegasqrt}, $\omega=\sqrt{\lambda}=\sqrt{2E}$, the value of $E$ corresponding to extremizers with fixed $N$ also changes little across the viable region.
\begin{figure}[t!]
    \centering
         \includegraphics[width=0.5\textwidth]{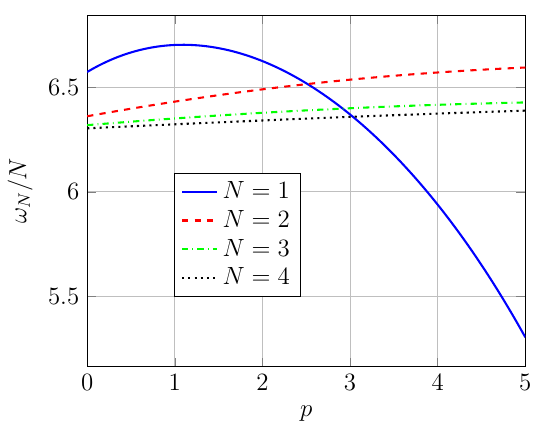}
    \caption{Plot of $\omega_N/N$ versus $p$ for $N$ from $1$ through $4$.}
    \label{fig:omega_p>0}
\end{figure}
\begin{figure}[t!]
    \centering
    \includegraphics[scale=0.61]{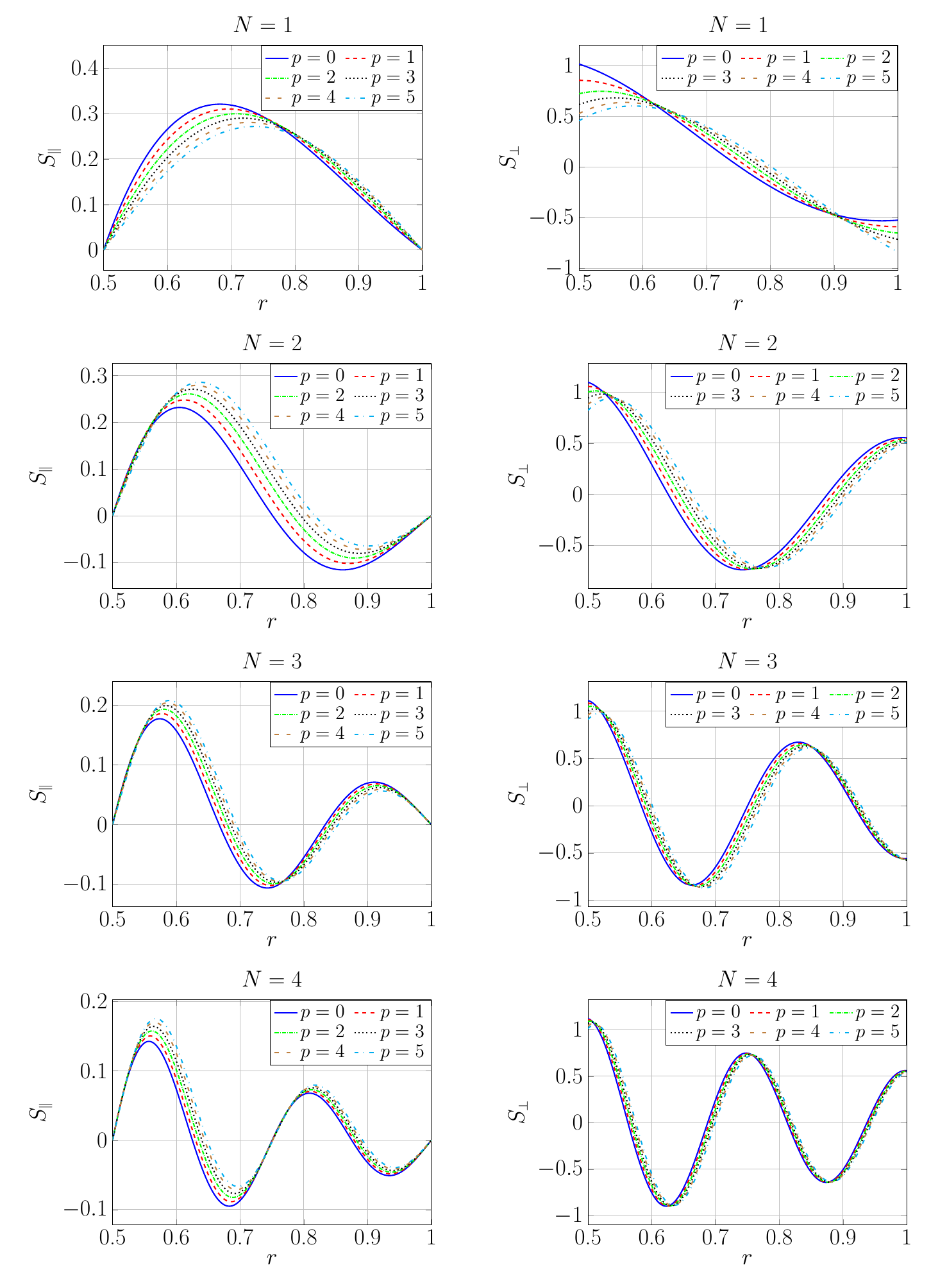}
    \caption{Plots of $\Spar$ versus $r$ and $\Sperp$ versus $r$ corresponding to the first four extremizers for $p=0,1,2,3,4,$ and $5$.}
    \label{fig:p0to5}
\end{figure}

\subsection{Analytical solutions for the Lagrange multiplier field}
From \eqref{EL_sph}$_2$, \eqref{bvp_sph}$_1$, and \eqref{srrgen}, it follows that the Lagrange multiplier field $\mu$ is given by
\be
\mu(r)=-\frac{c_0}{2\omega r^2}\Big(c_1-\frac{2c_4r^3}{r_o^3}-(1-\beta)((c_3+c_2\omega r)\cos{\omega r}-(c_2-c_3\omega r)\sin{\omega}r)\Big)
\label{murgen}
\ee
or, equivalently,
\be
\mu(r)=-\frac{c_0}{2\omega r^2}\Big(c_1-\frac{2c_4r^3}{r_o^3}+(1-\beta)A_{\mu}(r)\sin(\omega r+\theta_{\mu}(r))\Big),
\label{murshort}
\ee
where $A_{\mu}$ and $\theta_{\mu}$ are defined by
\be
A_{\mu}(r)=\sqrt{(1+\omega^2r^2)(c_2^2+c_3^2)} \qquad \text{and} \qquad \theta_{\mu}(r)=\sin^{-1}\Big(\frac{c_3+c_2\omega r}{A_{\mu}(r)}\Big).
\label{Amu}
\ee
Thus, $\mu$ depends on the functional parameters through $p=\beta+\gamma$ and $1-\beta$.
Consequently, unlike $\Spar$, $\Sperp$, and $\lambda$, it varies with $k$. 

From \eqref{murshort}, two parameter regimes of special functional forms of $\mu$ are immediately identified. First, recall from Subsection \ref{sec:p0} that at $p=0$, $c_1=c_4=0$ and, hence, that
\be
\mu(r)=-\frac{c_0(1-\beta) A_{\mu}(r)\sin(\omega r+\theta_{\mu}(r))}{2\omega r^2}.
\label{murshort2}
\ee 
So, at $p=0$, $\mu$ is harmonic, with radially varying amplitude and phase. In contrast, the second special form of $\mu$, obtained for $\beta=1$, is purely algebraic:
\be
\mu(r)=-\frac{c_0}{2\omega r^2}\Big(c_1-\frac{2c_4r^3}{r_o^3}\Big).
\ee
These two special parameter regimes are depicted in Figure \ref{fig:mur}. It is evident that at the intersection point of these two regimes, achieved at $\beta=1$ and $\gamma=-1$, $\mu$ vanishes identically. We will explain this phenomenon in detail in Subsection \ref{sec:Helmholtz}.
\begin{figure}[t!]
    \centering
         \includegraphics[width=0.65\textwidth]{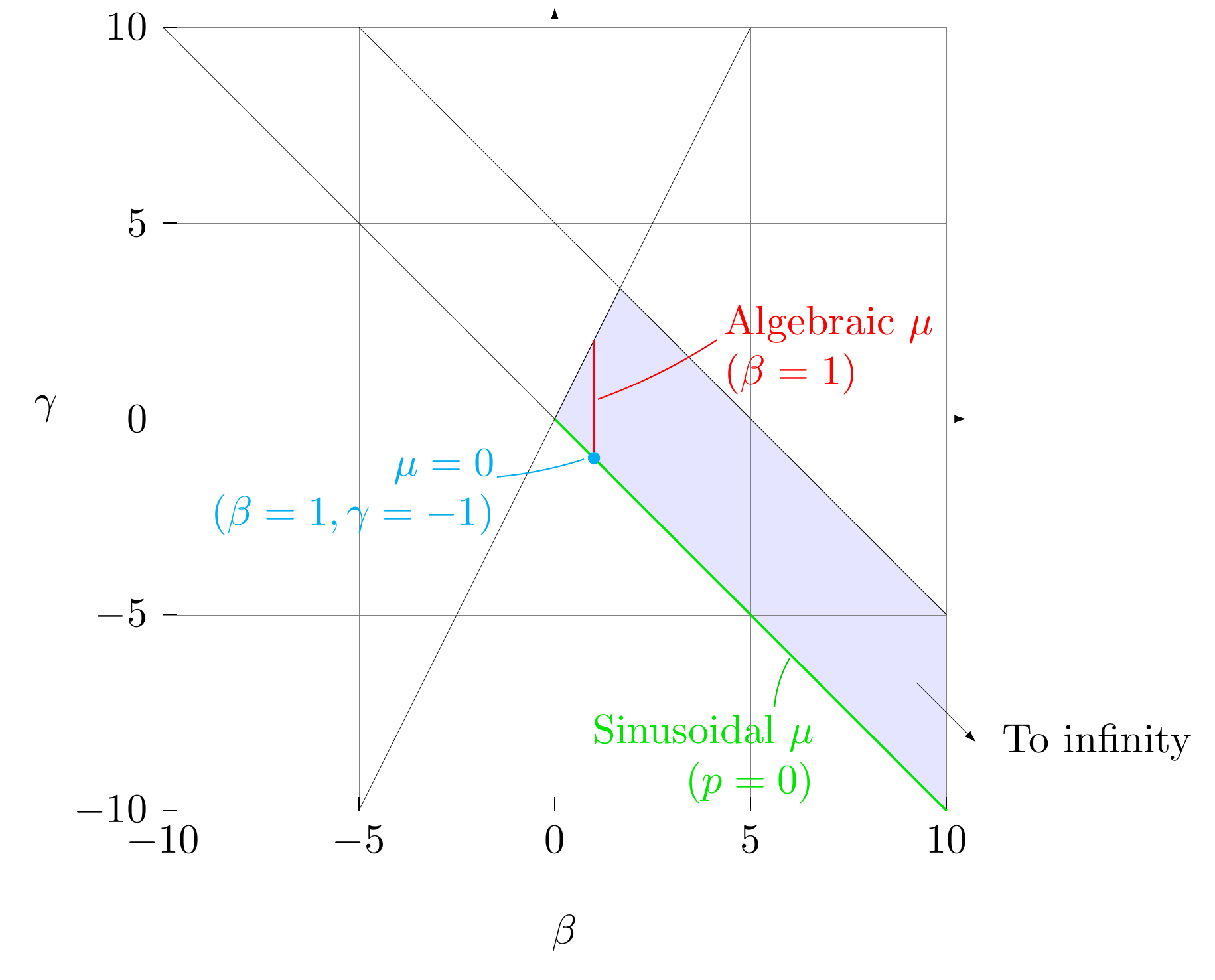}
    \caption{Parameter regimes for purely sinusoidal (green) and purely algebraic (red) $\mu$. At their intersection (cyan dot), $\mu$ vanishes.}
    \label{fig:mur}
\end{figure} 

To clarify the dependence of $\mu$ upon $k$, we invoke \eqref{pkcoords}, \eqref{EL_sph}$_2$, and \eqref{omegasqrt} to write
\be
\mu=kf_1+f_2,
\ee
where $f_1$ and $f_2$, defined by
\be
\left.
\ba
f_1(r)&=-\frac{2\Spar'(r)}{3}-\frac{r\Spar''(r)}{6} \qquad \text{and}
\\[4pt]
f_2(r)&=pf_1(r)+\frac{r\Spar''(r)}{2}+r\Sperp''(r)+\Spar'(r)+2\Sperp'(r)+\omega^2 r \Sperp(r),
\ea\mskip3mu\right\}
\label{f1f2a}
\ee
vary with $p$ but not with $k$. Thus, for a given $ k\gg\|f_2\|/\|f_1\|$, where $\|\cdot\|$ denotes an appropriate functional norm, $\mu\approx kf_1$. With reference to \eqref{bvp_sph}$_1$, \eqref{srrgen}, \eqref{omegasqrt}, and \eqref{Amu}, $f_1$ and $f_2$ can be expressed as
\be
f_1(r)=\frac{c_0 A_{\mu}(r)\sin(\omega r + \theta_{\mu}(r))}{6\omega r^2} \qquad \text{and} \qquad f_2(r)=-\frac{c_0}{\omega r^2}\Big(\frac{c_1}{2}-\frac{c_4r^3}{r_o^3}\Big)+(p-3)f_1(r).
\label{f1f2}
\ee
Since $f_1$ is harmonic, we conclude that for large $k$, $\mu$ exhibits oscillatory behaviour. 

We plot the density plots of $\mu$ for $0.5\le r\le 1$ and $0\le p\le 5$ corresponding to $N=1$ and $N=2$ in Figures \ref{fig:levsurmur1} and \ref{fig:levsurmur2}, respectively. The density plots are created for the representative choices $k=0, 3, 10$, and $100$ of $k$. In both the figures, we notice that at $k=0$ and $p=3$, which by \eqref{pkcoords} corresponds to the choice $\beta=1$, $\mu$ is purely algebraic. Similarly, we notice that at $k=3$ and $p=0$, which corresponds to the choices $\beta=1$ and $\gamma=-1$, $\mu$ vanishes. Furthermore, we notice that the plots corresponding to $k=10$ and $k=100$ have similar relative distributions of $\mu$ with respect to $r$ and $p$; however, the values attained by $\mu$ for $k=100$ are larger in magnitude in comparison to those corresponding to $k=10$. Moreover, the distributions corresponding to $k=100$ are nearly harmonic. As expected, these observations are in line with the foregoing discussion. 
\begin{figure}[t!]
    \centering
    \begin{subfigure}[b]{0.4\textwidth}
         \centering
         \includegraphics[width=\textwidth]{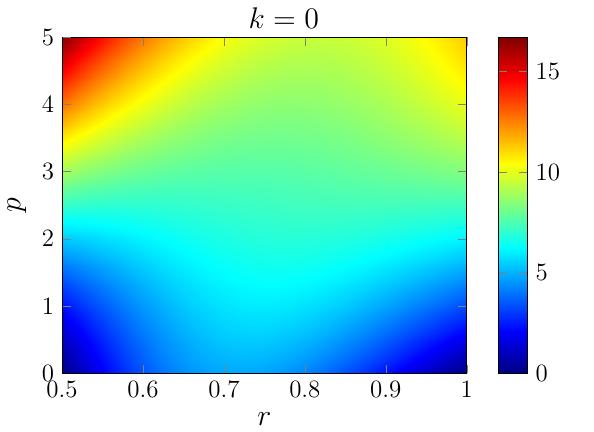}
     \end{subfigure}
     \begin{subfigure}[b]{0.4\textwidth}
         \centering
         \includegraphics[width=\textwidth]{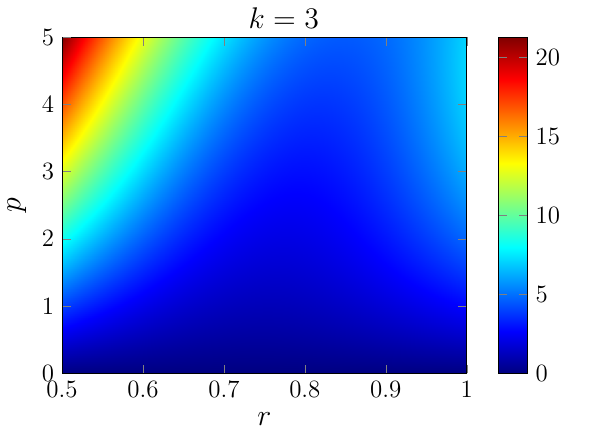}
     \end{subfigure}
    \begin{subfigure}[b]{0.4\textwidth}
         \centering
         \includegraphics[width=\textwidth]{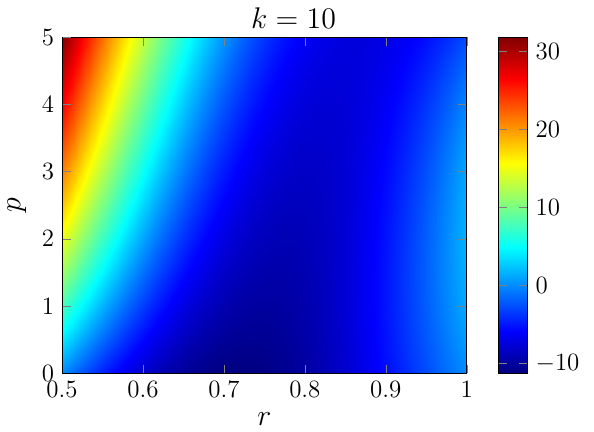}
     \end{subfigure}
     \begin{subfigure}[b]{0.4\textwidth}
         \centering
         \includegraphics[width=\textwidth]{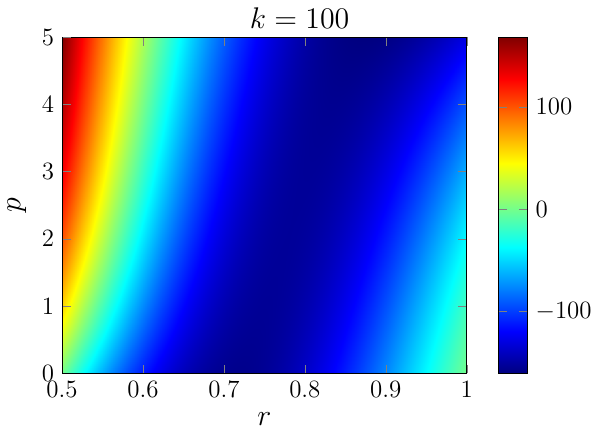}
     \end{subfigure}
    \caption{Density plots of $\mu$ for $0.5\le r\le 1$ and $0\le p\le 5$ corresponding to $N=1$, with $k=0,3,10$, and $100$.}
    \label{fig:levsurmur1}
    \begin{subfigure}[b]{0.4\textwidth}
         \centering
         \includegraphics[width=\textwidth]{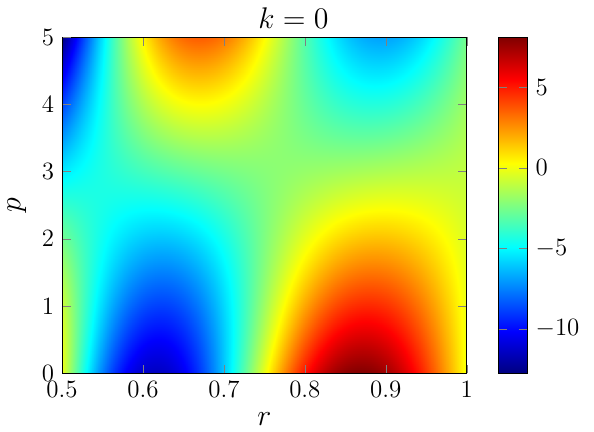}
     \end{subfigure}
     \begin{subfigure}[b]{0.4\textwidth}
         \centering
         \includegraphics[width=\textwidth]{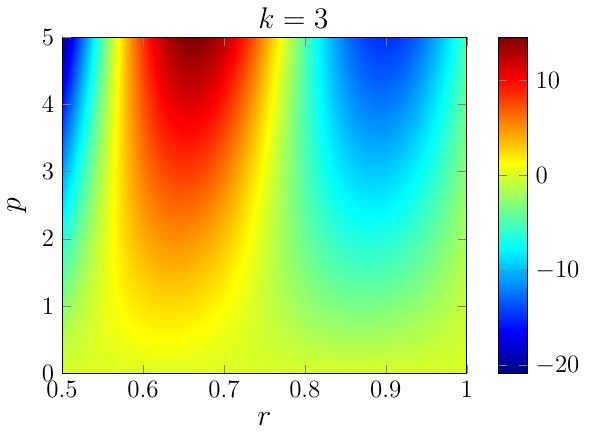}
     \end{subfigure}
    \begin{subfigure}[b]{0.4\textwidth}
         \centering
         \includegraphics[width=\textwidth]{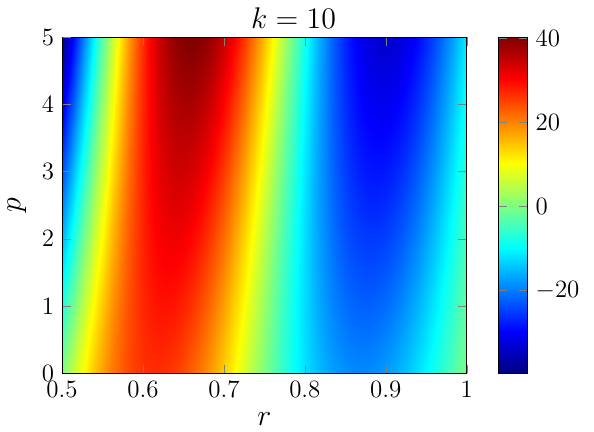}
     \end{subfigure}
     \begin{subfigure}[b]{0.4\textwidth}
         \centering
         \includegraphics[width=\textwidth]{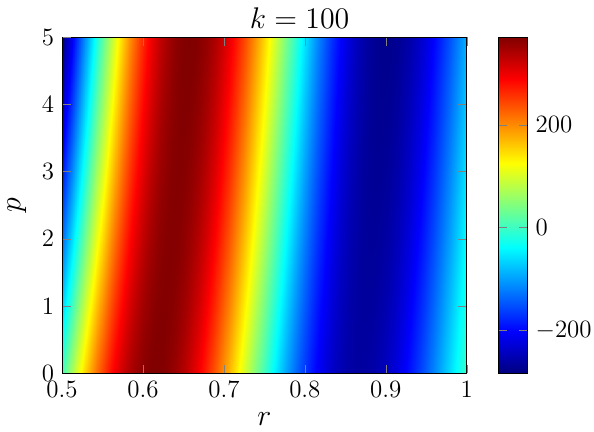}
     \end{subfigure}
    \caption{Density plots of $\mu$ for $0.5\le r\le 1$ and $0\le p\le 5$ corresponding to $N=2$, with $k=0,3,10$, and $100$.}
    \label{fig:levsurmur2}
\end{figure}

\subsection{Vanishing of the Lagrange multiplier field. Helmholtz equation} \label{sec:Helmholtz}
In the previous subsection, we observed that at $\beta=1$ and $\gamma=-1$, $\mu$ or, equivalently, $\bfmu$ vanishes.  It therefore follows from \eqref{bvp_nondim_iso}$_1$ and the property $\lambda>0$ that, in addition to \eqref{bvp_nondim_iso}$_{\text{2--5}}$, the extremizer $\bfS$ satisfies the Helmholtz equation
\be
\Delta\bfS+\lambda\bfS=\bf0 \qquad \text{in} \qquad \calR.
\ee
In view of the importance of the Helmholtz equation in physical sciences, we address the following question in this subsection: Does $\bfmu$ vanish at $\beta=1$ and $\gamma=-1$ in {\em absence} of spherical symmetry?

Recall that the Lagrange multiplier field $\bfmu$ serves to enforce the constraint $\xd\bfS=\bf0$. Granted this interpretation, the vanishing of $\bfmu$ must mean that an extremizer of the variational problem posed in Section \ref{sec:bvp}, with $\osix{A}$ therein being homogeneous and isotropic, must also be an extremizer upon lifting the foregoing constraint. In other words, the vanishing of $\bfmu$ implies that an extremizer of Section \ref{sec:bvp} is also an extremizer of $E$, as defined by \eqref{E_nondim_iso}, over the set
\be
\calS_u=\left\{\bfT:\bfT\in\Sym,\mskip2mu\bfT\bfn|_{\partial\calR}=\bfzero,\mskip2mu\int_{\calR}\bfT\cdot\mfC[\bfT]\dv<\infty,\mskip2mu E(\bfT)<\infty\right\},
\label{setSu}
\ee
subject to the condition
\be
\int_{\calR}\bfT\cdot\mfC[\bfT]\dv=1,
\label{normu}
\ee
in which $\mfC$ is the fourth-order identity tensor. From this viewpoint, to explain the vanishing of $\bfmu$, we must show that a solution of our original extremization problem of Section \ref{sec:bvp}, henceforth referred to as the `constrained extremization problem' for easy reference, is also a solution to the problem of extremizing $E$ in \eqref{setSu} subject to \eqref{normu}, henceforth referred to as the `unconstrained extremization problem'. 

To show that the solution to the constrained extremization problem is also a solution to the unconstrained extremization problem, we begin by observing, from \eqref{E_nondim_iso}, that for $\beta=1$ and $\gamma=-1$, $E$ takes the form
\be
E(\bfS)=\frac{1}{2}\int_{\calR} (|\xg\bfS|^2-\xg\bfS\cdot(\xg\bfS)^{\trans})\dv.
\label{E_special}
\ee
From \eqref{gtgt} and \eqref{gtgtt}, \eqref{E_special} reduces to 
\be
E(\bfS)=\frac{3}{4}\int_{\calR}|\skk(\xg\bfS)|^2\dv.
\label{Esk}
\ee
With reference to \eqref{Esk}, the first-order stationarity condition of $E$ can be stated as follows: If $\bfS$ is a stationary point of $E$, then for all admissible variations $\bfS_v$ of $\bfS$,
\be
\int_{\calR}\skk(\xg\bfS)\cdot\skk(\xg\bfS_v)\dv=0.
\label{equiv}
\ee
Thus, a solution $\bfS$ of the constrained extremization problem has, by \eqref{equiv}, the following, equivalent, characterization:
\be
\left.
\ba
\bfS&\in\calS,
\\[4pt]
\int_{\calR}\skk(\xg\bfS)\cdot\skk(\xg\bfS_c)\dv&=0, 
\\[4pt]
\int_{\calR}|\bfS|^2\dv&=1, 
\ea
\mskip3mu\right\}
\label{equiv1}
\ee
for all variations $\bfS_c$ satisfying
\be
\left.
\ba
\bfS_c&\in\calS,
\\[4pt]
\int_{\calR}\bfS\cdot\bfS_c\dv&=0,
\ea
\mskip3mu\right\}
\label{Sccond}
\ee
where the set $\calS$ was defined in \eqref{setS}. Similarly, a solution $\bfS$ of the unconstrained extremization problem has, by \eqref{equiv} and \eqref{normu}, the following, equivalent, characterization:
\be
\left.
\ba
\bfS&\in\calS_u,
\\[4pt]
\int_{\calR}\skk(\xg\bfS)\cdot\skk(\xg\bfS_u)\dv&=0, 
\\[4pt]
\int_{\calR}|\bfS|^2\dv&=1, 
\ea
\mskip3mu\right\}
\label{equiv2}
\ee
for all variations $\bfS_u$ satisfying
\be
\left.
\ba
\bfS_u&\in\calS_u,
\\[4pt]
\int_{\calR}\bfS\cdot\bfS_u\dv&=0,
\ea
\mskip3mu\right\}
\label{Sun}
\ee
where the set $\calS_u$ is given by \eqref{setSu}. We next aim to show that a spherically symmetric solution $\bfS$ of \eqref{equiv1} necessarily satisfies \eqref{equiv2}. To that end, we must demonstrate that \eqref{equiv2} holds for such a solution and all variations $\bfS_u$ satisfying \eqref{Sun}. Since, by \eqref{setS} and \eqref{setSu}, $\calS$ is a subset of $\calS_u$, it follows that any choice of $\bfS$ that satisfies \eqref{equiv1} trivially satisfies \eqref{equiv2}$_{1,3}$. Hence, we must confirm, in particular, that $\bfS$ satisfies \eqref{equiv2}$_{2}$.

Since $\calR$ is simply-connected and has a smooth boundary $\partial\calR$, by Proposition \ref{corollary} presented in \ref{app:decomposition}, any admissible variation $\bfS_u$ can be decomposed as
\be
\bfS_u=\tilde{\bfS}_c+\sym\xg\bfv,
\label{Sudec}
\ee
where the symmetric second-order tensor field $\tilde{\bfS}_c$ satisfies 
\be
\left.
\ba
\xd\tilde{\bfS}_c=\bf0\qquad&\text{on}\qquad\calR,
\\[4pt]
\tilde{\bfS}_c\bfn=\bf0\qquad&\text{on}\qquad\partial\calR,
\ea
\mskip3mu\right\}
\label{Sccond1}
\ee
and $\bfv$ is a differentiable vector field on $\calR$. Notice, by the divergence theorem, that
\be
\int_{\calR}\tilde{\bfS}_c\cdot(\sym\xg\bfv)\dv=\int_{\partial\calR}(\tilde{\bfS}_c\bfn)\cdot\bfv \da-\int_{\calR}\xd\tilde{\bfS}_c\cdot\bfv\dv=0,
\ee
from which it is evident that $\tilde{\bfS}_c$ and $\sym\xg\bfv$ are orthogonal with respect to the $L^2(\calR)$ scalar product. 

We next show that $\tilde{\bfS}_c$ satisfies \eqref{Sccond}. Since $\bfS_u$ belongs to the set $\calS_u$ defined by \eqref{setSu}, it is square-integrable and $E(\bfS_u)<\infty$. By \eqref{Sudec}, the same must be true for $\tilde{\bfS_c}$. Thus, 
\be
\int_{\calR}|\tilde{\bfS}_c|^2\dv<\infty \qquad \text{and} \qquad E(\tilde{\bfS}_c)<\infty.
\label{Scinf}
\ee
Furthermore, by the equivalence of the norm induced by $E$ and the $H^1(\calR)$ norm on $\calS$ as established in Proposition \ref{hatEH1}, it follows that $\bfv$ must be twice differentiable. The latter attribute of $\bfv$ will soon prove useful. Next, invoking \eqref{Sun}$_2$ and \eqref{Sudec}, and using the divergence theorem, the property \eqref{equiv1}$_1$, and the definition \eqref{setS}, we find that
\be
\ba
0=\int_{\calR} \bfS\cdot\bfS_u\dv&=\int_{\calR} \bfS\cdot\tilde{\bfS}_c\dv\,+\int_{\calR} \bfS\cdot(\sym\xg\bfv)\dv\\
&=\int_{\calR} \bfS\cdot\tilde{\bfS}_c\dv\,+\int_{\partial\calR}(\bfS\bfn)\cdot\bfv\da-\int_{\calR}\xd\bfS\cdot\bfv\dv\\
&=\int_{\calR} \bfS\cdot\tilde{\bfS}_c\dv.
\ea
\label{Sccond2}
\ee
Considering \eqref{Sccond1}, \eqref{Scinf}, and \eqref{Sccond2}, in conjunction with the definition \eqref{setS} of $\calS$, we confirm that $\tilde\bfS_c$ fulfills all conditions specified in \eqref{Sccond2} and, thus, qualifies as an admissible variation of the constrained extremization problem. Consequently, by \eqref{equiv1}$_2$, it follows that
\be
\int_{\calR}\skk(\xg\bfS)\cdot\skk(\xg\tilde{\bfS}_c)\dv=0.
\label{sksc}
\ee

We next consider the integral
\be
I=\int_{\calR}\skk(\xg\bfS)\cdot\skk(\xg(\sym\xg\bfv))\dv.
\label{I}
\ee
Since $\bfS$ is spherically symmetric, we notice from \eqref{GradS} that 
\begin{multline}
\skk(\xg\bfS)=\Big(\Sperp'-\frac{\Spar-\Sperp}{r}\Big)(2\bfe_{\varphi}\otimes\bfe_{\varphi}\otimes\bfe_r+2\bfe_{\vartheta}\otimes\bfe_{\vartheta}\otimes\bfe_r-\bfe_{\vartheta}\otimes\bfe_r\otimes\bfe_{\varphi}-\bfe_{r}\otimes\bfe_{\varphi}\otimes\bfe_{\varphi}\\
-\bfe_{\vartheta}\otimes\bfe_r\otimes\bfe_{\vartheta}-\bfe_{r}\otimes\bfe_{\vartheta}\otimes\bfe_{\vartheta}).
\label{skgs}
\end{multline}
On invoking the representation
\be
\bfv(r,\varphi,\vartheta)=v_r(r,\varphi,\vartheta)\,\bfe_r(\varphi,\vartheta)+v_{\varphi}(r,\varphi,\vartheta)\,\bfe_{\varphi}(\varphi,\vartheta)+v_{\vartheta}(r,\varphi,\vartheta)\,\bfe_{\vartheta}(\varphi,\vartheta)
\ee
of $\bfv$ relative to the spherical basis, it can be shown that
\be
\ba
\skk(\xg(\sym\xg\bfv))=\,&h_1(2\bfe_{\varphi}\otimes\bfe_{\varphi}\otimes\bfe_r-\bfe_{r}\otimes\bfe_{\varphi}\otimes\bfe_{\varphi}-\bfe_{\varphi}\otimes\bfe_r\otimes\bfe_{\varphi})
\\
+&h_2(2\bfe_{\vartheta}\otimes\bfe_{\vartheta}\otimes\bfe_r-\bfe_{r}\otimes\bfe_{\vartheta}\otimes\bfe_{\vartheta}-\bfe_{\vartheta}\otimes\bfe_r\otimes\bfe_{\vartheta})
\\
+&~\text{other components},
\ea
\label{gradv}
\ee
where `other components' refers to those directions in the spherical basis of a third-order tensor that are absent in the expression \eqref{skgs} of $\skk(\xg\bfS)$. Moreover, the auxiliary functions $h_1$ and $h_2$ of $v_r$, $v_{\varphi}$, and $v_{\vartheta}$ in \eqref{gradv} are given by
\be
\left.\ba
h_1(v_r,v_{\varphi},v_{\vartheta})&=\frac{1}{6r^2}\frac{\partial}{\partial\varphi}\Big(v_{\varphi}-\frac{\partial v_r}{\partial\varphi}+r\frac{\partial v_{\varphi}}{\partial r}\Big),
\\[4pt]
h_2(v_r,v_{\varphi},v_{\vartheta})&=\frac{\cot{\varphi}}{6r^2}\Big(v_{\varphi}-\frac{\partial v_r}{\partial\varphi}+r\frac{\partial v_{\varphi}}{\partial r}\Big)
+\frac{1}{6r^2\sin{\varphi}}\frac{\partial}{\partial\vartheta}\Big(v_{\vartheta}-\frac{1}{\sin{\varphi}}\frac{\partial v_r}{\partial\vartheta}+r\frac{\partial v_{\vartheta}}{\partial r}\Big).
\ea\mskip3mu\right\}
\label{h1h2a}
\ee
Since the elements of the spherical basis of a third-order tensor are mutually orthogonal, by \eqref{skgs}, the `other components' mentioned in \eqref{gradv} do not contribute in the scalar product of $\skk(\xg\bfS)$ and $\skk(\xg(\sym\xg\bfv))$ and, hence, to the integral $I$ in \eqref{I}.

For further analysis, we introduce two additional auxiliary functions $f$ and $g$ of $v_r$, $v_{\varphi}$, and $v_{\vartheta}$, given by
\be
\left.\ba
&f(v_r,v_{\varphi},v_{\vartheta})=v_{\varphi}-\frac{\partial v_r}{\partial\varphi}+r\frac{\partial v_{\varphi}}{\partial r},
\\[4pt]
&g(v_r,v_{\varphi},v_{\vartheta})=v_{\vartheta}-\frac{1}{\sin{\varphi}}\frac{\partial v_r}{\partial\vartheta}+r\frac{\partial v_{\vartheta}}{\partial r},
\ea\mskip3mu\right\}
\ee
so that, by \eqref{h1h2a}, 
\be
h_1=\frac{1}{6r^2}\frac{\partial f}{\partial\varphi}, \qquad h_2=\frac{f\cot{\varphi}}{6r^2}+\frac{1}{6r^2\sin{\varphi}}\frac{\partial g}{\partial\vartheta}.
\label{h1h2}
\ee
From \eqref{gradv} and \eqref{skgs}, it follows that
\be
\skk(\xg\bfS)\cdot\skk(\xg(\sym\xg\bfv))=6\Big(\Sperp'-\frac{\Spar-\Sperp}{r}\Big)(h_1+h_2)
\ee
or, from \eqref{h1h2}, that
\be
\skk(\xg\bfS)\cdot\skk(\xg(\sym\xg\bfv))=\frac{1}{r^2}\Big(\Sperp'-\frac{\Spar-\Sperp}{r}\Big)\Big(\frac{\partial f}{\partial\varphi}+f\cot{\varphi}+\frac{1}{\sin{\varphi}}\frac{\partial g}{\partial\vartheta}\Big).
\ee
Thus, by \eqref{I},
\begin{align}
I&=\int_{r_i}^{r_o}\frac{1}{r^2}\Big(\Sperp'-\frac{\Spar-\Sperp}{r}\Big)\int_{0}^{\pi}\int_0^{2\pi} \Big(\frac{\partial f}{\partial\varphi}+f\cot{\varphi}+\frac{1}{\sin{\varphi}}\frac{\partial g}{\partial\vartheta}\Big)r^2\sin{\varphi}\dvar\dvarphi\dr
\notag\\[4pt]
&=\int_{r_i}^{r_o}\Big(\Sperp'-\frac{\Spar-\Sperp}{r}\Big)\Big(\int_{0}^{\pi}\dvarphi\int_0^{2\pi}\frac{\partial g}{\partial\vartheta}\dvar+\int_{0}^{2\pi}\dvar\int_0^{\pi}\Big(\sin{\varphi}\frac{\partial f}{\partial\varphi}+\cos{\varphi}f\Big)\dvarphi\Big)\dr
\notag\\[4pt]
&=\int_{r_i}^{r_o}\Big(\Sperp'-\frac{\Spar-\Sperp}{r}\Big)\Big(\int_{0}^{\pi}\dvarphi\int_0^{2\pi}\frac{\partial g}{\partial\vartheta}\dvar+\int_{0}^{2\pi}\dvar\int_0^{\pi}\frac{\partial}{\partial\varphi}(f\sin{\varphi})\dvarphi\Big)\dr.
\end{align}
Since $\bfv$ is twice differentiable as noted earlier, $g$ is continuous at $\vartheta=0$, and it follows that
\be
\int_0^{2\pi}\frac{\partial g}{\partial\vartheta}\dvar=0.
\ee
Again, since $\bfv$ is twice differentiable, $f$ is finite at $\varphi=0$ and $\varphi=\pi$, and we find that
\be
\int_0^{\pi}\frac{\partial}{\partial\varphi}(f\sin{\varphi})\dvarphi=0.
\ee
Thus,
\be
I=\int_{\calR}\skk(\xg\bfS)\cdot\skk(\xg(\sym\xg\bfv))\dv=0.
\label{skgv}
\ee
Finally, combining \eqref{Sudec}, \eqref{sksc}, and \eqref{skgv} yields
\be
\int_{\calR}\skk(\xg\bfS)\cdot\skk(\xg\bfS_u)\dv=0.
\ee
It therefore follows from \eqref{equiv2} that $\bfS$ is a solution to the unconstrained extremization problem. Since it is a solution to both the constrained and unconstrained extremization problems, we conclude that $\bfmu=\bf0$.

It is important to note that both the spherical symmetry of $\bfS$ and the particular form of $E$, namely \eqref{E_special}, play decidedly critical roles towards the vanishing of $\bfmu$. Owing to the former, $\bfmu$ vanishes only on a single point in the viable parameter space; owing to the latter, $\bfmu$ may not vanish at $\beta=1$ and $\gamma=-1$ in the absence of spherical symmetry. 

We also note that in the absence of the constraint $\xd\bfS=\bf0$, \eqref{cond} does not hold and, consequently, that the dependence of $E$ on the parameters does not reduce to $p=\beta+\gamma$. Thus, the solution of the unconstrained extremization problem depends, in general, on both $p$ and $k$. This explains why $\bfmu$, as the reaction of the constraint $\xd\bfS=\bf0$, depends on both $p$ and $k$ while $\bfS$ and the reaction $\lambda$ to the normalization condition \eqref{normu} do not depend on $k$.

\section{Illustrative examples of fitting of spherically symmetric residual stress fields} \label{sec:examples}
As practical illustrations of the results established in Subsections \ref{sec:span} and \ref{sec:H1span}, where it was shown that the extremizers $\bfS_N$, $N\in\mathbb{N}$, serve as bases for $\bar{\calS}$ in the $L^2(\calR)$ norm and for $\calS$ in the $H^1(\calR)$ norm, respectively, we next use the sequences derived in Section \ref{sec:spherical} to fit the residual stress fields in (i) a spherical shell subjected to a non-uniform temperature distribution and (ii) a shrink-fitted spherical shell.

Before attending to these examples, we observe that if a stress field $\bfSigma$ belongs to $\calS$ then, since $\bfS_N, N\in\mathbb{N}$, span $\cal{S}$, there is a sequence $(b_N)_{N\in\mathbb{N}}$ such that
\be
\bfSigma=\sum_{N=1}^{\infty}b_N\bfS_N.
\label{ai}
\ee
Recalling from Subsection \ref{sec:ortho} the property that $\bfS_N, N\in\mathbb{N}$, are mutually orthonormal with respect to the scalar product $\langle\cdot,\cdot\rangle_{\sbbC}$ and noting that for $\mfC$ equal to the fourth-order identity tensor, $\langle\cdot,\cdot\rangle_{\sbbC}$ is simply the $L^2(\calR)$ scalar product, we conclude that the $\bfS_N$, $N\in\mathbb{N}$, computed in Section \ref{sec:spherical} are mutually orthonormal with respect to the $L^2(\calR)$ scalar product. It therefore follows from \eqref{ai} that
\be
b_N=\int_{\calR}\bfSigma\cdot\bfS_N\dv, \qquad N\in\mathbb{N}.
\label{aN}
\ee
The $n$-term approximation $\bfSigma_n$ to $\bfSigma$ is given by
\be
\bfSigma_n=\sum_{N=1}^n b_N \bfS_N,
\label{nterm}
\ee
with $b_N$, $N\in\mathbb{N}_n$, as defined in \eqref{aN}, where the set $\mathbb{N}_{n}$ is defined through \eqref{NN0}. We notice that the orthogonality of $\bfS_N$, $N\in\mathbb{N}$, in the $L^2(\calR)$ scalar product implies that the coefficient $b_N$ for a given $N\in\mathbb{N}_{n}$ does not change if $n$ is increased. Hence, for instance, given the $n$-term approximation $\bfSigma_n$, if we wish to compute the $(n+1)$-term approximation $\bfSigma_{n+1}$, we do not need to compute each coefficient $b_N,N\in\mathbb{N}_{n+1}$, afresh; we merely need to compute $b_{n+1}$. 

To quantify the closeness of $\bfSigma_n$ to $\bfSigma$, we define the relative $L^2(\calR)$ approximation error corresponding to the $n$-term approximation as
\be
e_{n_{L^2}}=\Big(\frac{\int_{\calR}|\bfSigma-\bfSigma_n|^2\dv}{\int_{\calR}|\bfSigma|^2\dv}\Big)^{\frac{1}{2}}.
\label{en}
\ee
Similarly, we define the relative $H^1(\calR)$ approximation error as
\be
e_{n_{H^1}}=\Big(\frac{\int_{\calR}(|\bfSigma-\bfSigma_n|^2+|\xg(\bfSigma-\bfSigma_n)|^2)\dv}{\int_{\calR}(|\bfSigma|^2+|\xg\bfSigma|^2)\dv}\Big)^{\frac{1}{2}}.
\label{ebarn}
\ee
Recall that the fields $\bfSigma$ and $\bfSigma_n$, the operator $\xg$, and the volume measure $\dv$ are dimensionless and, thus, $e_{n_{H^1}}$ in \eqref{ebarn} is consistently dimensionless. 

\subsection{Thermoelastic residual stress field}\label{sec:therm}
Consider an unloaded spherical shell occupying the region
\be
\calR=\{\bfx\in\calE:r_i<r=|\bfx-\bfo|<r_o\}.
\ee
Assume that the shell is made from a homogeneous, isotropic, linear elastic material with dimensionless bulk and shear moduli $\kappa$ and $\mu$, respectively, and dimensionless coefficient of thermal expansion $\alpha$\footnote{\label{constants}If $\kappa_p$, $\mu_p$, and $\alpha_p$ denote the corresponding physical constants, respectively, then the corresponding dimensionless constants are defined through $\kappa=\kappa_p/\varsigma$, $\mu=\mu_p/\varsigma$, and $\alpha=\alpha_pT_0$, where $\varsigma$ was introduced in \eqref{normalization} and $T_0$ is a reference physical temperature.}. The shell is subjected to a spherically symmetric dimensionless temperature-difference field $T$ satisfying $\Delta T \neq 0$, generating a spherically symmetric residual stress field of the form $\bfSigma=\mathit{\Sigma}_{\scriptscriptstyle{\parallel}}\bfe_r+\mathit{\Sigma}_{\scriptscriptstyle{\perp}}\bfPi$, with $\bfPi$ as defined in \eqref{PiOm}$_1$, in the shell. If $
\bfu=u\bfe_r$ denotes the resulting spherically symmetric displacement field, then $\bfSigma$ satisfies, in addition to \eqref{bvp_sph}$_{1,2}$, the constitutive equation
\be
\bfSigma=\kappa (\tr\bfE)\idem+2\mu\bfE^0,
\label{thermoS}
\ee
where $\bfE$ and $\bfE^0$ are given by
\be
\bfE=\sym\xg\bfu-\alpha T\idem \qquad \text{and} \qquad \bfE^0=\bfE-\frac{\tr\bfE}{3}\idem.
\label{udef}
\ee
Using the expression \eqref{Gradmu} for the gradient of a spherically symmetric vector field in \eqref{thermoS} yields
\be
\left.\ba
&{u(r)=-\frac{r((3\kappa-2\mu)\mathit{\Sigma}_{\scriptscriptstyle{\parallel}}(r)-(3\kappa+4\mu)\mathit{\Sigma}_{\scriptscriptstyle{\perp}}(r)-18\alpha\kappa\mu T(r))}{18\kappa\mu},}
\\[4pt]
&{u'(r)=\phantom{-}\frac{(3\kappa+\mu)\mathit{\Sigma}_{\scriptscriptstyle{\parallel}}(r)-(3\kappa-2\mu)\mathit{\Sigma}_{\scriptscriptstyle{\perp}}(r)+9\alpha\kappa\mu T(r)}{9\kappa\mu}.}
\ea\mskip3mu\right\}
\label{utherm}
\ee
Differentiating \eqref{utherm}$_1$, subtracting \eqref{utherm}$_2$ from the resulting equation, and eliminating $\mathit{\Sigma}_{\scriptscriptstyle{\perp}}$ using the equilibrium equation \eqref{bvp_sph}$_1$, we find that
\be
r\mathit{\Sigma}_{\scriptscriptstyle{\parallel}}''+4\mathit{\Sigma}_{\scriptscriptstyle{\parallel}}'=-\frac{36\alpha\kappa\mu T'}{3\kappa+4\mu}.
\label{Spartherm1}
\ee
We take $T$ to be of the simple linear form
\be
T(r)=\frac{cr}{r_o},
\label{dimlesstemp}
\ee
where $c\neq 0$ is a prescribed dimensionless constant. Notice that for $T$ defined in \eqref{dimlesstemp},
\be
\Delta T=\frac{2c}{r_or}\neq 0 \qquad\text{on}\qquad r_i<r<r_o.
\ee
Consequently, the resulting thermal strain $\alpha T\idem$ is incompatible, thus inducing a residual stress $\bfSigma$ in the shell. We find, by integrating \eqref{Spartherm1} and using the boundary conditions \eqref{bvp_sph}$_3$, that 
\be
\mathit{\Sigma}_{\scriptscriptstyle{\parallel}}(r)=\frac{9c\alpha\kappa\mu(r-r_i)(r_o-r)(r_i^2r_o^2+rr_ir_o(r_i+r_o)+r^2(r_i^2+r_ir_o+r_o^2))}{(3\kappa+4\mu)r_or^3(r_i^2+r_ir_o+r_o^2)}
\label{Spartherm}
\ee
and, by \eqref{bvp_sph}$_1$, that
\be
\mathit{\Sigma}_{\scriptscriptstyle{\perp}}(r)=\frac{9c\alpha\kappa\mu(r_i^3r_o^3-3r^4(r_i^2+r_ir_o+r_o^2)+2r^3(r_i^3+r_i^2r_o+r_ir_o^2+r_o^3))}{2(3\kappa+4\mu)r_or^3(r_i^2+r_ir_o+r_o^2)}.
\label{Sperptherm}
\ee

For illustration, we choose a spherical shell made of aluminium, with physical (i.e., dimensional) inner and outer radii \qty{0.5}{\metre} and \qty{1}{\metre}, respectively. The physical bulk and shear moduli of aluminium are taken from the literature \citep{samsonov} to be \qty{7.6e10}{\newton\per\metre\squared} and \qty{2.7e10}{\newton\per\metre\squared}, respectively. Furthermore, we choose the constant $\varsigma$ introduced in $\eqref{normalization}$ to be equal to the shear modulus, that is, $\varsigma=$\qty{2.7e10}{\newton\per\metre\squared}. Then, by \eqref{change}$_{1}$, \eqref{Ldef}, and Footnote \ref{constants}, with the choice $k_0=6/(7\pi)$ in \eqref{Ldef}, the inner and outer dimensionless radii and the dimensionless bulk and shear moduli become 
\be
r_i=0.5, \qquad r_o=1, \qquad \kappa=2.8, \qquad \text{and} \qquad \mu=1,
\label{rkmu}
\ee
respectively. 

Finally, to obtain the dimensionless thermal expansion coefficient $\alpha$, we take the characteristic reference temperature $T_0$ to be the melting point $\qty{832}{K}$ of aluminium \citep{samsonov}  and the physical thermal expansion coefficient $\alpha_p$ to be \qty{2.1e-5}{\per\kelvin} \citep{samsonov} from the literature. It then follows from Footnote \ref{constants} that
\be
\alpha=1.75\times10^{-2}.
\label{tec}
\ee 
Furthermore, we choose the physical temperature-difference field $T_p$ to be
\be
T_p(r)=\frac{T_0}{9}\frac{r}{r_o},
\label{barT}
\ee
whereby the dimensionless temperature-difference field is given by
\be
T(r)=\frac{\bar{T}(r)}{T_0}=\frac{1}{9}\frac{r}{r_o}.
\label{dimlesstemp2}
\ee
Note that the choice \eqref{barT} ensures that \eqref{dimlesstemp2} is consistent with \eqref{dimlesstemp}, as long as $c$ is given by
\be
c=\frac{1}{9}.
\label{cinT}
\ee 
We also note that the choice \eqref{barT} implies that if the unstressed shell is at the room temperature, say, 300 K, then the temperature in the residually stressed shell varies linearly from 346.2 K on the inner boundary to 392.4 K on the outer boundary of the shell.

Using \eqref{rkmu}, \eqref{tec}, and \eqref{cinT} in \eqref{Spartherm} and \eqref{Sperptherm}, we obtain $\bfSigma$.  We plot $\bfSigma$ and its three-term approximations, obtained using \eqref{nterm} and \eqref{aN}, against $r$ in Figure \ref{fig:sEx1}. The approximations correspond to three choices of the parameter combinations: $\beta=0,\gamma=0$; $\beta=1,\gamma=0$; and $\beta=1,\gamma=1$. We find that the approximations for all three parameter combinations are reasonably good, considering they are obtained by taking only $n=3$ terms in the expansion \eqref{nterm}. For $n\ge 4$, the approximations are visually indistinguishable from $\bfSigma$, which is why we have chosen to plot three-term approximations in Figure \ref{fig:sEx1}. 
\begin{figure}[t!]
    \centering
    \begin{subfigure}[b]{0.49\textwidth}
         \centering
         \includegraphics[width=\textwidth]{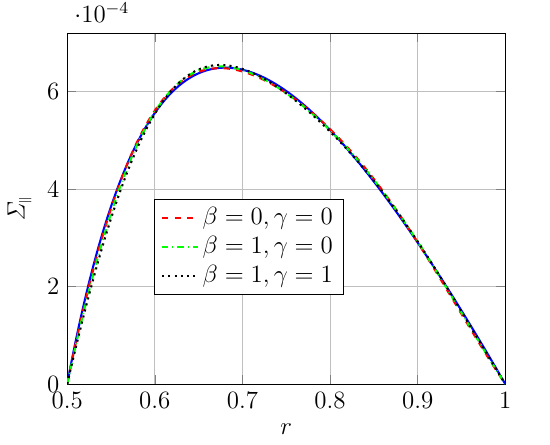}
     \end{subfigure}
     \begin{subfigure}[b]{0.49\textwidth}
         \centering
         \includegraphics[width=\textwidth]{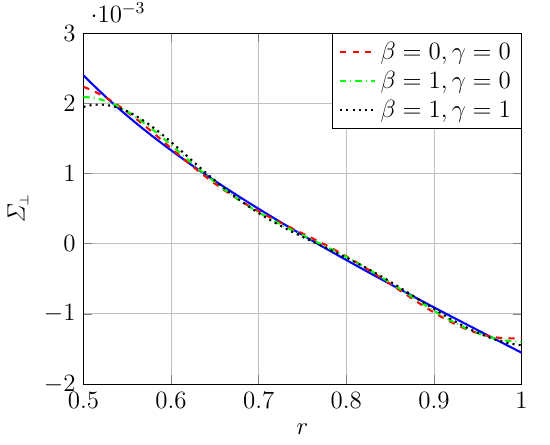}
     \end{subfigure}
    \caption{Fitting of the thermoelastic residual stress field with components given by \eqref{Spartherm} and \eqref{Sperptherm}. Left: Plots of $\mathit{\Sigma}_{\scriptscriptstyle{\parallel}}$ (solid blue curve) and its three-term approximations versus $r$. Right: Plots of $\mathit{\Sigma}_{\scriptscriptstyle{\perp}}$ (solid blue curve) and its three-term approximations versus $r$. The approximations correspond to the three parameter choices reported in the legends.}
    \label{fig:sEx1}
\end{figure}

The approximation errors $e_{n_{L^2}}$ and $e_{n_{H^1}}$, introduced in \eqref{en} and \eqref{ebarn}, respectively, are plotted against $n$ on a log-log scale in Figure \ref{fig:thermoen}. We find, by measuring the slope of the curves at large $n$, that $e_{n_{L^2}}$ and $e_{n_{H^1}}$ decay approximately as $1/n^{1.5}$ and $1/n^{0.5}$, respectively, for all three parameter combinations. To understand the reason for these decay rates, we introduce $s_{n_{L^2}}=e_{n_{L^2}}^2$, and notice, with reference to \eqref{bvp_nondim_iso}$_5$, \eqref{nterm}, and \eqref{en}, that $s_{n_{L^2}}$ can be written as 
\be
s_{n_{L^2}}=1-\frac{\sum_{i=1}^n b_i^2}{\|\bfSigma\|_{L^2(\calR)}^2},
\ee
where the coefficients $b_i,i\in\mathbb{N}_n,$ are given by \eqref{aN}. Accordingly, for sufficiently large values of $n$, 
\be
\frac{\text{d}s_{n_{L^2}}}{\text{d}n}\approx s_{n+1_{L^2}}-s_{n_{L^2}}=-\frac{b_{n+1}^2}{\|\bfSigma\|_{L^2(\calR)}^2}.
\label{den}
\ee
Therefore, the corresponding rate of decay of $s_{n_{L^2}}$ depends on that of the magnitude of the coefficient $b_N$. To determine the rate of decay of $|b_N|$, we plot $|b_N|, N\in\mathbb{N}_{100}$, versus $N$ on a log-log scale in Figure \ref{fig:aN}. In that figure, we see that there are two subsequences within the sequence $(|b_N|)_{N\in\mathbb{N}_{100}}$, both of which decay approximately as $1/N^2$ for large $N$. Plugging this in \eqref{den} and integrating, we find that $s_{n_{L^2}}$ decays approximately as $1/n^3$ for large $n$. Accordingly, $ e_{n_{L^2}}=s_{n_{L^2}}^{1/2}$ decays approximately as $n^{-3/2}$ for large $n$, as revealed by the left panel of Figure \ref{fig:thermoen}. Using similar heuristics, we find that for large $n$,
\be
\frac{\text{d}s_{n_{H^1}}}{\text{d}n}\approx s_{n+1_{H^1}}-s_{n_{H^1}}=-k b_{n+1},
\ee
where $s_{n_{H^1}}=e_{n_{H^1}}^2$ and $k$ is a constant. It follows that $e_{n_{H^1}}$ decays approximately as $n^{-1/2}$ for large $n$, an observation in agreement with that from the right panel of Figure \ref{fig:thermoen}.
\begin{figure}[t!]
    \centering
         \includegraphics[width=\textwidth]{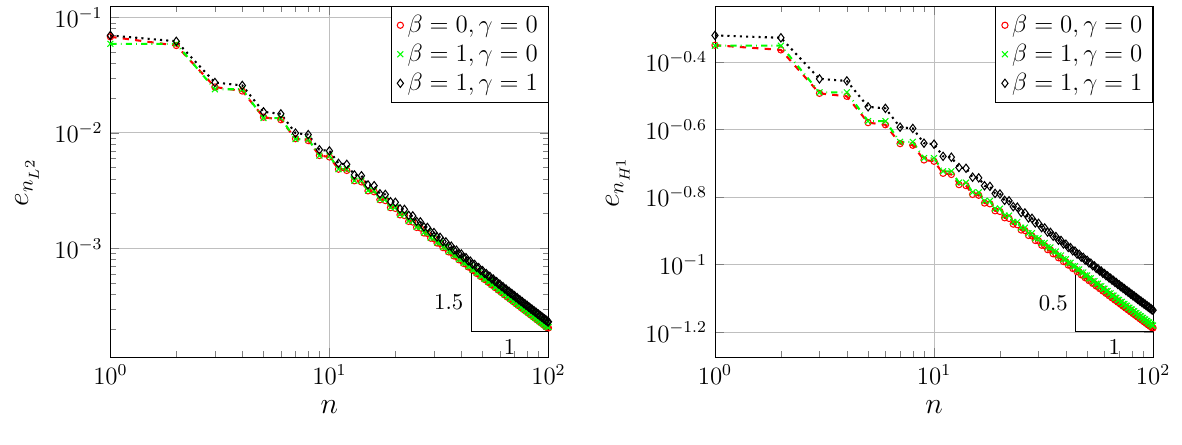}
    \caption{Plots of the approximation errors $e_{n_{L^2}}$ (left) and $e_{n_{H^1}}$ (right panel) versus $n$ corresponding to the thermoelastic residual stress field with components given by \eqref{Spartherm} and \eqref{Sperptherm}. The approximation errors correspond to the three parameter choices reported in the legends.}
    \label{fig:thermoen}
\end{figure}

An interesting feature borne out by Figure \ref{fig:thermoen} is the step-like decays of the error measures $e_{n_{L^2}}$ and $e_{n_{H^1}}$. Such a decay indicates that the even-numbered extremizers, namely $\bfS_2,\bfS_4,\bfS_6,\dots$, have relatively small contributions in the approximation of $\bfSigma$. This is confirmed by Figure \ref{fig:aN}, which reveals that while, on average, both odd- and even-numbered coefficients decay as $1/N^2$ for large $N$, the latter are relatively smaller, magnitude-wise. This contrast is particularly evident for the parameter choice $\beta=1$ and $\gamma=0$, where the even-numbered coefficients have already decayed to values close to the machine precision by about $N=20$ and seem to saturate after that.
\begin{figure}[t!]
    \centering
         \includegraphics[width=0.5\textwidth]{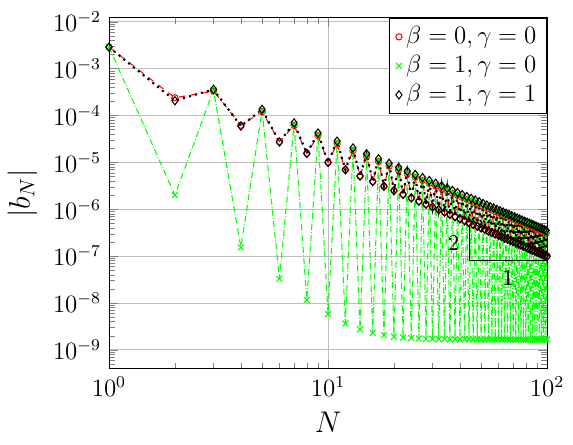}
    \caption{Plot of $|b_N|$, with the coefficient $b_N$ as given by \eqref{aN}, versus $N$ for the thermoelastic residual stress field corresponding to the three parameter choices reported in the legend.}
    \label{fig:aN}
\end{figure} 

The relatively larger contribution of the odd-numbered extremizers can be explained by noting, for instance, that $\mathit{\Sigma}_{\scriptscriptstyle{\parallel}}$ (left panel of Figure \ref{fig:sEx1}) is relatively symmetric about $r=(r_i+r_o)/2$, and so are the radial components of the odd-numbered extremizers (left panels of the first and third rows in Figure \ref{fig:p0to5}). In contrast, the radial components of the even-numbered extremizers (left panels of the second and fourth rows in Figure \ref{fig:p0to5}) are relatively anti-symmetric with respect to $r=0.75$ and, thus, make relatively smaller contributions. Similarly, $\mathit{\Sigma}_{\scriptscriptstyle{\perp}}$ (right panel of Figure \ref{fig:sEx1}) is relatively anti-symmetric about $ r=0.75$, and so are the azimuthal components of the odd-numbered extremizers (right panels of the first and third rows in Figure \ref{fig:p0to5}). In contrast, the azimuthal components of the even-numbered extremizers (right panels of the second and fourth rows in Figure \ref{fig:p0to5}) are relatively symmetric with respect to $r=0.75$.

\subsection{Shrink-fit residual stress field}\label{sec:shrink}
For the second example, we consider the residual stress field in a shell obtained by shrink-fitting an inner spherical shell of inner radius $r_i$ and notional outer radius $r_m$ and an outer spherical shell of notional inner radius $r_m$ and outer radius $r_o$. The two constituent shells have a small radial interference of $\delta$ and are taken to be made of the same homogeneous material obeying linear isotropic elasticity with dimensionless bulk and shear moduli $\kappa$ and $\mu$, respectively. 

Following the same procedure as that in the previous example, we find that the radial stress in the inner shell $\mathit{\Sigma}_{\scriptscriptstyle{\parallel}i}$ satisfies
\be
\left.\ba
r\mathit{\Sigma}_{\scriptscriptstyle{\parallel}i}''+4\mathit{\Sigma}_{\scriptscriptstyle{\parallel}i}'=0 \qquad &\text{on} \qquad r_i< r< r_m,
\\[4pt]
\mathit{\Sigma}_{\scriptscriptstyle{\parallel}i}=0 \qquad &\text{on} \qquad r=r_i,
\\[4pt]
\mathit{\Sigma}_{\scriptscriptstyle{\parallel}i}=-p_0 \qquad &\text{on} \qquad r=r_m,
\ea\mskip3mu\right\}
\label{srri}
\ee
where the pressure $p_0$ at the notional interface $r_m$ is unknown as yet. Similarly, the radial stress in the outer shell $\mathit{\Sigma}_{\scriptscriptstyle{\parallel}o}$ satisfies
\be
\left.\ba
r\mathit{\Sigma}_{\scriptscriptstyle{\parallel}o}''+4\mathit{\Sigma}_{\scriptscriptstyle{\parallel}o}'=0 \qquad &\text{on} \qquad r_m< r< r_o,
\\[4pt]
\mathit{\Sigma}_{\scriptscriptstyle{\parallel}o}=-p_0 \qquad &\text{on} \qquad r=r_m,
\\[4pt]
\mathit{\Sigma}_{\scriptscriptstyle{\parallel}o}=0 \qquad &\text{on} \qquad r=r_o.
\ea\mskip3mu\right\}
\label{srro}
\ee
The quantities $\mathit{\Sigma}_{\scriptscriptstyle{\parallel}i}$ and $\mathit{\Sigma}_{\scriptscriptstyle{\parallel}o}$ that enter \eqref{srri}$_{2,3}$ and \eqref{srro}$_{2,3}$, respectively, are found in terms of $p_0$ from \eqref{srri} and \eqref{srro} to be
\be
\mathit{\Sigma}_{\scriptscriptstyle{\parallel}i}(r)=-p_0\Big(\frac{1}{r_i^3}-\frac{1}{r_m^3}\Big)^{-1}\Big(\frac{1}{r_i^3}-\frac{1}{r^3}\Big)
\qquad\text{and}\qquad
\mathit{\Sigma}_{\scriptscriptstyle{\parallel}o}(r)=-p_0\Big(\frac{1}{r_m^3}-\frac{1}{r_o^3}\Big)^{-1}\Big(\frac{1}{r^3}-\frac{1}{r_o^3}\Big).
\label{srrio}
\ee
Finally, $p_0$ is found by using the compatibility condition 
\be
u_o(r_m)-u_i(r_m)=\delta,
\label{pdelta}
\ee
where $u_i$ and $u_o$ denote the displacement fields of the inner and outer shells, respectively. By the same calculations that led to \eqref{utherm}, it follows that
\be
u_i(r)=-\frac{p_0r_m^3(3r_i^3\kappa+4r^3\mu)}{12\kappa\mu r^2(r_m^3-r_i^3)},\qquad
u_o(r)=\frac{p_0r_m^3(3r_o^3\kappa+4r^3\mu)}{12\kappa\mu r^2(r_o^3-r_m^3)}.
\label{uiuo}
\ee
Evaluating $u_i$ and $u_o$ at the interface by substituting $r=r_m$ in \eqref{uiuo} and using \eqref{pdelta}, we find that $p_0$ is given by
\be
p_0=\frac{12\mskip1mu\delta\kappa\mu(r_o^3-r_m^3)(r_m^3-r_i^3)}{(3\kappa+4\mu)r_m^4(r_o^3-r_i^3)}.
\label{pint}
\ee
We then substitute \eqref{pint} in \eqref{srrio} to obtain $\mathit{\Sigma}_{\scriptscriptstyle{\parallel}i}$ and $\mathit{\Sigma}_{\scriptscriptstyle{\parallel}o}$. Thereafter, by \eqref{bvp_sph}$_1$, it follows that
\be
\mathit{\Sigma}_{\scriptscriptstyle{\perp}i}(r)=-p_0\Big(\frac{1}{r_i^3}-\frac{1}{r_m^3}\Big)^{-1}\Big(\frac{1}{r_i^3}+\frac{1}{2r^3}\Big),\qquad
\mathit{\Sigma}_{\scriptscriptstyle{\perp}o}(r)=p_0\Big(\frac{1}{r_m^3}-\frac{1}{r_o^3}\Big)^{-1}\Big(\frac{1}{r_o^3}+\frac{1}{2r^3}\Big),
\label{sttio}
\ee
with $p_0$ as given by \eqref{pint}. Notice, from \eqref{sttio}, that $\mathit{\Sigma}_{\scriptscriptstyle{\perp}}$ has a jump
\be
[\![ \mathit{\Sigma}_{\scriptscriptstyle{\perp}} ]\!]=\mathit{\Sigma}_{\scriptscriptstyle{\perp}o}(r_m)-\mathit{\Sigma}_{\scriptscriptstyle{\perp}i}(r_m)=\frac{3\mskip1mu p\mskip1mu r_m^3(r_o^3-r_i^3)}{2(r_o^3-r_m^3)(r_m^3-r_i^3)}
\ee
across the interface, with the consequence that $\bfSigma$ is discontinuous. Therefore, $\bfSigma$ does not belong to $\calS$ since its gradient is not square-integrable and, thus, $E(\bfSigma)$ is unbounded. However, it belongs to $\bar{\calS}$, which, as shown in Subsection \ref{sec:span}, is spanned by $\bfS_N,N\in\mathbb{N}$, in the $L^2(\calR)$ norm. We illustrate below that $\bfSigma$ belongs to the span of $\bfS_N,N\in\mathbb{N}$.

We choose the physical dimensions and material properties of the shrink-fitted system to be the same as those in the example in Subsection \ref{sec:therm}. Accordingly, $r_i=0.5$, $r_o=1$, $\kappa=3$, and $\mu=1$. Furthermore, we choose the radius $r_m$ of the interface to be the average of $r_i$ and $r_o$; so, $r_m=0.75$. Finally, we choose the radial interference $\delta$ to be one percent of the outer radius; hence, $\delta=0.01$. Plugging these values in \eqref{srrio} and \eqref{sttio}, with $p_0$ computed using \eqref{pint}, we obtain $\bfSigma$.

The plots of $\bfSigma$ and its ten- and hundred-term approximations obtained using \eqref{nterm} and \eqref{aN} against $r$ are shown in the top and bottom panels, respectively, of Figure \ref{fig:sEx2}. The approximations correspond to three choices of the parameter combinations: $\beta=0,\gamma=0$; $\beta=1,\gamma=0$; and $\beta=1,\gamma=1$. It is evident that the hundred-term approximations are closer to $\bfSigma$ than the ten-term approximations for all three parameter combinations; indeed, per the theory, upon taking infinitely many terms, the approximations must exactly match with $\bfSigma$. The convergence rate with respect to the number of terms $n$ in the approximation is notably slower than in the previous example, where the three-term approximations were nearly visually indistinguishable from the true stress. This is evidently because the true stress $\bfSigma$ is discontinuous in the current example but is smooth in the previous example. We also notice the expected Gibbs phenomenon in the right panels of Figure \ref{fig:sEx2}, especially in the bottom-right panel.
\begin{figure}[t!]
    \centering
         \includegraphics[width=\textwidth]{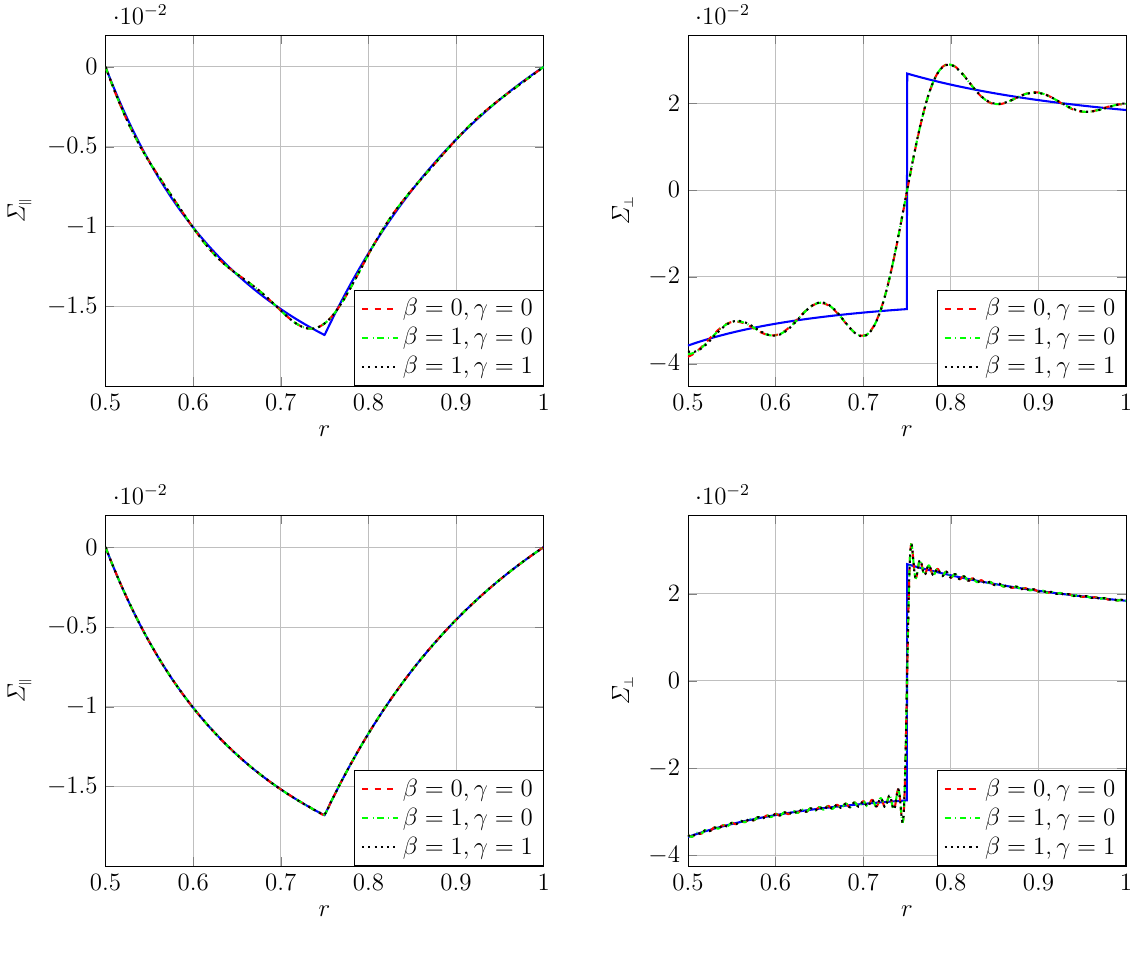}
    \caption{Fitting of the shrink-fit residual stress field with components given by \eqref{srrio} and \eqref{sttio}. Top: Plots of $\mathit{\Sigma}_{\scriptscriptstyle{\parallel}}$ and $\mathit{\Sigma}_{\scriptscriptstyle{\perp}}$ (solid blue curves) and their ten-term approximations versus $r$. Bottom: Plots of $\mathit{\Sigma}_{\scriptscriptstyle{\parallel}}$ and $\mathit{\Sigma}_{\scriptscriptstyle{\perp}}$ (solid blue curves) and their hundred-term approximations versus $r$. The approximations correspond to the three parameter choices reported in the legends.}
    \label{fig:sEx2}
\end{figure}

We plot the approximation error $e_{n_{L^2}}$ against $n$ on a log-log scale in Figure \ref{fig:shrinken}. We find, by measuring the slope of $e_{n_{L^2}}$ at large $n$, that $e_{n_{L^2}}$ decays approximately as $n^{-1/2}$ for all three parameter combinations. 
\begin{figure}[t!]
    \centering
         \includegraphics[width=0.5\textwidth]{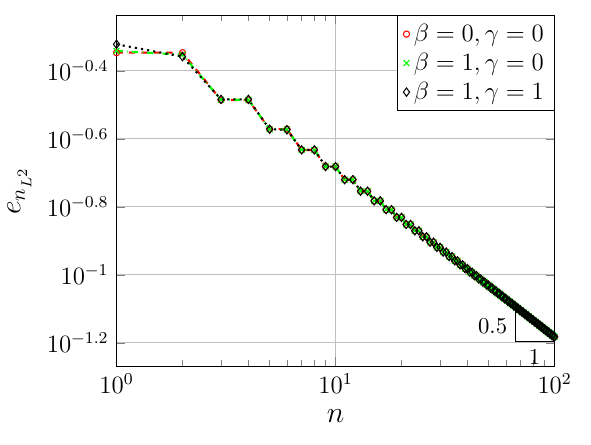}
    \caption{Plot of the approximation error $e_{n_{L^2}}$ versus $n$ corresponding to the shrink-fit residual stress field with components given by \eqref{srrio} and \eqref{sttio}. The approximation error corresponds to the three parameter choices reported in the legends.}
    \label{fig:shrinken}
\end{figure}

Reminiscent of the previous example, we again see a step-like decay in the approximation error $e_{n_{L^2}}$ in Figure \ref{fig:shrinken}, indicating that the odd-numbered modes, namely $\bfS_1,\bfS_3,\bfS_5,\dots$, contribute relatively more than the even-numbered modes. This is confirmed by plotting the magnitudes of the coefficients $b_N,N\in\mathbb{N}_{100}$, in Figure \ref{fig:aNEx2}: for all parameter choices considered, we notice that the odd-numbered coefficients have relatively greater magnitudes than those of the even-numbered coefficients. Also, for the parameter choice $\beta=1$ and $\gamma=0$, we notice that the even-numbered sequence, namely $b_2,b_4,b_6,\dots$, consists of two further subsequences of relatively different magnitudes: $b_2,b_6,b_{10},\dots$, and $b_4,b_8,b_{12},\dots$. The same is observed for the parameter choice $\beta=1$ and $\gamma=1$. Furthermore, we find that for all parameter choices considered, the odd-numbered sequences decay approximately as $1/N$ for large $N$. Similarly, the even-numbered sequences, or subsequences thereof, decay approximately as $1/N^2$ for large $N$.
\begin{figure}[t!]
    \centering
         \includegraphics[width=0.5\textwidth]{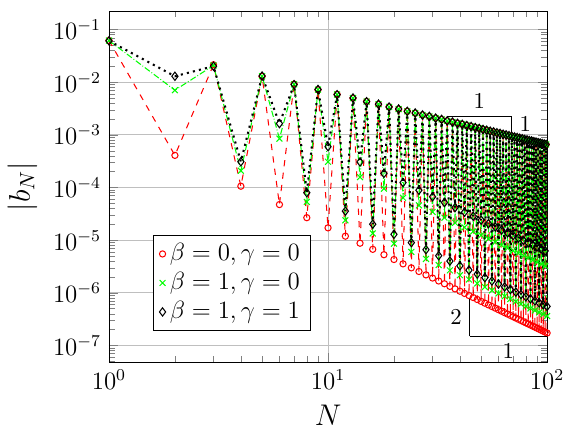}
    \caption{Plot of $|b_N|$, with the coefficient $b_N$ as given by \eqref{aN}, versus $N$ for the shrink-fit residual stress field corresponding to the three parameter choices reported in the legend.}
    \label{fig:aNEx2}
\end{figure} 
\begin{figure}[t!]
    \centering
         \includegraphics[width=1\textwidth]{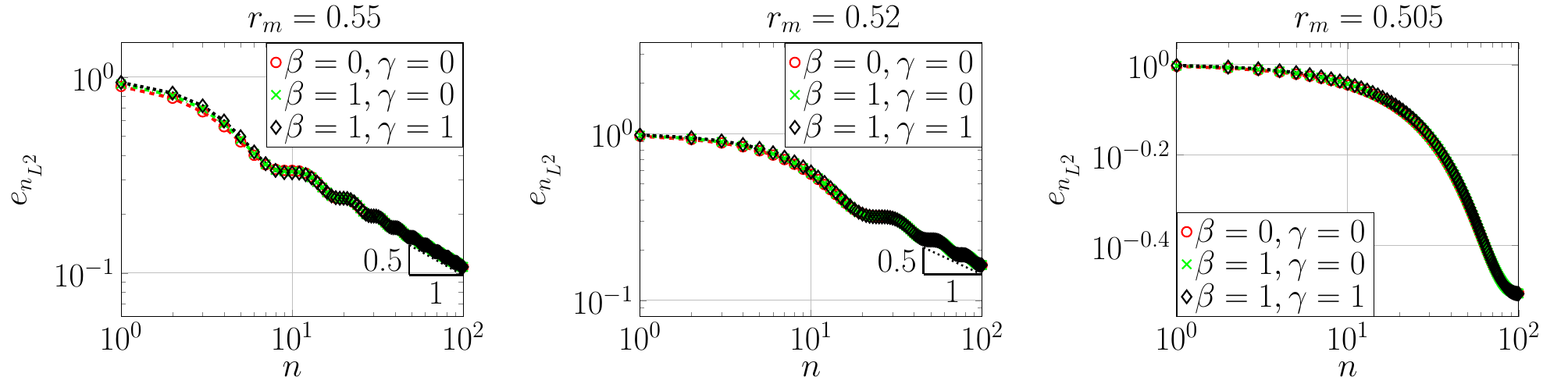}
    \caption{Plots of the approximation error $e_{n_{L^2}}$ versus $n$ for the shrink-fit residual stress fields corresponding to three representative choices of the interface radius $r_m$: $0.55$, $0.52$, and $0.505$.}
    \label{fig:rmInf}
\end{figure} 

Finally, we investigate the influence of the radius $r_m$ of the interface on the convergence properties of the sequences. We plot the approximation error $e_{n_{L^2}}$ against $n$ on a log-log scale in Figure \ref{fig:rmInf} for three other representative values of $r_m$: $0.55$, $0.52$, and $0.505$. We notice from the left panel of Figure \ref{fig:rmInf} corresponding to $r_m=0.55$ that for all three parameter choices, starting from around $n=10$, $e_{n_{L^2}}$ exhibits oscillatory behaviour about a line with slope $-0.5$; the amplitude of these oscillations reduces as $n$ increases. We thus conclude that for all three parameter choices, the approximation error $e_{n_{L^2}}$ corresponding to $r_m=0.55$ decays, on average, as $n^{-1/2}$ for large $n$. Note that this decay rate is the same as that observed for $r_m=0.75$ from Figure \ref{fig:shrinken}. We arrive at the same conclusion from the middle panel of Figure \ref{fig:rmInf} corresponding to $r_m=0.52$. However, in this case, the oscillatory behaviour of $e_{n_{L^2}}$ begins at around $n=20$. Finally, in the right panel of Figure \ref{fig:rmInf} corresponding to $r_m=0.505$, we observe that while $e_{n_{L^2}}$ decays monotonically, it does not exhibit oscillatory behavior within the considered range $1\leq n \leq 100$ of $n$. It is anticipated that even for $r_m=0.505$, $e_{n_{L^2}}$ exhibits oscillatory behaviour with shrinking amplitude, with an average decay rate of $n^{-1/2}$, past a sufficiently large value $n_{\text{osc}}>100$ of $n$. 

Both examples considered in this section reveal that in the spherically symmetric case, the extremizers corresponding to different viable choices of parameters $\beta$ and $\gamma$ exhibit similar qualitative behaviour with regard to capturing given residual stress fields. On this evidence, as stated in the introduction, there is no reason to prefer the sequence obtained by \citet{tiwari2020basis} or, in fact, any sequence of the family obtained herein. However, it might be worth examining in future studies whether certain sequences are more optimal than others toward a given objective.

\section{Conclusions} \label{sec:conclusions}
We obtained a family of sequences that span the set of all square-integrable residual stress fields defined on a given bounded three-dimensional region $\calR$ and its boundary. These sequences are obtained by extremizing the most general positive-definite, quadratic functional $E$ of the stress-gradient. This extremization is carried out over a subset, corresponding to a certain prescribed norm, of the set $\calS$, which consists of all residual stress fields corresponding to a finite $E$. Each choice of a functional with a positive-definite integrand yields a complete sequence.

The sequences exhibit several desirable properties. The elements of each sequence are mutually orthonormal in the scalar product induced by the fourth-order coefficient tensor $\mfC$ in the normalization condition and mutually orthogonal in the scalar product induced by the sixth-order coefficient tensor $\osix{A}$ in the integrand of $E$. Additionally, each sequence spans the set $\calS$ in the $H^1(\calR)$ norm, and the set $\bar{\calS}$ in the $L^2(\calR)$ norm, $\bar{\calS}$ being the $L^2(\calR)$ completion of $\calS$.

Stipulating that $\osix{A}$ be homogeneous and isotropic and $\mfC$ be proportional to the identity tensor, we found that the resulting boundary-value problem involves three independent constant parameters. Upon eliminating, without loss of generality, one of these parameters through a suitable normalization, we noticed that choices of the remaining normalized parameters which ensure that the integrand of $E$ is positive-definite belong to a semi-infinite strip in the parameter space.

Within this homogeneous, isotropic case, we next chose $\calR$ to be a sphere and found analytical solutions for the spherically symmetric sequences. Surprisingly, we found that the dependence of these sequences over the semi-infinite strip of viable parameters collapses to one of its finite edges. On one extremity of this edge, the sequences are sinusoidal functions with radially varying amplitudes and phases. Furthermore, as we transition across this edge, the sequences change only slightly, implying that the solutions possess a nearly harmonic structure across the entire strip. Surprisingly, at an interior point of the strip, the constraint $\xd\bfS=\bf0$ is satisfied trivially, with the consequence that the Euler--Lagrange equation reduces to the Helmholtz equation. 

Lastly, we approximated two standard spherically symmetric residual stress fields using three different spherically symmetric sequences, demonstrating that the stress fields indeed lie within the span of all three sequences, consistent with the theory. Moreover, we found that the three sequences exhibit similar results with regard to capturing the candidate stresses. Notably, the orthogonality property of the sequences was found to facilitate the development of a computationally efficient framework for obtaining approximations of the candidate stresses.

The generality of our results opens avenues for future exploration into the implications of different choices of the coefficient tensors $\osix{A}$ and $\mfC$. In particular, we anticipate that certain choices of $\osix{A}$ and $\mfC$ may be more optimal than others toward a given objective. Note that we can tweak $\osix{A}$ and $\mfC$ in two distinct ways: by changing their symmetry class or by altering their spatial distribution. 

For instance, to capture residual stresses primarily concentrated near the surface, an appropriate selection of $\osix{A}$ and $\mfC$ may yield sequences with similar characteristics. A typical example of boundary-confined residual stress is found in shot peening, a process that induces compressive residual stresses primarily within a boundary layer significantly narrower---typically ranging from ten to a hundred times smaller---than the overall dimensions of the specimen \citep{wang1998compressive}. Similarly, when representing residual stresses in a component crafted from a material with a particular anisotropy, like single-crystal turbine blades casted from face-centered cubic nickel superalloys \citep{arakere2001fretting} or spider silk assembled from axially oriented $\beta$-crystallites \citep{knight2002biological}, choices of $\osix{A}$ and $\mfC$ corresponding to the same symmetry class may perhaps be more optimal. Yet another example illustrating the suitability of localized and anisotropic sequences is that provided by the nanocrystalline diamond films grown on glass, which are residually stressed due to the mismatch between the coefficients of thermal expansion of diamond and glass, and which have columnar architecture.
\drop{
\section*{CRediT authorship contribution statement}
\textbf{Sankalp Tiwari}: Conceptualization, Formal analysis, Investigation, Methodology,  Writing -- original draft, Writing -- review \& editing. \textbf{Eliot Fried}: Conceptualization, Formal analysis, Investigation, Methodology,  Writing -- original draft, Writing -- review \& editing.

\section*{Declaration of competing interest}
The authors declare that they have no known competing financial interests or personal relationships that could have appeared to influence the work reported in this paper.

\section*{Data availability}
No data was used for the research described in the article.

\section*{Acknowledgements}
The authors express their gratitude for support provided by the Okinawa Institute of
Science and Technology Graduate University, funded by the Cabinet Office of the Government of Japan.
}

\appendix
\section{Proof of equivalence of the norms \boldmath{$\|\cdot\|_{\sbbA}$} and \boldmath{$\|\cdot\|_{\sbbC}$} with the \boldmath{$H^1(\calR)$} and \boldmath{$L^2(\calR)$} norms, respectively}\label{app:equiv}
\begin{proposition}\label{hatEH1}
The norm $\|\cdot\|_{\sbbA}$ is equivalent to the $H^1(\calR)$ norm on $\calS$.
\end{proposition}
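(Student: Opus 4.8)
The plan is to produce constants $c_1,c_2>0$, independent of $\bfT$, with $c_1\|\bfT\|_{H^1(\calR)}^2\le\|\bfT\|_{\sbbA}^2\le c_2\|\bfT\|_{H^1(\calR)}^2$ for every $\bfT\in\calS$. The first ingredient is a pair of uniform pointwise bounds on $\osix{A}$. By the symmetry \eqref{Asymmetries}$_2$, $\osix{A}(\bfx)$ acts only on the finite-dimensional space $\calW$ of third-order tensors carrying the minor symmetry inherited from symmetric second-order tensors, and $\xg\bfT(\bfx)\in\calW$ whenever $\bfT$ is symmetric. Since $\osix{A}$ is continuous on the compact set $\bar{\calR}$ and positive-definite on $\calW$ in the sense of \eqref{Aposdef}$_1$, the function $(\bfx,\boldsymbol{\calM})\mapsto\boldsymbol{\calM}\cdot\osix{A}(\bfx)[\boldsymbol{\calM}]$ attains a positive minimum and a finite maximum over $\bar{\calR}$ times the unit sphere of $\calW$; hence there are constants $0<\alpha_-\le\alpha_+<\infty$ with $\alpha_-|\boldsymbol{\calM}|^2\le\boldsymbol{\calM}\cdot\osix{A}(\bfx)[\boldsymbol{\calM}]\le\alpha_+|\boldsymbol{\calM}|^2$ for all $\bfx\in\bar{\calR}$, $\boldsymbol{\calM}\in\calW$. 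Taking $\boldsymbol{\calM}=\xg\bfT(\bfx)$ and integrating over $\calR$ yields $\alpha_-\int_{\calR}|\xg\bfT|^2\dv\le\|\bfT\|_{\sbbA}^2\le\alpha_+\int_{\calR}|\xg\bfT|^2\dv$. In particular every $\bfT\in\calS$ has $\xg\bfT\in L^2(\calR)$ (from $E(\bfT)<\infty$) and, because $\mfC$ is likewise uniformly positive-definite, $\bfT\in L^2(\calR)$; thus $\calS$ is a subspace of $H^1(\calR;\Sym)$, and since the divergence and trace maps are bounded on $H^1(\calR)$, the constraints $\xd\bfT=\bf0$ and $\bfT\bfn|_{\partial\calR}=\bf0$ pass to $H^1$-limits, so $\calS$ is a closed subspace of $H^1(\calR;\Sym)$.

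The remaining ingredient is a Poincar\'e-type inequality $\int_{\calR}|\bfT|^2\dv\le c_P\int_{\calR}|\xg\bfT|^2\dv$ on $\calS$. First I would note that the only constant field in $\calS$ is $\bfO$: if $\bfT$ is a constant symmetric tensor with $\bfT\bfn|_{\partial\calR}=\bf0$, then for each pair of indices $i,k$ the divergence theorem gives $0=\int_{\partial\calR}(\bfT\bfn)_i\,x_k\,\da=\int_{\calR}T_{ik}\dv=T_{ik}\,\text{vol}(\calR)$, so $\bfT=\bfO$ (here boundedness of $\calR$ is used, equivalently that the outward normals span $\Rset^3$). The inequality then follows by the standard compactness-and-contradiction argument: if it failed there would be $\bfT_j\in\calS$ with $\|\bfT_j\|_{L^2(\calR)}=1$ and $\|\xg\bfT_j\|_{L^2(\calR)}\to0$; being bounded in $H^1(\calR)$, a subsequence converges strongly in $L^2(\calR)$ (Rellich--Kondrachov) and weakly in $H^1(\calR)$ to a limit $\bfT$ lying in the weakly closed subspace $\calS$, with $\|\bfT\|_{L^2(\calR)}=1$ and $\xg\bfT=\bf0$ (since $\xg\bfT_j\to\bf0$ strongly while $\xg\bfT_j$ converges weakly to $\xg\bfT$); this makes $\bfT$ a nonzero constant field in $\calS$, a contradiction. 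Combining the two ingredients gives $\|\bfT\|_{H^1(\calR)}^2=\int_{\calR}|\bfT|^2\dv+\int_{\calR}|\xg\bfT|^2\dv\le(1+c_P)\int_{\calR}|\xg\bfT|^2\dv\le(1+c_P)\alpha_-^{-1}\|\bfT\|_{\sbbA}^2$, while $\|\bfT\|_{\sbbA}^2\le\alpha_+\int_{\calR}|\xg\bfT|^2\dv\le\alpha_+\|\bfT\|_{H^1(\calR)}^2$, which is the asserted equivalence.

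I expect the Poincar\'e inequality to be the crux: the compactness step is routine once one records that $\calS$ is a closed subspace of $H^1(\calR;\Sym)$, but it rests on the algebraic fact that no nonzero constant symmetric field can have vanishing normal component on all of $\partial\calR$. A secondary technical point is the passage, in the first ingredient, from the positive-definiteness of $\osix{A}(\bfx)$ as literally stated in \eqref{Aposdef}$_1$ to uniform coercivity on all of $\calW$; this is handled by observing that at a fixed point every element of $\calW$ equals $\xg\bfT(\bfx)$ for some smooth symmetric $\bfT$, together with continuity of $\osix{A}$ on $\bar{\calR}$. Everything else is bookkeeping.
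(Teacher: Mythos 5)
Your proof is correct and follows the same two‑ingredient skeleton as the paper's: a two‑sided comparison of $\|\cdot\|_{\sbbA}^2$ with $\int_{\calR}|\xg\bfT|^2\dv$, followed by a Poincar\'e inequality to absorb the $L^2$ part. The differences are in how you justify each ingredient. For the comparison, the paper obtains the upper bound via Cauchy--Schwarz and a componentwise bound on $\osix{A}$, and the lower bound by positing a positive infimum $a_{\text{inf}}$ of the quadratic form over gradient‑normalized fields; you instead extract uniform pointwise constants $\alpha_\pm$ by compactness of $\bar{\calR}$ times a unit sphere, which is arguably more careful (the paper never argues that its infimum is strictly positive) but requires continuity of $\osix{A}$ on $\bar{\calR}$, which the paper does not state. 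One small imprecision: \eqref{Aposdef}$_1$ is only assumed for $\bfS\in\calS$, so pointwise positive‑definiteness is only guaranteed on the subspace of $\calW$ consisting of gradients of divergence‑free fields (those $\boldsymbol{\calM}$ with vanishing contraction corresponding to $\xd\bfT=\bfzero$), not on all of $\calW$ as you claim; this does not break your argument since you only ever apply the bound to $\xg\bfT$ with $\bfT\in\calS$, but the compactness step should be run on that subspace. For the Poincar\'e inequality, the paper simply cites the zero‑mean property of residual stresses from \citet{hoger1986determination} and invokes the standard Poincar\'e--Wirtinger inequality, whereas you rederive the zero‑mean fact for constant fields (your divergence‑theorem identity is exactly Hoger's computation specialized to constants) and then run the compactness‑and‑contradiction argument; your route is more self‑contained but needs the weak $H^1$‑closedness of the constraint set, which you correctly record.
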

\begin{proof}
Note that the $H^1(\calR)$ norm of $\bfS$ is defined by
\be
\|\bfS\|_{H^1(\calR)}=\Big(\int_{\calR}|\bfS|^2\dv+\int_{\calR}|\xg\bfS|^2\dv\Big)^{\frac{1}{2}}.
\label{H1norm}
\ee 
Since, as shown by \citet{hoger1986determination}, residual stresses have zero mean, by Poincar\'e's inequality there exists a constant $C$, that depends only on $\calR$, such that 
\be
\Big(\int_{\calR}|\bfS|^2\dv\Big)^{\frac{1}{2}}\le C \Big(\int_{\calR}|\xg\bfS|^2\dv\Big)^{\frac{1}{2}} 
\label{poincare}
\ee
for all $\bfS\in\calS$. From \eqref{H1norm} and \eqref{poincare}, it follows that 
\be
\int_{\calR}|\xg\bfS|^2\dv\le\|\bfS\|_{H^1(\calR)}^2\le(C^2+1)\int_{\calR}|\xg\bfS|^2\dv.
\label{Ehatineq1}
\ee

Next, by the Cauchy--Schwarz inequality, 
\be
\xg\bfS\cdot\osix{A}\mskip2mu[\xg\bfS]\le (|\osix{A}\mskip2mu[\xg\bfS]|^2)^{\frac{1}{2}}(|\xg\bfS|^2)^{\frac{1}{2}}.
\label{CS}
\ee
Next, we exploit the inequality of arithmetic and geometric means to infer that
\be
|\osix{A}\mskip2mu[\xg\bfS]|^2\le (27A_{\text{m}})^2 |\xg\bfS|^2,
\label{AMGM}
\ee
where $A_{\text{m}}$ denotes maximum of the absolute values of the Cartesian components of $\osix{A}$:
\be
A_{\text{m}}=\text{max}(|A_{ijklmn}|), \qquad i,j,k,l,m,n\in\{1,2,3\}.
\ee
Combining \eqref{CS} and \eqref{AMGM} yields
\be
\xg\bfS\cdot\osix{A}\mskip2mu[\xg\bfS]\le 27A_{\text{m}}|\xg\bfS|^2.
\label{CSAMGM}
\ee
Denoting the supremum of $A_{\text{m}}$ over $\calR$ as $A_{\text{ms}}$, we find from \eqref{CSAMGM} that
\be
\int_{\calR}\xg\bfS\cdot\osix{A}\mskip2mu[\xg\bfS]\dv\le 27\int_{\calR}A_{\text{m}}|\xg\bfS|^2\dv\le 27A_{\text{ms}}\int_{\calR}|\xg\bfS|^2\dv.
\label{oneside1}
\ee
Furthermore, by the assumption \eqref{Aposdef}$_1$ on $\osix{A}$, there exists a constant $a_{\text{inf}}$ such that 
\be
a_{\text{inf}}=\inf_{\bfS_g\in\calS_g}\int_{\calR}\xg\bfS_g\cdot\osix{A}\mskip2mu[\xg\bfS_g]\dv,
\label{ainf}
\ee
where
\be
\calS_g=\left\{\bfT:\bfT\in\calS, \int_{\calR}|\xg\bfT|^2\dv=1\right\}.
\label{calSg}
\ee
We notice, by the definition \eqref{calSg} of $\calS_g$, that if $\bfS\in\calS$, then 
\be
\frac{\xg\bfS}{(\int_{\calR}|\xg\bfS|^2\dv)^{\frac{1}{2}}}\in\calS_g.
\ee
Accordingly, by \eqref{ainf}, it follows that for any $\bfS\in\calS$,
\be
\int_{\calR}\frac{\xg\bfS}{(\int_{\calR}|\xg\bfS|^2\dv)^{\frac{1}{2}}}\cdot\osix{A}\mskip2mu\left[\frac{\xg\bfS}{(\int_{\calR}|\xg\bfS|^2\dv)^{\frac{1}{2}}}\right]\dv\ge a_{\text{inf}}
\ee
or, equivalently,
\be
\int_{\calR}\xg\bfS\cdot\osix{A}\mskip2mu[\xg\bfS]\dv\ge a_{\text{inf}}\int_{\calR}|\xg\bfS|^2\dv.
\label{cinfres}
\ee
Combining \eqref{oneside1} and \eqref{cinfres}, and invoking the definition \eqref{normsAC}$_1$ of the norm $\|\cdot\|_{\sbbA}$, we see that
\be
\sqrt{a_{\text{inf}}}\mskip2mu\Big(\int|\xg\bfS|^2\dv\Big)^{\frac{1}{2}} \le \|\bfS\|_{\sbbA} \le \sqrt{27A_{\text{ms}}}\mskip2mu\Big(\int|\xg\bfS|^2\dv\Big)^{\frac{1}{2}}
\ee
for each $\bfS\in\calS$, whereby, referring to \eqref{Ehatineq1}, we find that
\be
\sqrt{\frac{a_{\text{inf}}}{C^2+1}}\mskip2mu\|\bfS\|_{H^1(\calR)} \le \|\bfS\|_{\sbbA} \le \sqrt{27A_{\text{ms}}}\mskip2mu\|\bfS\|_{H^1(\calR)}
\label{Ehatineq2}
\ee
for each $\bfS\in\calS$. Thus, the norm $\|\cdot\|_{\sbbA}$ is equivalent to the $H^1(\calR)$ norm on $\calS$, establishing the proposition.
\end{proof}

\begin{proposition}\label{CL2}
The norm $\|\cdot\|_{\sbbC}$ is equivalent to the $L^2(\calR)$ norm on $\calS$.
\end{proposition}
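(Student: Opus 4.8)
The plan is to follow the template of the proof of Proposition \ref{hatEH1}, but in a simpler setting: since the quadratic form defining $\|\cdot\|_{\sbbC}$ acts on $\bfT$ itself rather than on $\xg\bfT$, the $L^2(\calR)$ norm enters directly and no Poincar\'e inequality is needed. The goal is to produce constants $c_{\text{inf}}>0$ and $C_{\text{ms}}>0$, depending only on $\calR$ and $\mfC$, such that
\be
c_{\text{inf}}\int_{\calR}|\bfT|^2\dv\le\int_{\calR}\bfT\cdot\mfC[\bfT]\dv\le 9C_{\text{ms}}\int_{\calR}|\bfT|^2\dv
\ee
for every $\bfT\in\calS$; in view of the definition \eqref{normsAC}$_2$ of $\|\cdot\|_{\sbbC}$, taking square roots then yields the claimed equivalence with the $L^2(\calR)$ norm.

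For the upper bound I would work pointwise, exactly as in \eqref{CS}--\eqref{CSAMGM}. The Cauchy--Schwarz inequality for second-order tensors gives $\bfT\cdot\mfC[\bfT]\le|\mfC[\bfT]|\,|\bfT|$, and the inequality of arithmetic and geometric means applied to the Cartesian components of $\mfC[\bfT]$ gives $|\mfC[\bfT]|\le 9C_{\text{m}}|\bfT|$, where $C_{\text{m}}=\text{max}(|C_{ijkl}|)$, $i,j,k,l\in\{1,2,3\}$, denotes the largest absolute value of a Cartesian component of $\mfC$ at the point in question (the factor $9=3^2$ here replacing the $27=3^3$ that appears for the sixth-order tensor $\osix{A}$). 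Integrating the resulting pointwise estimate $\bfT\cdot\mfC[\bfT]\le 9C_{\text{m}}|\bfT|^2$ over $\calR$ and writing $C_{\text{ms}}$ for the supremum of $C_{\text{m}}$ over $\calR$ gives the right-hand inequality.

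For the lower bound I would reproduce the argument \eqref{ainf}--\eqref{cinfres}, now invoking \eqref{Aposdef}$_2$ in place of \eqref{Aposdef}$_1$. One defines $c_{\text{inf}}$ as the infimum of $\int_{\calR}\bfT_g\cdot\mfC[\bfT_g]\dv$ over all $\bfT_g\in\calS$ normalized by $\int_{\calR}|\bfT_g|^2\dv=1$; applying this with $\bfT_g=\bfT/(\int_{\calR}|\bfT|^2\dv)^{1/2}$ for an arbitrary nonzero $\bfT\in\calS$ and clearing denominators yields $\int_{\calR}\bfT\cdot\mfC[\bfT]\dv\ge c_{\text{inf}}\int_{\calR}|\bfT|^2\dv$. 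Combining the two one-sided estimates and taking square roots finishes the proof.

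The one point that deserves care --- and which is the analogue of the step treated briefly in the proof of Proposition \ref{hatEH1} --- is the strict positivity $c_{\text{inf}}>0$, since pointwise positive-definiteness of $\mfC(\bfx)$ for each $\bfx$ does not by itself bound the integrated quadratic form away from zero. I expect this to be the main (and essentially only) obstacle, and I would resolve it in the standard way: under the standing hypotheses on $\mfC$ from Section \ref{sec:EL} --- bounded Cartesian components together with the symmetry conditions \eqref{Csymmetries} and the pointwise positivity \eqref{Aposdef}$_2$ --- the smallest eigenvalue of $\mfC(\bfx)$, regarded as a map on symmetric second-order tensors, is bounded below by a constant $\alpha>0$ uniformly in $\bfx\in\calR$, so that $\bfT\cdot\mfC(\bfx)[\bfT]\ge\alpha|\bfT|^2$ for all such $\bfx$ and all symmetric $\bfT$; integrating this inequality over $\calR$ gives $c_{\text{inf}}\ge\alpha>0$, and the argument above then goes through verbatim.
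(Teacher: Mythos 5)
Your proposal is correct and follows essentially the same route as the paper's proof: the upper bound via the pointwise Cauchy--Schwarz and arithmetic--geometric-mean estimates with the constant $9C_{\text{m}}$ and its supremum $C_{\text{ms}}$, and the lower bound via the infimum $c_{\text{inf}}$ of the quadratic form over the $L^2(\calR)$-normalized subset of $\calS$ followed by rescaling. Your additional remark on why $c_{\text{inf}}>0$ is the one place you go beyond the paper, which simply defines $c_{\text{inf}}$ as that infimum under \eqref{Aposdef}$_2$ and uses it without further comment; your proposed uniform eigenvalue bound is the standard way to close that point.
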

\begin{proof}
Note that the $L^2(\calR)$ norm of $\bfS$ is defined by
\be
\|\bfS\|_{L^2(\calR)}=\Big(\int_{\calR}|\bfS|^2\dv\Big)^{\frac{1}{2}}.
\label{L2norm}
\ee
By the Cauchy--Schwarz inequality, 
\be
\bfS\cdot\mfC[\bfS]\le (|\mfC[\bfS]|^2)^{\frac{1}{2}}(|\bfS|^2)^{\frac{1}{2}}.
\label{CS2}
\ee
By the inequality of arithmetic and geometric means, it follows that
\be
|\mfC[\bfS]|^2\le (9C_{\text{m}})^2 |\bfS|^2,
\label{AMGM2}
\ee
where $C_{\text{m}}$ denotes maximum of the absolute values of the Cartesian components of $\mfC$:
\be
C_{\text{m}}=\text{max}(|C_{ijkl}|), \qquad i,j,k,l\in\{1,2,3\}.
\ee
Combining \eqref{CS2} and \eqref{AMGM2} yields
\be
\bfS\cdot\mfC[\bfS]\le 9C_{\text{m}}|\bfS|^2.
\label{CSAMGM2}
\ee
Denoting the supremum of $C_{\text{m}}$ over $\calR$ as $C_{\text{ms}}$, we find from \eqref{CSAMGM2} that
\be
\int_{\calR}\bfS\cdot\mfC[\bfS]\dv\le 9\int_{\calR}C_{\text{m}}|\bfS|^2\dv\le 9C_{\text{ms}}\int_{\calR}|\bfS|^2\dv.
\label{oneside}
\ee
Furthermore, by the assumption \eqref{Aposdef}$_2$ on $\mfC$, there exists a constant $c_{\text{inf}}$ such that 
\be
c_{\text{inf}}=\inf_{\bfS_m\in\calS_m}\int_{\calR}\bfS_m\cdot\mfC[\bfS_m]\dv,
\label{cinf2}
\ee
where
\be
\calS_m=\left\{\bfT:\bfT\in\calS, \int_{\calR}|\bfT|^2\dv=1\right\}.
\label{calSn}
\ee
We notice, by the definition \eqref{calSn} of $\calS_n$, that if $\bfS\in\calS$, then 
\be
\frac{\bfS}{{(\int_{\calR}|\bfS|^2\dv)}^{\frac{1}{2}}}\in\calS_n.
\ee
Accordingly, by \eqref{cinf2}, it follows that for any $\bfS\in\calS$,
\be
\int_{\calR}\frac{\bfS}{(\int_{\calR}|\bfS|^2\dv)^{\frac{1}{2}}}\cdot\mfC\left[\frac{\bfS}{(\int_{\calR}|\bfS|^2\dv)^{\frac{1}{2}}}\right]\dv\ge c_{\text{inf}}
\ee
or, equivalently,
\be
\int_{\calR}\bfS\cdot\mfC[\bfS]\dv\ge c_{\text{inf}}\int_{\calR}|\bfS|^2\dv.
\label{cinfres2}
\ee
Combining \eqref{oneside} and \eqref{cinfres2}, and invoking the definitions \eqref{normsAC}$_2$ and \eqref{L2norm} of $\|\cdot\|_{\sbbC}$ and $\|\cdot\|_{L^2(\calR)}$, respectively, we see that
\be
\sqrt{c_{\text{inf}}}\mskip2mu\|\bfS\|_{L^2(\calR)} \le \|\bfS\|_{\sbbC} \le 3\sqrt{C_{\text{ms}}}\mskip2mu\|\bfS\|_{L^2(\calR)}.
\ee
Thus, the norm $\|\cdot\|_{\sbbC}$ is equivalent to the $L^2(\calR)$ norm on $\calS$, establishing the proposition.
\end{proof}
\section{Existence of a solution in Subsection \ref{sec:inf}}\label{app:existence}
In this Appendix, we establish the following theorem:
\begin{theorem}\label{existence_theorem}
A minimizer of $E$, given by \eqref{E_nondim}, in the set 
\be
\calP=\left\{\bfT:\bfT\in\calS,\mskip2mu \int_{\calR}\bfT\cdot\mfC[\bfT]\,\text{{\em d}}v=1,\mskip2mu \int_{\calR}\bfS_N\cdot\mfC[\bfT]\,\text{{\em d}}v=0,\mskip1mu N\in\mathbb{N}_{N_0}\right\},
\label{setP}
\ee
with $\calS$ given by \eqref{setS}, exists.     
\end{theorem}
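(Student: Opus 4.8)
The plan is to prove Theorem~\ref{existence_theorem} by the direct method of the calculus of variations, the essential inputs being the norm equivalences of Propositions~\ref{hatEH1} and~\ref{CL2} and the compactness of the Sobolev embedding $H^1(\calR)\hookrightarrow L^2(\calR)$.

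First I would dispose of nonemptiness of $\calP$. The set $\calS$ is an infinite-dimensional linear subspace of $H^1(\calR;\Sym)$ (it contains, for instance, the self-equilibrated fields generated by compactly supported Beltrami stress functions, which are automatically divergence-free, symmetric, and vanish together with their normal component near $\partial\calR$), so it is not exhausted by $\mathrm{span}\{\bfS_1,\dots,\bfS_{N_0}\}$; taking any $\bfT\in\calS$ outside that span, subtracting its $\langle\cdot,\cdot\rangle_{\sbbC}$-projection onto $\bfS_1,\dots,\bfS_{N_0}$ (which keeps us in $\calS$, since $\calS$ is linear and each $\bfS_N\in\calS$), and rescaling to unit $\|\cdot\|_{\sbbC}$-norm yields an element of $\calP$. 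Hence $m:=\inf_{\bfT\in\calP}E(\bfT)$ is a finite nonnegative number. Choose a minimizing sequence $(\bfT_j)\subset\calP$ with $E(\bfT_j)\to m$. Since $E(\bfT_j)=\tfrac12\|\bfT_j\|_{\sbbA}^2$ is bounded, Proposition~\ref{hatEH1} shows $(\bfT_j)$ is bounded in $H^1(\calR)$, so, passing to a subsequence (not relabelled), $\bfT_j\rightharpoonup\bfS$ weakly in $H^1(\calR)$ and, by the compact embedding, $\bfT_j\to\bfS$ strongly in $L^2(\calR)$.

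Next I would verify $\bfS\in\calP$. The symmetry requirement together with the linear side conditions $\xd\bfT=\bf0$ on $\calR$ and $\bfT\bfn=\bf0$ on $\partial\calR$ cut out a closed linear subspace of $H^1(\calR;\Sym)$, because the divergence and normal-trace operators are bounded, hence weakly continuous, on $H^1(\calR)$ (as noted in Subsection~\ref{sec:H1span}); finiteness of $\int_\calR\bfS\cdot\mfC[\bfS]\dv$ and of $E(\bfS)$ is automatic once $\bfS\in H^1(\calR)$, so $\bfS\in\calS$. The orthogonality constraints pass to the limit: since $\mfC$ is bounded and each $\bfS_N\in L^2(\calR)$, strong $L^2$-convergence gives $\int_\calR\bfS_N\cdot\mfC[\bfT_j]\dv\to\int_\calR\bfS_N\cdot\mfC[\bfS]\dv$, so these integrals remain zero for $N\in\mathbb{N}_{N_0}$. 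For the normalization, the equivalence $\|\cdot\|_{\sbbC}\sim\|\cdot\|_{L^2(\calR)}$ of Proposition~\ref{CL2} together with strong $L^2$-convergence gives $\|\bfT_j\|_{\sbbC}\to\|\bfS\|_{\sbbC}$, whence $\int_\calR\bfS\cdot\mfC[\bfS]\dv=1$; in particular $\bfS\neq\bf0$. Therefore $\bfS\in\calP$. Finally, $E=\tfrac12\|\cdot\|_{\sbbA}^2$ is the square of a norm on $\calS$ equivalent to the $H^1(\calR)$ norm, hence convex and strongly continuous on $H^1(\calR)$, hence weakly lower semicontinuous; consequently $E(\bfS)\le\liminf_j E(\bfT_j)=m$, and since $\bfS\in\calP$ forces $E(\bfS)\ge m$, we conclude $E(\bfS)=m$, so $\bfS$ is the desired minimizer.

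I expect the main obstacle to be the normalization constraint $\int_\calR\bfS\cdot\mfC[\bfS]\dv=1$: the unit $\|\cdot\|_{\sbbC}$-sphere is not weakly closed in $H^1(\calR)$, so one cannot merely invoke weak closedness of $\calP$. This is exactly the step that forces the use of the compact embedding $H^1(\calR)\hookrightarrow L^2(\calR)$ — which upgrades weak $H^1$-convergence to strong $L^2$-convergence, along which the $\mfC$-quadratic form (equivalent to $\|\cdot\|_{L^2(\calR)}^2$ by Proposition~\ref{CL2}) is continuous — and thereby guarantees that the limit $\bfS$ is normalized, nonzero, and admissible. A secondary, more routine point is confirming that the pointwise equilibrium condition and the traction boundary condition persist under passage to the weak $H^1$ limit, which rests on the continuity of the corresponding operators on $H^1(\calR)$.
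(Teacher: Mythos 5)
Your proposal is correct and follows essentially the same route as the paper's Appendix on existence: a direct-method argument using boundedness of a minimizing sequence in $\|\cdot\|_{\sbbA}$, the norm equivalences of Propositions~\ref{hatEH1} and~\ref{CL2}, the compact embedding $H^1(\calR)\hookrightarrow L^2(\calR)$ to recover the normalization and orthogonality constraints in the limit, and weak lower semicontinuity of $E$ via convexity plus strong continuity. The only cosmetic differences are that you verify the equilibrium and traction conditions by weak closedness of the kernel of the bounded operators rather than by testing against smooth fields as the paper does, and that you add an explicit nonemptiness check for $\calP$ which the paper leaves tacit.
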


We establish the theorem in several steps. Notice, first, that since $E$ is positive-definite, the problems of minimizing $E$ and 
\be
\|\bfS\|_{\sbbA}=\sqrt{2E(\bfS)}=\Big(\int_{\calR}\xg\bfS\cdot\osix{A}\mskip2mu[\xg\bfS]\dv\Big)^{\frac{1}{2}}
\label{hatE}
\ee
are equivalent. We next establish that the values of $\|\cdot\|_{\sbbA}$ evaluated on $\mathcal{P}$ have a greatest lower bound $\displaystyle \inf_{\calP}\|\cdot\|_{\sbbA}>0$. 
\begin{proposition} \label{glbprop}
The values of the functional $\|\cdot\|_{\sbbA}$ evaluated on $\mathcal{P}$ have a greatest lower bound $\displaystyle \inf_{\calP}\|\cdot\|_{\sbbA}>0$.    
\end{proposition}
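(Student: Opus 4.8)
The plan is to read off the positivity of the infimum directly from the norm equivalences of Propositions \ref{hatEH1} and \ref{CL2}, together with the observation that every element of $\calP$ has unit $\|\cdot\|_{\sbbC}$ norm. In particular, the orthogonality constraints built into the definition \eqref{setP} of $\calP$ are irrelevant to this part of the argument; they only shrink $\calP$ and hence can only raise the infimum.

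First I would note that, by the definition \eqref{normsAC}$_2$ of $\|\cdot\|_{\sbbC}$ and the normalization appearing in \eqref{setP}, every $\bfT\in\calP$ satisfies $\|\bfT\|_{\sbbC}=1$. Proposition \ref{CL2} then furnishes a constant, depending only on $\calR$ and $\mfC$, for which $\|\bfT\|_{L^2(\calR)}\ge\|\bfT\|_{\sbbC}/(3\sqrt{C_{\text{ms}}})=1/(3\sqrt{C_{\text{ms}}})$. Since $\|\bfT\|_{H^1(\calR)}\ge\|\bfT\|_{L^2(\calR)}$ trivially, by the definition \eqref{H1norm} of the $H^1(\calR)$ norm, and since Proposition \ref{hatEH1} gives $\|\bfT\|_{\sbbA}\ge\sqrt{a_{\text{inf}}/(C^2+1)}\,\|\bfT\|_{H^1(\calR)}$, I would chain these three inequalities to conclude that
\be
\|\bfT\|_{\sbbA}\ \ge\ \frac{1}{3\sqrt{C_{\text{ms}}}}\sqrt{\frac{a_{\text{inf}}}{C^2+1}}\ >\ 0
\qquad\text{for every }\bfT\in\calP.
\ee
Taking the infimum over $\calP$ then yields $\displaystyle\inf_{\calP}\|\cdot\|_{\sbbA}>0$, which is the assertion of the proposition. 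In fact the same bound holds for every $\bfT\in\calS$ with $\|\bfT\|_{\sbbC}=1$, so the conclusion is insensitive to whether or not $\calP$ happens to be empty.

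There is no real obstacle in this argument: all of the substance is already contained in Propositions \ref{hatEH1} and \ref{CL2}, whose proofs do the genuine work of controlling $\|\cdot\|_{\sbbA}$ from below by a multiple of the $H^1(\calR)$ norm---this last lower bound being exactly the place where positivity of $a_{\text{inf}}$ enters. The present proposition merely records that the $\|\cdot\|_{\sbbC}$-sphere, over which the subsequent existence argument operates, stays uniformly away from the origin in the $\|\cdot\|_{\sbbA}$ metric; this is what prevents a minimizing sequence from collapsing to zero and is the first ingredient in the direct-method proof of Theorem \ref{existence_theorem}.
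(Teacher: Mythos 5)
Your proof is correct, and it takes a genuinely different---and in fact tighter---route than the paper's. The paper's own argument is purely qualitative: it observes that $\|\cdot\|_{\sbbA}$ is positive on any subset of $\calS$ not containing $\bfO$, that the normalization $\int_{\calR}\bfT\cdot\mfC[\bfT]\dv=1$ excludes $\bfO$ from $\calP$, and then asserts that the values are ``bounded below by a positive number.'' As written, that step conflates pointwise positivity with a uniform positive lower bound (a positive function on a set can still have infimum zero), so the paper's proof leaves implicit exactly the estimate you make explicit. Your chain $\|\bfT\|_{\sbbA}\ge\sqrt{a_{\text{inf}}/(C^2+1)}\,\|\bfT\|_{H^1(\calR)}\ge\sqrt{a_{\text{inf}}/(C^2+1)}\,\|\bfT\|_{L^2(\calR)}\ge\sqrt{a_{\text{inf}}/(C^2+1)}\,/(3\sqrt{C_{\text{ms}}})$, using Propositions \ref{hatEH1} and \ref{CL2} together with $\|\bfT\|_{\sbbC}=1$ on $\calP$, supplies the uniform constant that the paper's wording glosses over; your remarks that the orthogonality constraints in \eqref{setP} only shrink $\calP$ and that the bound holds on the whole $\|\cdot\|_{\sbbC}$-unit sphere in $\calS$ are also accurate. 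The only caveat is inherited rather than introduced: your bound is only as good as the positivity of $a_{\text{inf}}$ asserted in Proposition \ref{hatEH1}, but since you are entitled to take that proposition as given, your argument is sound and arguably the one the paper should have recorded.
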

\begin{proof}
We notice from the positive definiteness \eqref{Aposdef}$_1$ of $\osix{A}$ over the set $\calS$ that the values of the functional $\|\cdot\|_{\sbbA}$ evaluated over any subset of $\calS$ that does not contain the zero second-order tensor field are bounded below by a positive number. Furthermore, since the elements of the set $\calP$ satisfy the normalization condition $\int_{\calR}\bfS\cdot\mfC[\bfS]\dv=1$, we conclude that $\calP\subset\calS$ does not contain the zero second-order tensor field. Thus, the values of $\|\cdot\|_{\sbbA}$ over $\calP$ are bounded below by a positive number. The completeness of the real number line guarantees the existence of the supremum $\displaystyle \inf_{\calP}\|\cdot\|_{\sbbA}$ of all such positive numbers, thus establishing the proposition.
\end{proof}

By Proposition \ref{glbprop}, there exists a minimizing sequence $(\bfSigma_n)_{n\in\mathbb{N}}$ in $\mathcal{P}$ such that
\be
\lim_{n\to\infty} \|\bfSigma_n\|_{\sbbA}=\inf_{\calP}\|\cdot\|_{\sbbA}.
\label{glb}
\ee
Since $\calP$ is a subset of $\calS$, it follows that $(\bfSigma_n)_{n\in\mathbb{N}}$ belongs to $\calS$.  Then, because $(\bfSigma_n)_{n\in\mathbb{N}}$ is bounded in the norm $\|\cdot\|_{\sbbA}$, by Proposition \ref{hatEH1}, it is bounded in the $H^1(\calR)$ norm. Thus, there exists a subsequence $(\bfSigma_{n_k})_{n_k\in\mathbb{N}}$ that converges weakly in the $H^1(\calR)$ norm to some $\bfSigma_0\in H^1(\calR)$, that is,
\be
(\bfSigma_{n_k})_{n_k\in\mathbb{N}}\rightharpoonup\bfSigma_0 \qquad\text{in}\qquad H^1(\calR).
\label{H1conv}
\ee
Furthermore, since the set $H^1(\calR)$ is compactly embedded in the set $L^2(\calR)$, $(\bfSigma_{n_k})_{n_k\in\mathbb{N}}$ converges strongly in the $L^2(\calR)$ norm to $\bfSigma_0$, that is,
\be
(\bfSigma_{n_k})_{n_k\in\mathbb{N}}\rightarrow\bfSigma_0 \qquad\text{in}\qquad L^2(\calR).
\label{L2conv}
\ee
We next use \eqref{H1conv} and \eqref{L2conv} to show that $\bfSigma_0$ belongs to $\calP$.

\begin{proposition}\label{S0inP}
    $\bfSigma_0$ belongs to $\calP$.
\end{proposition}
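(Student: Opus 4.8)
The plan is to show that $\bfSigma_0$ satisfies each of the defining conditions of the set $\calP$ in \eqref{setP}: membership in $\calS$ (which itself requires symmetry, the equilibrium equation $\xd\bfSigma_0=\bfzero$ on $\calR$, the traction-free condition $\bfSigma_0\bfn=\bfzero$ on $\partial\calR$, square-integrability, and $E(\bfSigma_0)<\infty$), the normalization $\int_{\calR}\bfSigma_0\cdot\mfC[\bfSigma_0]\dv=1$, and the orthogonality constraints $\int_{\calR}\bfS_N\cdot\mfC[\bfSigma_0]\dv=0$ for $N\in\mathbb{N}_{N_0}$. The main tools are the weak $H^1(\calR)$ convergence \eqref{H1conv}, the strong $L^2(\calR)$ convergence \eqref{L2conv}, and the continuity in the $H^1(\calR)$ norm of the maps $\bfT\mapsto\xd\bfT$ on $\calR$ and $\bfT\mapsto\bfT\bfn$ on $\partial\calR$, noted in Subsection \ref{sec:H1span}.

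First I would note that each $\bfSigma_{n_k}$ is symmetric, and since symmetry is preserved under (weak) limits, $\bfSigma_0$ is symmetric. Next, because the linear maps $\bfT\mapsto\xd\bfT$ and $\bfT\mapsto\bfT\bfn|_{\partial\calR}$ are bounded (continuous) from $H^1(\calR)$ into the respective target spaces, they are weakly continuous; applying them to \eqref{H1conv} and using that each $\bfSigma_{n_k}\in\calS$ satisfies $\xd\bfSigma_{n_k}=\bfzero$ and $\bfSigma_{n_k}\bfn=\bfzero$, we get $\xd\bfSigma_0=\bfzero$ on $\calR$ and $\bfSigma_0\bfn=\bfzero$ on $\partial\calR$ in the appropriate (trace) sense. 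Square-integrability of $\bfSigma_0$ is immediate from $\bfSigma_0\in H^1(\calR)$, and finiteness of $E(\bfSigma_0)$ follows from the equivalence of $\|\cdot\|_{\sbbA}$ with the $H^1(\calR)$ norm (Proposition \ref{hatEH1}). Hence $\bfSigma_0\in\calS$. For the orthogonality constraints, I would use the strong $L^2(\calR)$ convergence \eqref{L2conv}: for each fixed $N$, the functional $\bfT\mapsto\int_{\calR}\bfS_N\cdot\mfC[\bfT]\dv=\langle\bfT,\bfS_N\rangle_{\sbbC}$ is a bounded linear functional on $L^2(\calR)$ (using the bound \eqref{AMGM2} on $\mfC$), so it passes to the limit, and since it vanishes on every $\bfSigma_{n_k}\in\calP$, it vanishes on $\bfSigma_0$.

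The most delicate point is the normalization $\int_{\calR}\bfSigma_0\cdot\mfC[\bfSigma_0]\dv=1$, i.e. $\|\bfSigma_0\|_{\sbbC}^2=1$. This I would handle via the strong $L^2(\calR)$ convergence \eqref{L2conv} together with the equivalence of $\|\cdot\|_{\sbbC}$ with the $L^2(\calR)$ norm (Proposition \ref{CL2}): the quadratic form $\bfT\mapsto\int_{\calR}\bfT\cdot\mfC[\bfT]\dv$ is continuous with respect to strong $L^2(\calR)$ convergence — concretely, $\big|\|\bfSigma_{n_k}\|_{\sbbC}^2-\|\bfSigma_0\|_{\sbbC}^2\big|\le \|\bfSigma_{n_k}-\bfSigma_0\|_{\sbbC}\big(\|\bfSigma_{n_k}\|_{\sbbC}+\|\bfSigma_0\|_{\sbbC}\big)$ by bilinearity and Cauchy--Schwarz for $\langle\cdot,\cdot\rangle_{\sbbC}$, and the right-hand side tends to $0$ — so the limit inherits the value $1$ from the sequence. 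This step is where the compact embedding $H^1(\calR)\hookrightarrow L^2(\calR)$ is essential: weak $H^1$ convergence alone would only give $\|\bfSigma_0\|_{\sbbC}\le\liminf\|\bfSigma_{n_k}\|_{\sbbC}=1$, which is not enough to conclude $\bfSigma_0\in\calP$ (it might have shrunk), whereas strong $L^2$ convergence pins the norm down exactly and in particular guarantees $\bfSigma_0\neq\bfzero$. Assembling these facts — symmetry, $\xd\bfSigma_0=\bfzero$, $\bfSigma_0\bfn=\bfzero$, $\bfSigma_0\in L^2(\calR)$, $E(\bfSigma_0)<\infty$, $\|\bfSigma_0\|_{\sbbC}=1$, and the $N_0$ orthogonality relations — shows $\bfSigma_0\in\calP$, completing the proof. (The subsequent step, not part of this proposition, will be to use weak lower semicontinuity of $\|\cdot\|_{\sbbA}$ to conclude $\|\bfSigma_0\|_{\sbbA}=\inf_{\calP}\|\cdot\|_{\sbbA}$, so that $\bfSigma_0$ is the sought minimizer.)
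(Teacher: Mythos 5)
Your proposal is correct and verifies the same checklist of conditions as the paper's proof, using the same two convergence modes (weak $H^1(\calR)$ and strong $L^2(\calR)$) in the same places for the symmetry, normalization, orthogonality, and $E(\bfSigma_0)<\infty$ conditions. The one place where you take a genuinely different route is the pair of conditions $\xd\bfSigma_0=\bfzero$ and $\bfSigma_0\bfn|_{\partial\calR}=\bfzero$: you invoke the abstract fact that the bounded linear maps $\bfT\mapsto\xd\bfT$ and $\bfT\mapsto\bfT\bfn|_{\partial\calR}$ on $H^1(\calR)$ are weak-to-weak continuous, so the weak limit in \eqref{H1conv} inherits the vanishing of these images from the sequence. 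The paper instead argues by duality: it tests $\xd\bfSigma_0$ against smooth compactly supported vector fields (and $\bfSigma_0\bfn$ against smooth vector fields), integrates by parts to transfer the derivative onto the test function, and then uses only the \emph{strong} $L^2(\calR)$ convergence \eqref{L2conv} together with H\"{o}lder's inequality and density of test functions. Both arguments are valid; yours is shorter and cleaner once one accepts weak continuity of bounded operators (including the trace operator, whose boundedness the paper only gestures at in Subsection \ref{sec:H1span}), while the paper's is more elementary and self-contained, needing nothing beyond the divergence theorem and the compact-embedding consequence \eqref{L2conv} that it has already established. Your remark on why strong $L^2$ convergence, rather than mere weak lower semicontinuity, is essential for pinning the normalization to exactly $1$ is accurate and matches the role the compact embedding plays in the paper's item (iv).
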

\begin{proof}
To show that $\bfSigma_0$ belongs to $\calP$, we must show, with reference to \eqref{setS} and \eqref{setP}, that (i) $\bfSigma_0\in\Sym$, (ii) $\xd\bfSigma_0=\bf0$, (iii) $\bfSigma_0\bfn|_{\partial\calR}=\bf0$, (iv) $\int_{\calR}\bfSigma_0\cdot\mfC[\bfSigma_0]\dv=1$, (v) $E(\bfSigma_0)<\infty$, and (vi) $\int_{\calR}\bfS_N\cdot\mfC[\bfSigma_0]\dv=0$, $N\in\mathbb{N}_{N_0}$. We show each of these properties below.

\begin{enumerate}[(i)]
    \item Since each $\bfSigma_{n_k}\in\Sym,n_k\in\mathbb{N}$, it follows that
    \be
    \int_{\calR}\bfSigma_{n_k}\cdot\boldsymbol{\mathit{\Omega}}_{\calR}\dv=0
    \ee
    for all square-integrable skew-symmetric tensor fields $\boldsymbol{\mathit{\Omega}}_{\calR}$ on $\calR$. Then, by \eqref{L2conv} and the continuity of the scalar product, we find that
    \be
    \int_{\calR}\bfSigma_0\cdot\boldsymbol{\mathit{\Omega}}_{\calR}\dv=0.
    \ee
    Thus, $\bfSigma_0\in\Sym$.

    \item Let $\boldsymbol{\phi}$ be a smooth, compactly supported vector field on $\calR$ and consider 
    \be
    I=\Big|\int_{\calR}\xd\bfSigma_0\cdot\boldsymbol{\phi}\dv\Big|.
    \ee
    We show below that $I$ is bounded from above by zero and, thus, that $I=0$. 
    
    We begin by noting that since $\bfSigma_0\in H^1(\calR)$, $\xd\bfSigma_0$ is square-integrable. Thus, by H\"{o}lder's inequality, $I$ is well-defined. Since each $\bfSigma_{n_k},n_k\in\mathbb{N}$, in the minimizing sequence is divergence-free and $\boldsymbol{\phi}$ is compactly supported on $\calR$, it follows, by the divergence theorem and H\"{o}lder's inequality, that
    \begin{align}
\Big|\int_{\calR}\xd\bfSigma_0\cdot\boldsymbol{\phi}\dv\Big|&=\Big|\int_{\calR}(\xd\bfSigma_0-\xd\bfSigma_{n_k})\cdot\boldsymbol{\phi}\dv\Big|
    \notag\\[4pt]
    &=\Big|\int_{\calR}(\bfSigma_0-\bfSigma_{n_k})\cdot\xg\boldsymbol{\phi}\dv\Big|\notag\\[4pt]
    &\le\Big(\int_{\calR}|(\bfSigma_0-\bfSigma_{n_k})|^2\dv\Big)^{\frac{1}{2}}\Big(\int_{\calR}|\xg\boldsymbol{\phi}|^2\dv\Big)^{\frac{1}{2}}.
    \label{divineq}
    \end{align}
    Since $\boldsymbol{\phi}$ is smooth, $ \int_{\calR}|\xg\boldsymbol{\phi}|^2\dv$ is finite and, thus, by \eqref{L2conv}, the last term in \eqref{divineq} goes to zero. Consequently,
    \be
    \int_{\calR}\xd\bfSigma_0\cdot\boldsymbol{\phi}\dv=0.
    \ee
    Finally, since $\boldsymbol{\phi}$ is arbitrary, and the set of smooth functions with compact support is dense in the set $L^2(\calR)$, we conclude that $ \xd\bfSigma_0=\bf0$.
    
    \item Let $\boldsymbol{\gamma}$ be a smooth vector field on $\calR$ and consider 
    \be
    I_b=\Big|\int_{\partial\calR}(\bfSigma_0\bfn)\cdot\boldsymbol{\gamma}\da\Big|.
    \ee
    We show below that $I_b$ is bounded from above by zero and, thus, that $I_b=0$.
    
    Notice first that, by the divergence theorem, 
    \begin{align}
    0&=\int_{\calR}\xd(\bfSigma_0-\bfSigma_{n_k})\cdot\boldsymbol{\gamma}\dv
    \notag\\[4pt]
    &=\int_{\partial\calR}((\bfSigma_0-\bfSigma_{n_k})\bfn)\cdot\boldsymbol{\gamma}\da-\int_{\calR}(\bfSigma_0-\bfSigma_{n_k})\cdot\xg\boldsymbol{\gamma}\dv.
    \label{moot}
    \end{align}
    Since $\bfSigma_0\in H^1(\calR)$ and $\bfSigma_{n_k} \in H^1(\calR),n_k\in\mathbb{N}$, the integrals on the second line of \eqref{moot} are well-defined. By ${\bfSigma_{n_k}\bfn=\bf0}$, \eqref{moot}, and H\"{o}lder's inequality, it follows that
    \begin{align}
     \big|\int_{\partial\calR}(\bfSigma_0\bfn)\cdot\boldsymbol{\gamma}\da\big|&=\big|\int_{\partial\calR}((\bfSigma_0-\bfSigma_{n_k})\bfn)\cdot\boldsymbol{\gamma}\da\big|
     \notag\\[4pt]
     &=\big|\int_{\calR}(\bfSigma_0-\bfSigma_{n_k})\cdot\xg\boldsymbol{\gamma}\dv\big|
    \notag\\[4pt]
    &\le\Big(\int_{\calR}|(\bfSigma_0-\bfSigma_{n_k})|^2\dv\Big)^{\frac{1}{2}}\Big(\int_{\calR}|\xg\boldsymbol{\gamma}|^2\dv\Big)^{\frac{1}{2}}.   
    \label{mootoo}
    \end{align}
    Again, by \eqref{L2conv}, the product on the final line of \eqref{mootoo} goes to zero. Since $\boldsymbol{\gamma}$ is arbitrary, we conclude that $\bfSigma_0\bfn|_{\partial\calR}=\bf0$.
    
    \item Since, from \eqref{L2conv}, the sequence $(\bfSigma_{n_k})_{n_k\in\mathbb{N}}$ converges strongly to $\bfSigma_0$ in the $L^2(\calR)$ norm, by the equivalence of the $L^2(\calR)$ norm and the norm $\|\cdot\|_{\sbbC}$ as established in Proposition \ref{CL2}, we find that 
    \be
    (\bfSigma_{n_k})_{n_k\in\mathbb{N}}\rightarrow\bfSigma_0 \qquad\text{in}\qquad \|\cdot\|_{\sbbC}.
    \label{Cnormconv}
    \ee
    Then, since for each $\bfSigma_{n_k},n_k\in\mathbb{N}$, $\|\bfSigma_{n_k}\|_{\sbbC}=1$ and because norm is a continuous function, it follows from \eqref{Cnormconv} that $\|\bfSigma_0\|_{\sbbC}=1$ or, equivalently, $\int_{\calR}\bfSigma_0\cdot\mfC[\bfSigma_0]\dv=1$.

    \item From items (i)--(iii) above, we conclude that $\bfSigma_0$ satisfies the same conditions as $\bfS$ in \eqref{intro}. Thus, $\bfSigma_0$ is a residual stress field. Accordingly, the arguments from \eqref{poincare} through \eqref{Ehatineq2} hold for $\bfSigma_0$. Then, because $\bfSigma_0\in H^1(\calR)$, it follows from \eqref{Ehatineq2} and \eqref{hatE} that $ E(\bfSigma_0)<\infty$.

    \item Since for each $ \bfSigma_{n_k},n_k\in\mathbb{N}$, $ \int_{\calR}\bfS_N\cdot\mfC[\bfSigma_{n_k}]\dv=0$, $N\in\mathbb{N}_{N_0}$, and the scalar product is a continuous function, it follows from \eqref{Cnormconv} that $\int_{\calR}\bfS_N\cdot\mfC[\bfSigma_0]\dv=0$, $N\in\mathbb{N}_{N_0}$.
\end{enumerate}
\end{proof}

For the next proposition, we use the following characterization, due to \citet{ekeland1999convex}, of weakly lower semi-continuous functions.
\begin{theorem}\label{Temam}
If a functional defined on a convex set is strongly lower semi-continuous with respect to a norm and is convex, then it is weakly lower semi-continuous with respect to that norm.
\end{theorem}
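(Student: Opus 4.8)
The plan is to prove this via the classical route through sublevel sets. Write $f$ for the functional and $C$ for its convex domain, regarded inside the ambient normed space (in the application of Appendix \ref{app:existence}, $C$ is a convex subset of the Hilbert space $H^1(\calR)$, using the equivalence of $\|\cdot\|_{\sbbA}$ with the $H^1(\calR)$ norm from Proposition \ref{hatEH1}). Weak lower semi-continuity of $f$ is equivalent to the assertion that every sublevel set $S_\alpha=\{x\in C:f(x)\le\alpha\}$, $\alpha\in\mathbb{R}$, is weakly sequentially closed, so I would first establish that and then deduce the liminf inequality from it.

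First, convexity of $f$ together with convexity of $C$ makes each $S_\alpha$ convex: if $x,y\in S_\alpha$ and $t\in[0,1]$, then $tx+(1-t)y\in C$ and $f(tx+(1-t)y)\le tf(x)+(1-t)f(y)\le\alpha$. Second, strong (norm) lower semi-continuity of $f$ makes each $S_\alpha$ strongly closed: if $x_n\in S_\alpha$ with $x_n\to x$ in norm, then $f(x)\le\liminf_{n\to\infty}f(x_n)\le\alpha$, so $x\in S_\alpha$. Third, I would invoke Mazur's lemma — a direct consequence of the Hahn--Banach separation theorem — which states that a convex, strongly closed subset of a normed space is weakly closed, hence weakly sequentially closed; applying this to each $S_\alpha$ gives the desired closure property. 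Fourth, to conclude, let $x_n\rightharpoonup x$ weakly and set $\ell=\liminf_{n\to\infty}f(x_n)$; assuming $\ell<\infty$ (otherwise there is nothing to prove), pick a subsequence $(x_{n_k})$ with $f(x_{n_k})\to\ell$. For any $\alpha>\ell$ there is $K$ with $x_{n_k}\in S_\alpha$ for all $k\ge K$; since $x_{n_k}\rightharpoonup x$ and $S_\alpha$ is weakly sequentially closed, $x\in S_\alpha$, i.e.\ $f(x)\le\alpha$. Letting $\alpha\downarrow\ell$ yields $f(x)\le\ell=\liminf_{n\to\infty}f(x_n)$, which is precisely weak lower semi-continuity.

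The only non-elementary ingredient is the third step, namely that strongly closed convex sets are weakly closed; this rests on Hahn--Banach separation (Mazur's lemma), and it is where the convexity hypothesis is genuinely used — everything else is routine bookkeeping with sublevel sets. A minor technical point to watch is the relative-topology issue when $C$ is a proper subset: either one assumes $C$ itself is strongly closed (and hence, being convex, weakly closed), or, as suffices for the present paper, one applies the argument to the sublevel sets intersected with the closed convex set that actually arises, so that the conclusion feeds directly into the minimizing-sequence construction of Appendix \ref{app:existence}, where bounded sequences in $H^1(\calR)$ admit weakly convergent subsequences. Alternatively, the statement may simply be cited from \citet{ekeland1999convex}, of which the above is the standard proof.
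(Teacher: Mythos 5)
Your proof is correct. Note, however, that the paper does not prove this statement at all: Theorem \ref{Temam} is stated as a known characterization and simply attributed to \citet{ekeland1999convex}, so there is no internal argument to compare against. What you have written is the standard proof of that cited result — sublevel sets are convex by convexity of $f$ and of its domain, strongly closed by strong lower semi-continuity, hence weakly closed by Mazur's lemma (Hahn--Banach separation), and the $\liminf$ inequality then follows by pushing a weakly convergent subsequence into the sublevel sets $S_\alpha$ for $\alpha$ slightly above the limit inferior. The one point you rightly flag, and which matters for how the theorem is actually used in \ref{app:existence}, is that the domain must itself be (strongly, hence weakly) closed for the weak limit to lie in it; in the paper this is handled by applying the theorem on the convex set $\calP_c$ of \eqref{setPbar} rather than on $\calP$, and by separately verifying in Proposition \ref{S0inP} that the weak limit $\bfSigma_0$ lands back in $\calP$. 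So your argument is sound and supplies exactly the proof the paper delegates to the reference; no gap.
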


Let $\calP_c$ be the set 
\be
\calP_c=\left\{\bfT:\bfT\in\calS,\mskip2mu \int_{\calR}\bfT\cdot\mfC[\bfT]\dv\le 1,\mskip2mu \int_{\calR}\bfS_N\cdot\mfC[\bfT]\dv=0,\mskip1mu N\in\mathbb{N}_{N_0}\right\}.
\label{setPbar}
\ee
Notice that $\calP_c$ is convex. We then have the following proposition. 

\begin{proposition} \label{wls}
The functional $\|\cdot\|_{\sbbA}$ is weakly lower semi-continuous in the $H^1(\calR)$ norm on $\calP_c$.
\end{proposition}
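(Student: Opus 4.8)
The plan is to invoke Theorem \ref{Temam}, so it suffices to verify that $\|\cdot\|_{\sbbA}$ is (a) convex on $\calP_c$ and (b) strongly lower semi-continuous with respect to the $H^1(\calR)$ norm on $\calP_c$. Convexity follows because $\|\cdot\|_{\sbbA}$ is a genuine norm: it is nonnegative, homogeneous, and satisfies the triangle inequality, the latter two of which together give $\|t\bfT_1+(1-t)\bfT_2\|_{\sbbA}\le t\|\bfT_1\|_{\sbbA}+(1-t)\|\bfT_2\|_{\sbbA}$ for $t\in[0,1]$; since $\calP_c$ is convex (the constraint $\int_{\calR}\bfT\cdot\mfC[\bfT]\dv\le 1$ defines a convex set because $\mfC$ is positive-definite, and the orthogonality constraints are linear), the restriction of a convex functional to a convex set is convex.

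For strong lower semi-continuity, I would argue that $\|\cdot\|_{\sbbA}$ is in fact strongly \emph{continuous} on $\calP_c$ in the $H^1(\calR)$ norm, which is more than enough. By Proposition \ref{hatEH1}, the norm $\|\cdot\|_{\sbbA}$ is equivalent to the $H^1(\calR)$ norm on $\calS$, and $\calP_c\subset\calS$; concretely, the inequalities \eqref{Ehatineq2} give, for any $\bfT_1,\bfT_2\in\calP_c$,
\be
\big|\,\|\bfT_1\|_{\sbbA}-\|\bfT_2\|_{\sbbA}\,\big|\le\|\bfT_1-\bfT_2\|_{\sbbA}\le\sqrt{27A_{\text{ms}}}\,\|\bfT_1-\bfT_2\|_{H^1(\calR)},
\ee
where the first step is the reverse triangle inequality. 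Hence $\|\cdot\|_{\sbbA}$ is Lipschitz, a fortiori continuous, a fortiori lower semi-continuous, with respect to the $H^1(\calR)$ norm on $\calP_c$.

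Combining the two observations with Theorem \ref{Temam} applied to the convex set $\calP_c$ yields that $\|\cdot\|_{\sbbA}$ is weakly lower semi-continuous in the $H^1(\calR)$ norm on $\calP_c$, which is the assertion. I do not anticipate a genuine obstacle here; the only point requiring a little care is making sure $\calP_c$ is indeed convex so that Theorem \ref{Temam} applies — in particular that the sublevel-type constraint $\int_{\calR}\bfT\cdot\mfC[\bfT]\dv\le 1$ is convex, which is exactly where the positive-definiteness \eqref{Aposdef}$_2$ of $\mfC$ is used (the map $\bfT\mapsto\int_{\calR}\bfT\cdot\mfC[\bfT]\dv$ is a positive-semidefinite quadratic form, hence convex, so its sublevel set is convex). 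The reason the proposition is phrased on $\calP_c$ rather than on $\calP$ is presumably that $\calP$ (with the equality $\int_{\calR}\bfT\cdot\mfC[\bfT]\dv=1$) is not convex, and the subsequent existence argument will recover a minimizer on $\calP$ from one on $\calP_c$ by showing the $\mfC$-norm constraint is active at the minimizer.
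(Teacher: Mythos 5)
Your proof is correct and follows essentially the same route as the paper: convexity of $\|\cdot\|_{\sbbA}$ because it is a norm, strong (Lipschitz) continuity in the $H^1(\calR)$ norm via the equivalence of Proposition \ref{hatEH1}, and then Theorem \ref{Temam} applied on the convex set $\calP_c$. Your added justification that $\calP_c$ is convex (positive-semidefinite quadratic sublevel set plus linear constraints) merely fills in a detail the paper asserts without proof, so there is nothing further to flag.
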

\begin{proof}
Since, by Proposition \ref{hatEH1}, $\|\cdot\|_{\sbbA}$ is equivalent to the $H^1(\calR)$ norm on $\calS$ and $\calP_c$ is a subset of $\calS$, $\|\cdot\|_{\sbbA}$ is continuous and, thus, strongly lower semi-continuous in the $H^1(\calR)$ norm on $\calP_c$. Furthermore, since $\|\cdot\|_{\sbbA}$ is a norm, it is a convex functional. By Theorem \ref{Temam}, the functional $\|\cdot\|_{\sbbA}$ is weakly lower semi-continuous in the $H^1(\calR)$ norm on $\calP_c$.
\end{proof}

\begin{proposition}
    $\displaystyle\|\bfSigma_0\|_{\sbbA}=\inf_{\calP}\|\cdot\|_{\sbbA}$.
\end{proposition}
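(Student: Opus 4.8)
The plan is to assemble the ingredients already in place: the minimizing sequence $(\bfSigma_n)_{n\in\mathbb{N}}$ of \eqref{glb}, the subsequence $(\bfSigma_{n_k})_{n_k\in\mathbb{N}}$ that converges weakly in $H^1(\calR)$ to $\bfSigma_0$, the membership $\bfSigma_0\in\calP$ from Proposition \ref{S0inP}, and the weak lower semi-continuity of $\|\cdot\|_{\sbbA}$ on $\calP_c$ from Proposition \ref{wls}. The argument is the standard direct-method conclusion: weak lower semi-continuity bounds $\|\bfSigma_0\|_{\sbbA}$ from above by the infimum, and membership of $\bfSigma_0$ in $\calP$ bounds it from below.

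First I would observe that, since $\calP\subset\calP_c$ (compare \eqref{setP} and \eqref{setPbar}), every term of the subsequence $(\bfSigma_{n_k})_{n_k\in\mathbb{N}}$ lies in $\calP_c$, and, again by $\calP\subset\calP_c$ together with Proposition \ref{S0inP}, its weak $H^1(\calR)$ limit $\bfSigma_0$ lies in $\calP_c$ as well. Proposition \ref{wls} then yields
\be
\|\bfSigma_0\|_{\sbbA}\le\liminf_{k\to\infty}\|\bfSigma_{n_k}\|_{\sbbA}.
\ee
Because $(\bfSigma_{n_k})_{n_k\in\mathbb{N}}$ is a subsequence of the minimizing sequence $(\bfSigma_n)_{n\in\mathbb{N}}$ characterized by \eqref{glb}, the right-hand side equals $\inf_{\calP}\|\cdot\|_{\sbbA}$, so $\|\bfSigma_0\|_{\sbbA}\le\inf_{\calP}\|\cdot\|_{\sbbA}$.

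Next, since Proposition \ref{S0inP} gives $\bfSigma_0\in\calP$, the definition of the greatest lower bound forces $\|\bfSigma_0\|_{\sbbA}\ge\inf_{\calP}\|\cdot\|_{\sbbA}$. Combining the two inequalities gives $\|\bfSigma_0\|_{\sbbA}=\inf_{\calP}\|\cdot\|_{\sbbA}$, so that $\bfSigma_0$ is a minimizer of $\|\cdot\|_{\sbbA}$ — and hence, by the equivalence recorded in \eqref{hatE}, of $E$ — over $\calP$. This proves the proposition and, with it, Theorem \ref{existence_theorem}.

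I do not expect any genuine obstacle here, since the compactness, the closedness of the constraints under weak and strong limits, and the weak lower semi-continuity were all handled in the preceding propositions. The only subtlety worth flagging explicitly is that weak lower semi-continuity is applied on the relaxed set $\calP_c$, whose defining inequality $\int_{\calR}\bfT\cdot\mfC[\bfT]\dv\le 1$ renders it weakly closed, rather than on $\calP$ itself, whose equality normalization need not be; verifying that both $(\bfSigma_{n_k})_{n_k\in\mathbb{N}}$ and its limit $\bfSigma_0$ stay inside $\calP_c$ is immediate from $\calP\subset\calP_c$.
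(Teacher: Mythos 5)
Your proof is correct and follows essentially the same route as the paper's: apply the weak lower semi-continuity of $\|\cdot\|_{\sbbA}$ on the relaxed convex set $\calP_c$ to the weakly convergent minimizing subsequence to get the upper bound, and use $\bfSigma_0\in\calP$ from Proposition \ref{S0inP} for the lower bound. Your added remark on why $\calP_c$ rather than $\calP$ is the right set on which to invoke weak closedness is a useful clarification but does not alter the argument.
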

\begin{proof}
Since $\calP$ is a subset of $\calP_c$, every element of $\calP$ is also an element of $\calP_c$. Consequently, $(\bfSigma_{n_k})_{n_k\in\mathbb{N}}$ and, by Proposition \ref{S0inP}, $\bfSigma_0$ belong to $\calP_c$. Since $(\bfSigma_{n_k})_{n_k\in\mathbb{N}}\rightharpoonup \bfSigma_0$ in the $H^1(\calR)$ norm, it then follows from Proposition \ref{wls} that
\be
\|\bfSigma_0\|_{\sbbA}\le\inf_{\calP}\|\cdot\|_{\sbbA}.
\label{Ele0}
\ee
However, since, by \eqref{glb}, $\inf_{\calP}\|\cdot\|_{\sbbA}$ is a lower bound of the functional $\|\cdot\|_{\sbbA}$ on $\calP$ and, by Proposition \ref{S0inP}, $\bfSigma_0$ is an element of $\calP$, it follows that
\be
\|\bfSigma_0\|_{\sbbA}\ge\inf_{\calP}\|\cdot\|_{\sbbA}.
\label{Ege0}
\ee
Thus, by \eqref{Ele0} and \eqref{Ege0}, 
\be
\|\bfSigma_0\|_{\sbbA}=\inf_{\calP}\|\cdot\|_{\sbbA}.
\ee
\end{proof}

We have, therefore, established that a minimizer of $\|\cdot\|_{\sbbA}$ or, equivalently, of $E$ over $\calP$, namely $\bfSigma_0$, exists, as claimed in Theorem \ref{existence_theorem}. 

\section{At the boundaries of the strip in Figure \ref{fig:pk}, \boldmath{$E(S)>0$} for a non-zero spherically symmetric \boldmath{$S$}}\label{app:inclusive}
Recall, from \eqref{Vbound}, that the points on the boundaries of the semi-infinite strip of parameter choices which ensure that the integrand $U$ of $E$ is positive-definite are {\em not} viable for a general residual stress field. In this Appendix, we show that for a non-zero spherically symmetric residual stress field $\bfS$, $E$ satisfies $E(\bfS)>0$ at each such boundary point.

Consider, first, the parameters in the set 
\be
\partial\calV_1=\big\{(\beta,\gamma):\beta+\gamma=5,\mskip4mu \beta\ge 5/3\big\}.
\ee
Substituting $\beta+\gamma=5$ in \eqref{W5} yields
\be
E(\bfS)=\frac{1}{2}\int_{\calR}\Big(5|\bfH_2|^2+\frac{3\beta-5}{2}|\bfH_3|^2\Big)\dv.
\label{Eapp3}
\ee
Since $\beta\ge5/3$, the second term in \eqref{Eapp3} is greater than or equal to zero. Thus, for $E$ to vanish, $\bfH_2$ must vanish. By \eqref{GradS} and \eqref{H2def}, $\bfH_2=\bf0$ implies that $\Sperp'=2\Spar'$. Differentiating \eqref{bvp_sph}$_1$, substituting $\Sperp'=2\Spar'$ and integrating the resulting expression, we find that $\Spar$ has the form $\Spar(r)=\hat{c}r^2+\tilde{c}$, where $\hat{c}$ and $\tilde{c}$ are constants. In conjunction with the boundary conditions \eqref{bvp_sph}$_3$, this implies that $\Spar=0$. From the equilibrium equation \eqref{bvp_sph}$_1$, it then follows that $\Sperp=0$. Thus, $E(\bfS)=0$ on $\partial\calV_1$ only if $\bfS=\bf0$.

Consider, next, the parameters in the set 
\be
\partial\calV_2=\big\{(\beta,\gamma):\gamma=2\beta,\mskip4mu 0< \beta< 5/3\big\}.
\ee 
Since, as seen in Subsection \ref{sec:dimred}, for the spherically symmetric case, the integrand $U$ of $E$ depends on the parameters only through the combination $p=\beta+\gamma$. Accordingly, the parameter $k$ does not influence the positive-definiteness of $W$. In other words, given $0<p<5$, $U$ is positive-definite for all real $k$. In particular, it is positive-definite for $k=0$ or, equivalently, on $\partial\calV_2$.  
We conclude that $E(\bfS)=0$ on $\partial\calV_2$ only if $\bfS=\bf0$.

Consider, finally, the parameters in the set 
\be
\partial\calV_3=\big\{(\beta,\gamma):\beta+\gamma=0,\mskip4mu \beta\ge 0\big\}.   
\ee
From \eqref{Esph1}, we see that for $\beta+\gamma=0$,
\be
E(\bfS)=\frac{1}{4}\int_{\calR}|\xg(\tr\bfS)|^2\dv.
\ee
For $E$ to vanish, it follows that $\xg(\tr\bfS)=\bf0$ in $\calR$. Thus, $\tr\bfS$ is constant in $\calR$. However, a residual stress field must have zero mean and, thus, $\tr\bfS=\Spar+2\Sperp=0$. Using this in the equilibrium condition \eqref{bvp_sph}$_1$ and integrating, we find that $\Spar$ has the form $\Spar(r)=\hat{c}/r^3$, where $\hat{c}$ is a constant. In conjunction with the boundary conditions \eqref{bvp_sph}$_3$, the foregoing result implies $\Spar=0$. Since $\Spar+2\Sperp=0$, it follows that $\Sperp=0$. Thus, $\bfS=\bf0$. 

We conclude that, for the spherically symmetric case, $E(\bfS)=0$ on $\partial\calV_1$, $\partial\calV_2$, and $\partial\calV_3$ only if $\bfS=\bf0$.

\section{Proof of the decomposition in Eq.~\eqref{Sudec}} \label{app:decomposition}
We finally show that any square-integrable symmetric second-order tensor field on a simply-connected region with a smooth boundary can be decomposed uniquely into a square-integrable residual stress field and the symmetric part of the gradient of a differentiable vector field. We will establish this result with the aid of the following theorem due to \citet{maggiani2015compatible}.

\begin{theorem}\label{Maggiani}
\noindent A square-integrable second-order symmetric tensor field $\bfT$ on a simply-connected domain $\calR$ with smooth boundary $\partial\calR$ can be uniquely decomposed as
\be
\bfT=(\emph{Curl}\mskip2mu(\emph{Curl}\mskip2mu\bfF))^{\trans}+\emph{sym}\mskip2mu\emph{Grad}\mskip2mu\bfu,
\ee
where $\bfu$ is a square-integrable vector field on $\calR$ with square-integrable gradient, $\bfF$ and $(\emph{Curl}\mskip2mu(\emph{Curl}\mskip2mu\bfF))^{\trans}$ are square-integrable symmetric second-order tensor fields on $\calR$, $\emph{Div}\mskip2mu\bfF=\bf0$ on $\calR$, and $\bfF\bfn=\bf0$ on $\partial\calR$. 
\end{theorem}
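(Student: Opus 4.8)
The plan is to obtain the decomposition as the $L^2$-orthogonal splitting attached to the elasticity (Calabi) complex on $\calR$, combined with a gauge-fixed Beltrami representation of the divergence-free part. Write $\mathcal{H}$ for the Hilbert space of square-integrable symmetric second-order tensor fields on $\calR$ equipped with the $L^2(\calR)$ scalar product, and set $V=\{\sym\xg\bfu:\bfu\in H^1(\calR)\}$. First I would show that $V$ is a closed subspace of $\mathcal{H}$: the map $\bfu\mapsto\sym\xg\bfu$ is plainly bounded, and Korn's second inequality, together with the fact that its kernel is the finite-dimensional space of infinitesimal rigid motions, shows that $\bfu\mapsto\sym\xg\bfu$ is bounded below on the quotient of $H^1(\calR)$ by the rigid motions, so its range is closed. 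Hence $\mathcal{H}=V\oplus V^{\perp}$ orthogonally. Testing membership in $V^{\perp}$ against $\sym\xg\bfu$ and using that the tested field $\bfT$ is symmetric gives $\int_{\calR}\bfT\cdot\xg\bfu\dv=0$ for every $\bfu\in H^1(\calR)$; integrating by parts, this is precisely the weak statement $\xd\bfT=\bf0$ on $\calR$ and $\bfT\bfn=\bf0$ on $\partial\calR$, so $V^{\perp}$ is exactly the space of square-integrable residual stress fields in the sense of \eqref{intro}.

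The core step is the Beltrami representation: every $\bfT_0\in V^{\perp}$ equals $(\xc(\xc\bfF))^{\trans}$ for some symmetric $\bfF$ with $\xd\bfF=\bf0$ on $\calR$ and $\bfF\bfn=\bf0$ on $\partial\calR$. One inclusion is a differential identity, since $\xd(\xc(\xc\bfF))^{\trans}=\bf0$, so the incompatibility operator maps into divergence-free symmetric fields. For the converse I would exploit that $\calR$ is simply connected with smooth boundary, so the elasticity complex is cohomologically trivial: each row of a divergence-free $\bfT_0$ admits a vector potential by the Poincar\'e lemma, the symmetry of $\bfT_0$ then forces a compatibility condition whose resolution, by a second application of the Poincar\'e lemma, produces a symmetric stress-function tensor $\bfF_0$ with $(\xc(\xc\bfF_0))^{\trans}=\bfT_0$. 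Because the incompatibility operator annihilates precisely the symmetric gradients (Saint-Venant compatibility on a simply-connected region), $\bfF_0$ is determined only modulo an element of $V$; I would fix this gauge by choosing $\bfw$, as the solution of an auxiliary linear-elasticity boundary-value problem, so that $\bfF=\bfF_0+\sym\xg\bfw$ satisfies both $\xd\bfF=\bf0$ and $\bfF\bfn=\bf0$. Smoothness of $\partial\calR$ delivers the elliptic regularity and trace estimates that keep $\bfF$ and $(\xc(\xc\bfF))^{\trans}$ square-integrable.

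Uniqueness is then immediate from the orthogonality $\mathcal{H}=V\oplus V^{\perp}$: the summands $(\xc(\xc\bfF))^{\trans}\in V^{\perp}$ and $\sym\xg\bfu\in V$ of a given $\bfT$ are uniquely determined, with $\bfu$ fixed up to an infinitesimal rigid motion. For $\bfF$, suppose two gauge-fixed potentials produce the same incompatibility; then their difference is a symmetric gradient $\sym\xg\bfw$ lying in $V$ that also satisfies $\xd(\sym\xg\bfw)=\bf0$ and $(\sym\xg\bfw)\bfn=\bf0$, hence lies in $V^{\perp}$ as well. Since $V\cap V^{\perp}=\{\bfO\}$, the difference vanishes and $\bfF$ is unique.

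I expect the main obstacle to be the gauge-fixed Beltrami construction of the second paragraph: building the symmetric potential $\bfF$ from a merely square-integrable self-equilibrated field, and then enforcing simultaneously $\xd\bfF=\bf0$ and the homogeneous normal-trace condition $\bfF\bfn=\bf0$ without leaving $L^2(\calR)$, is the analytically delicate part, and it is exactly where the hypotheses are consumed: simple-connectedness underlies the two applications of the Poincar\'e lemma (equivalently, the exactness of the elasticity complex), while smoothness of $\partial\calR$ is what makes the gauge-fixing boundary-value problem well posed and its boundary traces meaningful. This is the content of \citet{maggiani2015compatible}, whose detailed $L^2$ (indeed $L^p$) analysis I would follow.
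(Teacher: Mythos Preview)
The paper does not prove this theorem at all: it is stated as a quoted result ``due to \citet{maggiani2015compatible}'' and used as a black box in the proof of Proposition~\ref{corollary}. Your proposal correctly identifies this and, in its final sentence, defers to that reference for the detailed analysis, so in that sense there is no discrepancy to report.

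That said, the sketch you give of how the Maggiani--Morando--Morassi argument goes is a faithful outline of the standard approach: the closed-range property of $\sym\xg$ via Korn's inequality yields the orthogonal splitting $\mathcal{H}=V\oplus V^{\perp}$, the identification of $V^{\perp}$ with residual stress fields is the routine integration-by-parts computation, and the hard step is indeed the gauge-fixed Beltrami potential construction, which consumes both the simple-connectedness (exactness of the elasticity complex) and the boundary regularity (well-posedness of the auxiliary elasticity problem and trace theory). Your caveat about the analytical delicacy of keeping $\bfF$ and $(\xc(\xc\bfF))^{\trans}$ in $L^2(\calR)$ is well placed; that is precisely what the cited reference supplies and what a self-contained proof would have to reproduce.
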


\begin{proposition}\label{corollary}
A square-integrable second-order symmetric tensor field $\bfT$ over a simply-connected domain $\calR$ with smooth boundary $\partial\calR$ can be uniquely decomposed as
\be
\bfT=\bfG+\emph{sym}\mskip2mu\emph{Grad}\mskip2mu\bfw,
\ee
where $\bfw$ is a square-integrable vector field on $\calR$ with square-integrable gradient, $\bfG$ is a square-integrable symmetric second-order tensor field on $\calR$, $\emph{Div}\mskip2mu\bfG=\bf0$ on $\calR$, and $\bfG\bfn=\bf0$ on $\partial\calR$.
\end{proposition}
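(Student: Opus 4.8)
The plan is to read the decomposition off Theorem~\ref{Maggiani} and then to show that the incompatible part it produces is a residual stress field. Applying Theorem~\ref{Maggiani} to the given square-integrable symmetric field $\bfT$ furnishes square-integrable symmetric tensor fields $\bfF$ and $(\xc(\xc\bfF))^{\trans}$, together with a vector field $\bfu$ that is square-integrable with square-integrable gradient, such that
\be
\bfT=(\xc(\xc\bfF))^{\trans}+\sym\xg\bfu,\qquad\xd\bfF=\bf0\ \text{on}\ \calR,\qquad\bfF\bfn=\bf0\ \text{on}\ \partial\calR.
\ee
One then sets $\bfG:=(\xc(\xc\bfF))^{\trans}$ and $\bfw:=\bfu$, so that $\bfG$ is square-integrable and symmetric, $\bfw$ has the required regularity, and $\bfT=\bfG+\sym\xg\bfw$. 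It remains only to verify that $\xd\bfG=\bf0$ on $\calR$ and $\bfG\bfn=\bf0$ on $\partial\calR$, and then to dispatch uniqueness.

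The identity $\xd\bfG=\bf0$ is a weak form of the Saint-Venant (Bianchi) identity. For any smooth compactly supported vector field $\boldsymbol{\phi}$ on $\calR$, the symmetry of $\bfG$ yields $\int_{\calR}\bfG\cdot\xg\boldsymbol{\phi}\dv=\int_{\calR}(\xc(\xc\bfF))^{\trans}\cdot\sym\xg\boldsymbol{\phi}\dv$, and transferring the two curls onto $\sym\xg\boldsymbol{\phi}$ (which produces no boundary contribution since $\boldsymbol{\phi}$ is compactly supported) leaves $\int_{\calR}\bfF\cdot(\xc(\xc(\sym\xg\boldsymbol{\phi})))^{\trans}\dv$. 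The latter vanishes because $(\xc(\xc(\sym\xg\boldsymbol{\phi})))^{\trans}=\bfO$, a symmetrized gradient being a compatible strain. Hence $\int_{\calR}\bfG\cdot\xg\boldsymbol{\phi}\dv=0$ for all such $\boldsymbol{\phi}$, i.e.\ $\xd\bfG=\bf0$ in the distributional sense.

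The vanishing of the normal trace is the step I expect to be the main obstacle. Once $\xd\bfG=\bf0$ is known, $\bfG\bfn$ is well-defined in the sense of normal traces, and it vanishes if and only if $\int_{\calR}\bfG\cdot\xg\boldsymbol{\psi}\dv=0$ for every $\boldsymbol{\psi}$ in $H^1(\calR)$; by the symmetry of $\bfG$ and density it suffices to show that $\int_{\calR}(\xc(\xc\bfF))^{\trans}\cdot\sym\xg\boldsymbol{\psi}\dv=0$ for all $\boldsymbol{\psi}$ smooth on $\bar{\calR}$. Integrating by parts for the operator $\bfF\mapsto(\xc(\xc\bfF))^{\trans}$, now keeping the boundary terms, the bulk contribution again vanishes (since $\sym\xg\boldsymbol{\psi}$ is compatible), leaving a surface integral bilinear in the boundary data of $\bfF$ and those of $\sym\xg\boldsymbol{\psi}$. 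The delicate point, which requires careful bookkeeping in this Green-type formula, is that when the second argument is a compatible strain the surface integrand depends on $\bfF$ only through its normal trace $\bfF\bfn$ (and surface derivatives thereof) and through $\xd\bfF$ evaluated on $\partial\calR$; since $\bfF\bfn=\bf0$ on $\partial\calR$ and $\xd\bfF=\bf0$ on $\calR$, the surface integral vanishes, so $\bfG\bfn=\bf0$. Establishing this reduction cleanly is where the real effort lies; the distributional identity $\xd\bfG=\bf0$ is, by contrast, routine.

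For uniqueness, suppose $\bfG_1+\sym\xg\bfw_1=\bfG_2+\sym\xg\bfw_2$ with each $\bfG_i$ square-integrable, symmetric, divergence-free and of vanishing normal trace, and each $\bfw_i$ in $H^1(\calR)$. Then $\bfG_1-\bfG_2=\sym\xg(\bfw_2-\bfw_1)$, and pairing this with $\bfw_2-\bfw_1$ and integrating by parts,
\be
\int_{\calR}|\sym\xg(\bfw_2-\bfw_1)|^2\dv=\int_{\partial\calR}((\bfG_1-\bfG_2)\bfn)\cdot(\bfw_2-\bfw_1)\da-\int_{\calR}\xd(\bfG_1-\bfG_2)\cdot(\bfw_2-\bfw_1)\dv=0,
\ee
whence $\sym\xg(\bfw_2-\bfw_1)=\bfO$, so $\bfG_1=\bfG_2$ and $\sym\xg\bfw_1=\sym\xg\bfw_2$. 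By the second Korn inequality this pins down $\bfw$ up to an infinitesimal rigid motion; normalizing $\bfw$ to be $L^2(\calR)$-orthogonal to the (finite-dimensional) space of infinitesimal rigid motions then makes it unique as well. Thus ``uniquely decomposed'' is to be read as uniqueness of $\bfG$ and of $\sym\xg\bfw$, with $\bfw$ itself fixed by the orthogonality normalization.
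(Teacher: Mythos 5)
The existence half of your argument has a genuine gap at precisely the step you flagged as delicate: the claim that $\bfG\bfn=\bf0$ on $\partial\calR$ for $\bfG=(\xc(\xc\bfF))^{\trans}$. The gauge conditions supplied by Theorem~\ref{Maggiani}, namely $\xd\bfF=\bf0$ on $\calR$ and $\bfF\bfn=\bf0$ on $\partial\calR$, control neither the tangential part $\bfP\mskip-2mu\bfF\mskip-1mu\bfP$ of $\bfF$ on $\partial\calR$ nor the normal derivative of $\bfF$ there, and both enter the boundary terms of the Green formula for the double curl (a second-order operator, whose normal trace involves first normal derivatives and second tangential derivatives of $\bfF$). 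Your assertion that the surface integrand ``depends on $\bfF$ only through $\bfF\bfn$ and $\xd\bfF$ on $\partial\calR$'' is therefore unsubstantiated, and it is false in general. The paper itself signals this: citing \citet{fosdick2005stokes}, it records only that the traction $\hat{\bfF}\bfn$ of $\hat{\bfF}=(\xc(\xc\bfF))^{\trans}$ is \emph{self-equilibrated} (zero resultant force and moment over $\partial\calR$), not that it vanishes pointwise --- a statement that would be vacuous if your claim were correct. Your distributional verification of $\xd\bfG=\bf0$ is fine, and your uniqueness discussion (Korn's inequality plus orthogonality to infinitesimal rigid motions) is if anything more explicit than the paper's, but the existence step does not close as written: you cannot take $\bfG=(\xc(\xc\bfF))^{\trans}$ and $\bfw=\bfu$.

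The repair is the corrector construction the paper uses. Precisely because $\hat{\bfF}\bfn$ is self-equilibrated, the traction boundary-value problem of linear elasticity with identity stiffness --- find $\hat{\bfw}$ such that $\hat{\bfG}=\sym\xg\hat{\bfw}$ satisfies $\xd\hat{\bfG}=\bf0$ on $\calR$ and $\hat{\bfG}\bfn=\hat{\bfF}\bfn$ on $\partial\calR$ --- admits a solution. Setting $\bfG=\hat{\bfF}-\hat{\bfG}$ and $\bfw=\bfu+\hat{\bfw}$ then yields a field $\bfG$ that is simultaneously divergence-free and traction-free, with $\bfT=\bfG+\sym\xg\bfw$ as required.
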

\begin{proof}
By Theorem \ref{Maggiani}, $\bfT$ can be uniquely decomposed as
\be
\bfT=(\xc(\xc\bfF))^{\trans}+\sym\xg\bfu,
\label{TFu}
\ee
with $\xd\bfF=\bf0$ and $\bfF\bfn=\bf0$. Let
\be
\hat{\bfF}=(\xc(\xc\bfF))^{\trans}.
\label{hatF}
\ee
Notice that $\hat{\bfF}$ is symmetric by Theorem \ref{Maggiani}. \citet{fosdick2005stokes} have shown that a symmetric $\hat{\bfF}$ of the form given in \eqref{hatF} satisfies the conditions:
\be
\left.
\ba
\xd\hat{\bfF}&=\bf0 \qquad \text{on} \qquad \calR,
\\[4pt]
\int_{\partial\calR}\hat{\bfF}\bfn\da&=\bf0,
\\[4pt]
\int_{\partial\calR}(\bfr-\bfo)\times\hat{\bfF}\bfn\da&=\bf0.
\ea
\label{hatFcond}\mskip3mu\right\}
\ee

Let $\hat{\bfG}$ be such that
\be
\left.\ba
\left.\ba
&\hat{\bfG}=\sym\xg\hat{\bfw},
\\[4pt]
&\xd\hat{\bfG}=\bf0, 
\ea\mskip4mu\right\}\qquad &\text{on} \qquad \calR,
\\[4pt]
\hat{\bfG}\bfn=\hat{\bfF}\bfn \mskip75mu\qquad &\text{on} \qquad \partial\calR.
\\[4pt]
\ea\mskip3mu\right\}
\label{Gw}
\ee
With reference to \eqref{Gw}, $\hat{\bfG}$ admits a physical interpretation as follows: It is the strain in a homogeneous isotropic linear elastic body with stiffness tensor equal to the fourth-order identity tensor, such that the body is in equilibrium with the surface traction $\hat{\bfF}\bfn$. Furthermore, we see from \eqref{hatFcond} that $\hat{\bfF}\bfn$ is a self-equilibrating traction field. Hence, a unique $\hat{\bfG}$ and $\hat{\bfw}$ satisfying \eqref{Gw} exist. Let 
\be
\bfG=\hat{\bfF}-\hat{\bfG}.
\label{Gdefapp}
\ee
By \eqref{hatFcond}$_1$ and \eqref{Gw}$_{2,3}$, it follows that
\be
\left.\ba
\xd\bfG=\bf0 \qquad &\text{on} \qquad \calR,
\\[4pt]
\bfG\bfn=\bf0 \qquad &\text{on} \qquad \partial\calR.
\ea\mskip3mu\right\}
\label{Gprop}
\ee
On invoking \eqref{hatF}, \eqref{Gw}$_1$, and \eqref{Gdefapp}, \eqref{TFu} takes the form
\be
\bfT=\bfG+\sym\xg\bfw,
\ee
where 
\be
\bfw=\bfu+\hat{\bfw}.
\label{bfw}
\ee
Since $\hat{\bfF}$, $\hat{\bfG}$, $\bfu$, and $\hat{\bfw}$ are uniquely determined, $\bfG$ and $\bfw$ are uniquely determined by \eqref{Gdefapp} and \eqref{bfw}, thus confirming the proposition.
\end{proof}

\bibliographystyle{apalike}
\bibliography{arXiv}
\end{document}